\numberwithin{equation}{section}
\newtheorem{theorem}{Theorem}[section]
\newtheorem{assumption}[theorem]{Assumption}
\newtheorem{corollary}[theorem]{Corollary}
\newtheorem{definition}[theorem]{Definition}
\newtheorem{lemma}[theorem]{Lemma}
\newtheorem{proposition}[theorem]{Proposition}
\theoremstyle{remark}
\newtheorem{remark}[theorem]{Remark}
\newtheorem{example}[theorem]{Example}
\DeclareMathOperator{\supp}{supp}
\DeclareMathOperator*{\esssup}{ess\,sup}
\DeclareMathOperator*{\essinf}{ess\,inf}
\newcommand{\Acal}{\mathcal{A}}
\newcommand{\Bcal}{\mathcal{B}}
\newcommand{\Gcal}{\mathcal{G}}
\newcommand{\Lcal}{\mathcal{L}}
\newcommand{\Mcal}{\mathcal{M}}
\newcommand{\Pcal}{\mathcal{P}}
\newcommand{\Wcal}{\mathcal{W}}
\newcommand{\EE}{\mathbb{E}}
\newcommand{\NN}{\mathbb{N}}
\newcommand{\PP}{\mathbb{P}}
\newcommand{\RR}{\mathbb{R}}
\renewcommand{\epsilon}{\varepsilon}
\newcommand{\sbm}{sBm\;}
\newcommand{\ssbm}{s$^2$Bm\;}
\newcommand{\CDF}{{\rm CDF}}
\newcommand{\CDFc}{{\rm CDF_c}}
\newcommand{\QF}{{\rm QF}}
\newcommand{\QFc}{{\rm QF_c}}
\DeclareMathOperator{\co}{co}
\DeclareMathOperator{\interior}{int}
\title{Calibration of the Bass Local Volatility model}
\author{Beatrice Acciaio\thanks{Department of Mathematics, ETH Z\"{u}rich, Switzerland.\newline \emph{beatrice.acciaio@math.ethz.ch}, \emph{antonio.marini@math.ethz.ch}, \emph{gudmund.pammer@math.ethz.ch}}, Antonio Marini\footnotemark[1], and Gudmund Pammer\footnotemark[1]}
\date{\today \vspace{-20pt}}
\begin{document}

\maketitle
\begin{abstract}
The Bass local volatility model introduced by Backhoff-Veraguas--Beiglb\"ock--Huesmann--K\"allblad is a Markov model perfectly calibrated to vanilla options at finitely many maturities, that approximates the Dupire local volatility model. Conze and Henry-Labord\`ere show that its calibration can be achieved by solving a fixed-point equation. In this paper we complement the analysis and show existence and uniqueness of the solution to this equation, and that the fixed-point iteration scheme  converges at a linear rate. 
\\\mbox{ }\\
\noindent\emph{Keywords:} Stretched Brownian motion, local volatility model, fixed-point equation.\\
MSC (2020): 60G44, 65J15, 91G30.
\end{abstract}

\section{Introduction}\label{sect:intro}
In local volatility models, the instantaneous volatility of the asset price process is assumed to be a deterministic function of asset price and time. This eases the mathematical tractability while still encompassing a rich class of models.  
Among those, the Dupire local volatility (Dupire-LV) model \cite{Du94} is undoubtedly the most known and used one, largely due to the celebrated Dupire's formula, that enables us to deduce the volatility function from a surface of market prices  of
standard European options for a \emph{continuum} of maturities and strikes.
The Dupire-LV model represents the unique diffusion process consistent with the risk-neutral densities derived from the market prices of European options for all maturities. However, as it is not possible to observe a continuum of vanilla prices, interpolation is needed to approximate the prices and compute the local volatility numerically, a procedure that is sensitive to numerical instabilities and errors.

An appealing alternative to the Dupire-LV model has been introduced by Backhoff-Veraguas--Beiglb\"ock--Huesmann--K\"allblad~\cite{BaBeHuKa20}, and given the name of Bass local volatility (Bass-LV) model by Conze and Henry-Labord\`ere~\cite{CoHe21}.
This is a Markov model perfectly calibrated to the asset's risk-neutral marginal distributions $\mu_1, \dots, \mu_n$ implied from market prices of vanilla options for a \emph{finite set} of maturities $0\leq T_1< \dots <T_n$. Interestingly, one can prove that this model converges to the Dupire-LV model when $\sup_{i=1, \dots, n} |T_i-T_{i-1}|$ converges to $0$; see Corollary~\ref{cor.Dup} below.
\cite{CoHe21} shows that the Bass-LV model can be calibrated and simulated in a very efficient way (numerical experiments show that it can be 80 times faster than the Dupire-LV model) and requires only the simulation of a Brownian motion.
The procedure is based on a generalization of the Bass construction to the Skorokhod embedding problem. Specifically, given two distributions $\mu, \nu \in \Pcal (\RR)$ in convex order, we want to build a martingale $(M_t)_{t \in [0,1]}$ such that 
\begin{equation}\label{eq.ssbm_intro}
M_0 \sim \mu, \;\;\; M_1 \sim \nu, \;\;\; M_t = f_t(B_t),\; t\in[0,1],
\end{equation}
where $(B_t)_{t \in [0,1]}$ is a Brownian motion with initial distribution $\alpha=\alpha(\mu,\nu)$ to be determined, and $f_t(x):[0,1] \times \RR \rightarrow \RR$, $t\in[0,1]$, are suitable functions. 
This construction can be found in the literature under the name of \emph{Bass martingale} or \emph{standard stretched Brownian motion} (s$^2$Bm); see \cite{BaBeHuKa20} and Backhoff-Veraguas--Beiglb\"ock--Schachermayer--Tschiderer \cite{BaBeScTs23}.
It turns out that the process defined in \eqref{eq.ssbm_intro} constitutes the martingale that connects the two given marginals while staying as close as possible to the dynamics of a Brownian motion, see Theorem~\ref{thm.sbm} below. In particular, we have the original Bass construction when $\mu = \delta_{\text{mean}(\nu)}$.
Then the Bass-LV model can be defined as a \ssbm on each time interval $I_i=[T_i, T_{i+1}]$ between the marginals $\mu_i$ and $\mu_{i+1}$, $i=1, \dots, n-1$, up to scaling due to the length $T_{i+1}-T_i$ that may be different than $1$, see Definition~\ref{def:BLV} below.

In fact, the calibration of the \ssbm (cf.\ \eqref{eq.ssbm_intro}) requires solving the fixed-point equation proposed in \cite{CoHe21}, 
over the set $\CDF$ of cumulative distributions functions
\begin{equation}
\label{fixedPointEq_intro}
    F = \Acal F,
\end{equation}
where the nonlinear integral operator $\Acal:\CDF\to\CDF$ is defined as
\begin{equation}
\label{Acal_intro}
    \Acal F:= F_{\mu} \circ(\phi \ast ( Q_\nu \circ (\phi \ast F))).
\end{equation}
Here $\phi$ denotes the density of a standard normal distribution and, for any measure $\xi$, $F_\xi$ and $Q_\xi$ denote the corresponding CDF and quantile function.  
In particular, if a fixed-point $F$ for $\Acal$ exists, this  will be the CDF corresponding to the initial distribution $\alpha(\mu,\nu)$ of the Brownian motion for the \ssbm in \eqref{eq.ssbm_intro}.

In the present paper we complement the study performed in \cite{CoHe21}, under some technical conditions that we postpone to Assumption~\ref{ass:A1&2} below. 
We start by recalling the concept of irreducibility recently introduced in \cite[Definition 1.2]{BaBeScTs23}, that is necessary for the existence of a fixed point.

\begin{definition}[Irreducibility]
Let $\mu, \nu \in \Pcal(\mathbb R)$ with $\mu \preceq_c \nu$.\footnote{Here, $\preceq_c$ denotes the convex order relation for measures.} We say that the pair $(\mu, \nu)$ is irreducible if for all $A, B \in \mathcal B(\mathbb R)$ with $\mu(A), \nu(B) > 0$ there exists $\pi\in\Mcal(\mu,\nu):=\{\pi\in \Pcal(\mathbb R\times \mathbb R) : \text{$\pi$ has marginals $\mu,\nu$ and } \mbox{mean}(\pi^x)=x\; \mu\text{-a.e.}\}$ satisfying $\pi(A\times B)>0$.
\end{definition}
Roughly speaking, the pair $(\mu,\nu)$ is irreducible if the mass can be moved from anywhere in $\mu$ to anywhere in $\nu$ by means of some martingale coupling $\pi\in\Mcal(\mu,\nu)$, see \cite{BeJu16, BaBeScTs23}.
In \cite{BeJu16} it is shown that when the pair $(\mu, \nu)$ is not irreducible, the space $\mathbb R$ can be partitioned into a set on which $\mu$ and $\nu$ coincide as well as at most countably many sets, called irreducible components, so that restricted to each component, the pair $(\mu,\nu)$ is irreducible.
Clearly, irreducibility corresponds to having a unique component.

For every $\alpha \in \mathcal P(\mathbb R)$, we denote its CDF by $F_\alpha$ and its left-continuous quantile function by $Q_\alpha = F_\alpha^{-1}$.
Further, we recall that the $\infty$-Wasserstein distance between $\alpha_1,\alpha_2 \in \mathcal P(\mathbb R)$ precisely coincides with the $L^\infty$-distance of their quantile functions. Denoting by $\CDFc$ the set of cumulative distribution functions associated to distributions with compact support, we can state our first main contribution.
\begin{theorem}
\label{thm:regularity}
	Under Assumption~\ref{ass:A1&2},\footnote{More specifically, we assume here that the pair $(\mu,\nu)$ satisfies that $\nu$ is absolutely continuous w.r.t.\ the Lebesgue measure with density bounded away from zero on the convex hull of its support and the support of $\mu$ is contained in the interior of the support of $\nu$.} $\Acal$ has the following properties:
    \begin{enumerate}[label = (\roman*)]
        \item \label{it:L_inf_Thm1} {\rm Non-expansiveness}: $\Acal$ is non-expansive with respect to the $\infty$-Wasserstein distance\footnote{By abuse of notation, we write $\Wcal_\infty(F_\xi,F_\zeta) := \Wcal_\infty(\xi,\zeta)$ for distributions $\xi,\zeta \in \Pcal(\RR)$.}, i.e.,
        \[
        \Wcal_\infty(\Acal F, \Acal G) \le \Wcal_\infty(F,G)\quad \text{for all $F, G \in {\rm CDF}_c$}.
        \]
        In particular, $\Wcal_\infty(\Acal F, \Acal G)= \Wcal_\infty(F, G)$ if and only if $F^{-1}-G^{-1} \in {\rm Span}(\mathbf 1) \subseteq L^\infty(0,1)$\footnote{In other words, the operator $\Acal$ is a contraction in $L^\infty(0,1)$ in any direction except along $\mathbf 1$, where $\mathbf 1$ is the function in $L^\infty(0,1)$ which is constantly equal 1. We denote by Span$(\mathbf 1)$ the subspace $\{c\mathbf 1 : c \in \RR \}$.}.
        \item \label{it:L_inf_Thm2} {\rm Uniqueness}: There exists at most one (up to translation) fixed point $F^* \in \CDFc$ of $\Acal$.
        \item \label{it:L_inf_Thm3} {\rm Existence}: $\Acal$ admits a fixed-point if and only if $(\mu,\nu)$ is irreducible.
    \end{enumerate}
\end{theorem}
The proof of this theorem relies on the properties of an auxiliary operator $\Gcal$, which can be viewed as the representation of the operator $\Acal$ on the space of quantile functions, see Proposition~\ref{mainProposition}. Notably, our proof leverages the fact that the operator $\Gcal$ is Fréchet differentiable and non-expansive with respect to the supremum norm.
Furthermore, we can establish the non-expansivness of the operator $\Acal$ under weaker assumptions as a corollary of Theorem \ref{thm:regularity}.

\begin{corollary}
\label{cor:non-exp}
    If $\nu$ is compactly supported and $\supp(\mu) \subseteq {\rm int}({\rm co}({\rm supp}(\nu)))$, then $\Acal$ is non-expansive with respect to the $\infty$-Wasserstein distance.
\end{corollary}

Under the assumption of irreducibility, we further obtain convergence of the fixed-point iteration scheme for any initial starting CDF, confirming what the numerical results presented in \cite{CoHe21} suggest. 
This constitutes our second main contribution.
\begin{theorem}\label{thm:conv}
    Under Assumption \ref{ass:A1&2}, let $(\mu,\nu)$ be irreducible and $F \in \CDF$.
    Then $(\Acal^k F)_{k \in \NN}$ converges to a fixed-point $F^\ast$ of $\Acal$ and there is $q \in (0,1)$ such that:
    \begin{enumerate}[label = (\roman*)]
        \item \label{it:conv_1} {\rm Linear convergence\footnote{ Let $(X,d)$ be a metric space. A sequence $(x_n)_{n \in \NN} \subseteq X$  converges linearly to $x\in X$ if there exists $q\in (0,1)$ such that $d(x, x_{n+1}) < q d(x, x_{n})$, for any $n \in \NN$.} to solution space:} For all $k \in \NN$ we have $\Wcal_\infty(\Acal^{k + 1}F, \Lcal^\ast) \le q \Wcal_\infty(\Acal^{k}F, \Lcal^\ast)$, where $\Lcal^\ast$ is the set of fixed-points of $\Acal$;
        \item \label{it:conv_2} {\rm Convergence:} For all $k \in \NN$ we have $\Wcal_\infty(\Acal^k F,F^\ast) \le 2 q^k \Wcal_\infty(F,F^\ast)$.
    \end{enumerate}
\end{theorem}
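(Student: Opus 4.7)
The strategy is to lift the iteration to the quantile picture via Proposition~\ref{mainProposition}, where the operator $\Gcal$ on quantile functions representing $\Acal$ is Fr\'echet differentiable and non-expansive in the supremum norm, and where $\Wcal_\infty$ on CDFs matches $\|\cdot\|_\infty$ on quantile functions. Setting $Q_k := (\Acal^k F)^{-1}$ and letting $\Lcal^\ast_Q$ denote the quantile-function counterpart of $\Lcal^\ast$, Theorem~\ref{thm:regularity}\ref{it:L_inf_Thm2}--\ref{it:L_inf_Thm3} ensures that $\Lcal^\ast_Q$ is non-empty and consists of all translates of a single quantile $Q^\ast$. Translation-equivariance of $\Acal$, visible from the form~\eqref{Acal_intro}, implies in particular that $D\Gcal(Q^\ast)\mathbf{1} = \mathbf{1}$, so the tangent direction to $\Lcal^\ast_Q$ is precisely $\mathrm{Span}(\mathbf{1})$.

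The heart of the argument is a uniform strict contraction toward $\Lcal^\ast_Q$ along the orbit: I would seek $q \in (0,1)$ such that for every $k$ and every closest point $Q^\ast_k \in \Lcal^\ast_Q$ to $Q_k$,
\[
\|\Gcal Q_k - \Gcal Q^\ast_k\|_\infty \le q\, \|Q_k - Q^\ast_k\|_\infty.
\]
The equality case in Theorem~\ref{thm:regularity}\ref{it:L_inf_Thm1} forces $D\Gcal$ at any fixed point to act as a strict contraction on the codimension-one subspace transverse to $\mathrm{Span}(\mathbf{1})$; combined with the averaging (Markov) nature of the derivative inherited from the Gaussian convolutions in $\Acal$, a quantitative strict-Jensen estimate then yields a pointwise rate strictly less than $1$ at each fixed point. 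To globalise along the orbit, I would use that by non-expansiveness the orbit is bounded in $L^\infty$, and that by the smoothing effect of the two Gaussian convolutions present in $\Gcal$ it is equicontinuous on its effective compact support, hence relatively compact in $\|\cdot\|_\infty$. Continuity of $Q \mapsto D\Gcal(Q)$ then upgrades the pointwise estimate to a uniform rate $q < 1$ on the closure of the orbit, proving item~\ref{it:conv_1}.

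Item~\ref{it:conv_2} would follow by a projection-and-telescope argument. With $d_k := \|Q_k - Q^\ast_k\|_\infty \le q^k d_0$, non-expansiveness gives
\[
\|Q^\ast_{k+1} - Q^\ast_k\|_\infty \le \|Q^\ast_{k+1} - Q_{k+1}\|_\infty + \|\Gcal Q_k - \Gcal Q^\ast_k\|_\infty \le d_{k+1} + d_k \le (1+q) q^k d_0,
\]
which is summable; hence $(Q^\ast_k)$ is Cauchy and converges to some $Q^\ast_\infty \in \Lcal^\ast_Q$. Combining $\|Q_k - Q^\ast_\infty\|_\infty \le d_k + \|Q^\ast_k - Q^\ast_\infty\|_\infty$ with the geometric tail, and using $d_0 \le \Wcal_\infty(F, F^\ast)$ for any $F^\ast \in \Lcal^\ast$, delivers the linear bound; the prefactor $2$ is obtained after a benign enlargement of $q$ that absorbs the geometric constant $(1+q)/(1-q)$.

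The main obstacle is the globalisation step: turning the pointwise strict-contraction estimate available at fixed points (from Fr\'echet differentiability together with the equality case of Theorem~\ref{thm:regularity}\ref{it:L_inf_Thm1}) into a uniform rate $q$ along the entire orbit. This depends crucially on the smoothing effect of the Gaussian convolutions built into $\Acal$, which yields the relative compactness required to extract a uniform operator-norm bound on $D\Gcal$ modulo constants.
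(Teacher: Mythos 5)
Your setup --- passing to $\Gcal$ on quantile functions via Proposition~\ref{mainProposition}, applying Taylor's theorem with the Fr\'echet derivative along the segment, and obtaining uniform two-sided bounds on the derivative's density from $L^\infty$-boundedness of the orbit --- coincides with the paper's. The gap is in the heart of your argument for \ref{it:conv_1}. The inequality you aim for, $\|\Gcal Q_k - \Gcal Q^\ast_k\|_\infty \le q\,\|Q_k - Q^\ast_k\|_\infty$ with $Q^\ast_k$ the closest point of $\Lcal^\ast$ to $Q_k$, cannot be obtained by the route you propose, and indeed no uniform $q<1$ of this kind exists at the level of the derivative. Writing $f = Q_k - Q^\ast_k$, the closest-point condition only says $\esssup f = -\essinf f = \|f\|_\infty$; it does not prevent $|f|$ from equalling $\|f\|_\infty$ almost everywhere, e.g.\ $f = \|f\|_\infty$ on a set of measure $1-\eta$ and $-\|f\|_\infty$ on its complement with $\eta$ arbitrarily small. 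For such $f$, with density bounds $\epsilon \le D_{\tilde Q}\Gcal_u(v) \le \delta$, one gets $D_{\tilde Q}\Gcal_u(f) \ge (1 - 2\delta\eta)\|f\|_\infty$ for every $u$, so the contraction factor tends to $1$ as $\eta \to 0$: the strict-Jensen bound of Proposition~\ref{mainProposition}\ref{it:mainProposition.4} degenerates exactly on these directions. Since the relevant set of directions is not compact in $L^\infty$, continuity of $Q \mapsto D_Q\Gcal$ cannot upgrade the pointwise strict inequality to a uniform rate. (Your compactness claim is also too strong: when $\mu$ has atoms the iterates $Q_k = S_{Q_{k-1}}^{-1}\circ Q_\mu$ inherit the jumps of $Q_\mu$, so the orbit is not equicontinuous and not relatively compact in $\|\cdot\|_\infty$; the paper only ever uses relative compactness in $L^0(0,1)$.)

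What rescues the statement in the degenerate regime is that the distance to the solution \emph{line} still contracts even though the distance to the fixed representative $Q^\ast_k$ need not: when $f$ is nearly one-signed, $\Gcal Q_k$ lands near a \emph{translate} $Q^\ast_k + c\mathbf 1 \in \Lcal^\ast$. The paper therefore splits into two cases according to whether $|\int_0^1 f\,dv|$ is large or small relative to $\|f\|_\infty$: in the small-mean case the estimate $\|D_{\tilde Q}\Gcal(f)\|_\infty \le (1-\epsilon)\|f\|_\infty + \epsilon|\int_0^1 f\,dv|$ gives the rate $q = (\epsilon+\delta)/(2\epsilon+\delta)$ directly, while in the large-mean case the reference fixed point is first shifted by an explicit $\tau\in{\rm Span}(\mathbf 1)$ before estimating. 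Equivalently, one can run a Dobrushin-type contraction in the quotient norm $(\esssup f - \essinf f)/2$, which is precisely $\Wcal_\infty(\cdot,\Lcal^\ast)$; either device is needed, and ``strict contraction transverse to ${\rm Span}(\mathbf 1)$'' alone is not enough. A secondary issue concerns \ref{it:conv_2}: your telescoping produces the prefactor $2/(1-q)$, and enlarging $q$ to absorb it can push the rate above $1$ when $q$ is close to $1$. The paper instead extracts an $L^0$-convergent subsequence with limit $\hat Q$, shows $\|\hat Q - Q^\ast_k\|_\infty \le q^k\|\hat Q - Q\|_\infty$ using lower semicontinuity of $\|\cdot\|_\infty$ and non-expansiveness, and concludes with one triangle inequality, yielding the constant $2$ with the same $q$.
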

The proofs of Theorem~\ref{thm:regularity}, Corollary~\ref{cor:non-exp}, and Theorem~\ref{thm:conv} are presented in Section~\ref{sect:main_proofs}. Under additional assumptions, we are able to show a stronger result, that is, that the iteration scheme converges linearly, see Proposition~\ref{prop:lin}.

\begin{remark}[Irreducibility]
    At a first glance the assumption on irreducibility of the marginals in Theorem \ref{thm:conv} may seem restrictive.
    But as discussed earlier, it is always possible to partition $\RR$ into irreducible components.
    In one dimension this decomposition can be effortlessly computed thanks to the concept of  potential functions, see \cite[Appendix A.1]{BeJu16} for a detailed discussion.
    Moreover, since restricted to each component the \sbm from $\mu$ to $\nu$ coincides with the \ssbm on said component (cf. Theorem~\ref{thm.sbm} below), the fixed-point algorithm can be applied there in order to find the sBm.
\end{remark}

\begin{remark}[Compact support]
    To establish Theorem \ref{thm:regularity} and Theorem \ref{thm:conv}, we impose in Assumption \ref{ass:A1&2}  certain regularity properties on $\mu$ and $\nu$.
    In light of stability of the Martingale Benamou-Brenier problem \cite[Corollary 1.4]{BeJoMaPa21b} w.r.t.\ the marginal inputs, this assumption is not excessively restrictive.
    Indeed, thanks to Lemma \ref{lem:approx.sequence} we can approximate any pair $(\mu,\nu)$ in the convex order with a sequence of pairs $(\mu_n,\nu_n), {n\in\NN}$, that are irreducible and satisfy Assumption~\ref{ass:A1&2}.
    It follows from the stability result in \cite[Corollary 1.4]{BeJoMaPa21b} that the \sbm from $\mu_n$ to $\nu_n$ converges in law (as a distribution on $C([0,1])$) to the \sbm from $\mu$ to $\nu$.
\end{remark}

While the analysis in the current paper is presented in the one-dimensional case as considered in \cite{CoHe21}, we stress that the fixed-point iteration can be extended to the general multi-dimensional case. 
The extension of our results to this case is significantly more intricate both at a conceptual and technical level, and deserves a study of its own.

\paragraph{Other related literature.} 
{As discussed above, 
the model studied in the present paper closely relates to the Dupire-LV model. However, it is worth mentioning other methods for local volatility calibration, obtained for example via Tikhonov regularization (see \cite{crepey2003calibration,de2012convex}),
via entropy minimization (see \cite{avellaneda1997calibrating}), via Monte Carlo methods (see \cite{henry2009calibration}), and via martingale optimal transport (see \cite{GuLoWa19}). 
The problem we consider is in fact a martingale optimal transport problem, formulated in a dynamical setting.
}
Optimal transport theory dates back to Monge \cite{Monge} and Kantorovich \cite{Kant42}, with a prominent role in its modern formulation played by the seminal works \cite{Br87, BB00, Mc95, JoKiOt98}; we refer to the manuscript \cite{Vi09} for an overview. 
The current paper focuses on a martingale optimal transport problem, where transport plans have to satisfy the martingale constraint. 
This class of problems plays a key role in financial modeling (see, e.g., \cite{HoNe12, BeHePe12, HeTaTo16,  DoSo12, AcBePeSc16, He17, GuLoWa19, NuWiZh22}), and is inherently interesting from a probabilistic perspective due to connections to martingale inequalities (see, e.g., \cite{BeSi13, AcBePeScTe13, HeObSpTo16}) and the Skorokhod embedding problem (see, e.g., \cite{BeCoHu17, KaTaTo15, BeNuSt19}).

Another parallel of martingale optimal transport to classical optimal transport, is established by the Martingale Benamou-Brenier problem introduced in \cite{HuTr17,BaBeHuKa20}.
More recently, \cite{BaBeScTs23, BaScTs23} study the structure of this problem from a dual perspective.
An early work of Loeper~\cite{Lo18} considers a calibration problem for option pricing with linear market impact from a PDE perspective, which was later extended in \cite{GuLoWa19} to a setting closely related to the current one.
On the algorithmic side, the initial study was done in \cite{CoHe21}, where the connection of the fixed-point equation \eqref{fixedPointEq_intro} and the Bass construction was made and an efficient way of computing the option Greeks provided.
A novel way of computing the fixed-point iteration of \cite{CoHe21} is proposed in \cite{JoLoOb24} based on PDE methods.
The role of irreducibility in martingale optimal transport problems was initially studied in \cite{BeJu16} and then extended in \cite{DeTo17, ObSi17, BaBeScTs23}.

\paragraph{Outline of the paper.} The rest of the paper is organized as follows. We conclude the Introduction with a list of notations that will be used throughout the paper.
In Section \ref{sect:theory} we provide the theoretical background. In particular, we introduce the standard stretched Browniam motion, the Martingale Benamou-Brenier formula and the Bass-LV model. Here we recall the interplay between those concepts and show the connections between the Bass-LV and the Dupire-LV model. 
Section \ref{sect:compact} is devoted to showing our main contributions. These include the proofs of Theorems~\ref{thm:regularity} and~\ref{thm:conv}, as well as a study of the fixed-point operator, for which we show geometric properties and regularity.
Finally, 
in Section~\ref{sect:numerics} we present some numerical illustrations of the theoretical results and remarks for the semidiscrete problem.\\

\paragraph{Notations.}
\begin{itemize}
\item For $d\in\NN$, we write $\Pcal(\RR^d)$ for the probability measures on $\RR^d$ and $\Pcal_p(\RR^d)$ for the subset of probability measures  with finite $p$-moment, $p\in[1,\infty)$.
\item We use CDF as abbreviation for cumulative distribution function and, with an abuse of notation, we also denote by CDF the set of all cumulative distribution functions. Moreover, we write $\CDFc$ for the subset of cumulative distribution functions associated to distributions with compact support.
\item We denote by $\QF$ the set of quantile functions and by $\QFc$ the subset of bounded quantile functions.
\item For any  $\xi\in\Pcal(\RR)$, we use $F_\xi$ and $Q_\xi$ to denote the corresponding CDF and quantile function.
We write mean$(\xi)=\int  y \xi(dy)$ for its mean and denote by $\supp(\xi)$ its support and by $\co(\supp(\xi))$ the closed convex hull of the support.
\item For every $t>0$, we denote by $\gamma_t$ a centered Gaussian with variance $t$ and by $\phi_t$ its density. As we often use the standard normal distribution, to relax the notation we set $\gamma=\gamma_1$, $\phi=\phi_1$, and as usual we denote by $\Phi$ its CDF. Moreover, we write $\lambda^d$ for the Lebesgue measure on $\RR^d$, and again simplify the notation by setting $\lambda=\lambda^1$.
\item For $\mu,\nu\in\Pcal(\RR)$, we denote by  $\Pi(\mu,\nu)$ the subset of $\Pcal(\RR\times\RR)$ of measures with first marginal $\mu$ and second marginal $\nu$. The elements of $\Pi(\mu,\nu)$ are called couplings of $\mu$ and $\nu$. We use
$\Mcal(\mu,\nu)$ for the subset of $\Pi(\mu,\nu)$ containing the measures $\pi$ such that $\mbox{mean}(\pi^x)=x\; \mu\text{-a.e.}$, where
$\pi^x$ is the regular conditional disintegration of $\pi$ w.r.t. $\mu$: $\pi(dx,dy)=\mu(dx)\pi^x(dy)$.
The elements of $\Mcal(\mu,\nu)$ are called martingale couplings of $\mu$ and $\nu$.  
\item We denote by $\Bcal(\RR)$ the Borel sets of $\RR$.
\item The push-forward measure of $\xi \in \Pcal (\RR)$ through a measurable map $T: \RR \rightarrow \RR$, denoted by $T_\# \xi$, is the probability measure such that $T_\#\xi(A)=\xi(T^{-1}(A))$, for any $A\in\Bcal(\RR)$.
\item For $\mu,\nu\in\Pcal_1(\RR)$, we say that $\mu$ is dominated in convex order by $\nu$, and write $\mu\preceq_c\nu$, if for all convex functions $h:\RR\to\RR$ with linear growth we have $\int h d\mu\leq \int h d\nu$.
\item For $\xi,\zeta\in\Pcal(\RR)$, we write $\xi\ast\zeta$ for the probability measure representing their convolution, so that $\xi\ast\zeta(A)=\int 1_A(x+y)d\xi(x)d\zeta(y)$, $A\in\Bcal(\RR)$. For two measurable functions $f,g$,
their convolution is the function given by $f\ast g(x)=\int f(x-y)g(y)dy$, $x\in\RR$. Moreover, the convolution of $f$ and $\xi$ is the function defined as $f\ast \xi(x)=\int f(x-y)\xi(dy)$, $x\in\RR$.
\item For $A\subseteq\RR$, with $L^\infty(A)$ we mean the set of $\lambda$-essentially bounded functions $f:A\to\RR$, and with an abuse of notation we write $L^\infty(a,b):=L^\infty((a,b))$, $a,b\in\RR$.
We equip $L^\infty(A)$ with convergence w.r.t.\ the $\lambda$-essential supremum norm $\|\cdot\|_\infty$.
Similarly, 
$L^0(A)$ is the set of measurable functions $f:A\to\RR$, and $L^0(a,b):=L^0((a,b))$. We equip $L^0(A)$ with convergence in probability w.r.t.\ Lebesgue measure. Finally, for any $\xi \in \Pcal(\RR)$, $L^1(A; \xi)$ is the set of functions $f:A \rightarrow \RR$ such that $\|f\|_{L^1(A; \xi)} :=\int_A |f(x)|d\xi(x) < \infty$, and $L^1(A):= L^1(A; \lambda)$. We equip $L^1(A; \xi)$ with convergence w.r.t. the $L^1$ norm $\|\cdot\|_{L^1(A; \xi)}$.
We write $\bf 1$ for the function that is constant equal 1 in $L^\infty(A)$.
\item For $p\in[1,\infty)$, the $p$-Wasserstein distance between two probability measures $\xi,\zeta\in\Pcal_p(\RR)$ is given by
\[
\Wcal_p(\xi,\zeta) :=\inf_{\pi\in\Pi(\xi,\zeta)}\left(\int |x-y|^p\pi(dx,dy)\right)^{1/p}.
\]
The $\infty$-Wasserstein distance, obtained as limit for $p\to\infty$ of the $p$-Wasserstein distance, is then given by 
\[
\Wcal_\infty(\xi,\zeta) :=\inf_{\pi\in\Pi(\xi,\zeta)}\pi-\esssup |x-y|.
\]
By abuse of notation, we write $\Wcal_\infty(F_\xi,F_\zeta) := \Wcal_\infty(\xi,\zeta)$.
\end{itemize}

\section{State of the art and connections}\label{sect:theory}
This section is devoted to introducing the Bass local volatility model and other main concepts related to it.
The core problem is to find a martingale in continuous time that connects two given distributions $\mu,\nu \in \Pcal(\RR)$. 
This clearly requires more than simply finding a martingale coupling of $\mu$ and $\nu$, that is an element $\pi\in\Mcal(\mu,\nu)$. 
Indeed, the latter problem is only concerned with the static perspective of defining a joint distribution of $\mu$ and $\nu$ that constitutes a 2-step martingale, while we are also interested in how the martingale evolves in continuous time between those two distributions.
In recent mathematical finance literature, a lot of effort has been devoted to the static martingale optimal transport (see, e.g., \cite{BeHePe12, CaLaMa14, DoSo14, GaHeTo13, AcBePeSc16, HoNe12, TaTo13}), which is clearly a source of ideas when one looks at the dynamic problem. 
Crucially, by Strassen theorem \cite{St65}, a martingale coupling between two distributions exists if and only if these are in convex order. This entails that if said condition fails, we will not have any solution to the dynamic problem either. So for the rest of the section we will assume $\mu\preceq_c\nu$. 

 \subsection{Stretched Brownian motion and standard stretched Brownian motion}
In classical optimal transport theory, the celebrated Benamou-Brenier formula offers a dynamic formulation of the problem for quadratic cost, see \cite{BB00}. Here the transport is interpreted as particles moving along absolutely continuous paths, from an initial configuration given by the first marginal $\mu$ until reaching the final configuration according to the second marginal $\nu$, so that
\begin{equation}\label{eq.bb}
\Wcal_2(\mu,\nu)^2=	\inf_{\substack{X_t = X_0 + \int_0^t \sigma_s ds \\ X_0 \sim \mu, X_1 \sim \nu}} \mathbb E \left[\int_0^1 |\sigma_t|^2 dt\right],
\end{equation}
where $(\sigma_s)_{s \in [0,1]}$ is an $\mathbb R$-valued stochastic process. This means that the particles are moving from $\mu$ to $\nu$ while staying as close as possible to a constant-speed particle.
For $\mu,\nu$ in convex order, 
its martingale counterpart, coined as Martingale Benamou-Brenier formula, reads as
\begin{equation}\label{eq.mbb}
	{\rm MBB}(\mu, \nu) 
 = \inf_{\substack{M_t = M_0 + \int_0^t \sigma_s dBs \\ M_0 \sim \mu, M_1 \sim \nu}} \mathbb E \left [ \int_0^1 |\sigma_t-1|^2 dt \right],
\end{equation}
where the optimization is taken over the class of filtered probability spaces $(\Omega, \mathcal F, P)$, with an $\mathbb R$-valued $\mathcal F$-progressive measurable process $(\sigma_t)_{t \in [0,1]}$  and an $\mathcal F$-Brownian motion $B$, such that $M$ is a martingale. 
As shown in \cite{BaBeHuKa20}, the choice of the underlying probability space is not relevant, provided that $(\Omega, \mathcal F, P)$ is rich enough to support an $\mathcal F_0$-measurable random variable with continuous distribution.
A process solving \eqref{eq.mbb} is called \emph{stretched Brownian motion} (sBm), to indicate its property of mimicking as closely as possible the movement of a Brownian particle while fitting the given marginals.
For example, when we have log-normal distributions in convex order, the sBm is simply the geometric Brownian motion between them.

The concept of stretched Brownian motion is closely related to that of a Bass martingale or standard stretched Brownian motion; see Theorem~\ref{thm.sbm} below and \cite{BaBeHuKa20} and \cite{BaBeScTs23} for a  thorough analysis. Here we present in more details the concept of \ssbm introduced in a simple way in
\eqref{eq.ssbm_intro}.

\begin{definition}[Standard stretched Brownian motion]
\label{s2Bm}
Let $\mu, \nu \in \Pcal (\RR)$. A martingale $(M_t)_{t \in [0,1]}$ is a standard stretched Brownian motion (s$^2$Bm) from $\mu$ to $\nu$ 
 if $M_0 \sim \mu$, $M_1 \sim \nu$, and there exist a non-decreasing function $f:\RR \rightarrow \RR$ and a Brownian motion $B=(B_t)_{t \in [0,1]}$ with a possibly non-trivial initial distribution such that $M_t = (\phi_{1-t} \ast f)(B_1)$.
\end{definition}

Note that if $\alpha \in \Pcal(\RR)$ and $f$ are the initial distribution of the Brownian motion and the non-decreasing map introduced Definition \ref{s2Bm}, respectively, then 
\begin{equation}
    \label{fund_rel}
    ({f})_\#(\gamma_1\ast\alpha) = \nu \quad \text{and} \quad (\phi_{1-t} \ast f)_\#\alpha = \mu.
\end{equation}
Furthermore, observe that when $\mu=\delta_x$ this corresponds to the Bass construction of the solution to the Skorokhod embedding problem, and in this case we can take $\alpha=\delta_0$; see \cite{Ba83}.

\begin{theorem}[{\cite[Theorem 1.5]{BaBeHuKa20} and \cite[Theorem 1.3]{BaBeScTs23}}, Existence and uniqueness of the sBm]\label{thm.sbm}
    ~ \\
    Let $\mu, \nu \in \Pcal_2(\mathbb R)$ be such that $\mu\preceq_c\nu$. Then: 
    \begin{itemize}
    \item[(i)] There exists a \sbm from $\mu$ to $\nu$ (i.e. an optimizer for ${\rm MBB}(\mu, \nu)$) and is unique in law;
    \item[(ii)] There exists a \ssbm from $\mu$ to $\nu$ iff $(\mu, \nu)$ is irreducible, and in this case the \sbm is a s$^2$Bm.
    \end{itemize}
\end{theorem}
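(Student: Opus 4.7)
The statement combines two substantial pieces. My starting reduction for (i) is the classical Itô-isometry computation: for any admissible martingale $M$ with diffusion coefficient $\sigma$ and $M_0\sim\mu$, $M_1\sim\nu$, one has
\[
\mathbb{E}\!\int_0^1 \sigma_t^2\,dt = \mathbb{E}[M_1^2] - \mathbb{E}[M_0^2] = \int x^2\,d\nu - \int x^2\,d\mu,
\]
which is fixed by the marginals. Hence minimizing the MBB cost is equivalent to maximizing the functional $J(M) := \mathbb{E}\!\int_0^1 \sigma_t\,dt$ over such martingales, and this rewriting is the backbone of the whole argument.

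For existence in (i), I would lift to the canonical path space and work with laws of $M$ on $C([0,1])$. A maximizing sequence is tight thanks to the uniform bound on $\mathbb{E}[M_1^2]$ coming from $\nu \in \Pcal_2(\RR)$ together with Doob's inequality, and weak limits remain martingales with marginals $(\mu,\nu)$ by standard uniform integrability. Upper semicontinuity of $J$ along such limits, writing $J$ via a quadratic-covariation identity with a Brownian motion on a suitably enlarged probability space, then yields the existence of an optimizer. For uniqueness in law I would exploit strict concavity: if $M^1, M^2$ are two optimizers realized on a common filtered space, then $\bar M := (M^1 + M^2)/2$ is again admissible; the inequality $\mathbb{E}[\langle \bar M\rangle_1] \le \tfrac12(\mathbb{E}[\langle M^1\rangle_1] + \mathbb{E}[\langle M^2\rangle_1])$ combined with the fact that the left-hand side is determined by the marginals forces the laws to coincide.

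For part (ii), the non-trivial direction is ``irreducibility $\Rightarrow$ existence of \ssbm and coincidence with the \sbm''. Under irreducibility, Theorem~\ref{thm:regularity}\ref{it:L_inf_Thm3} furnishes a fixed point $F^\ast$ of $\Acal$, from which one recovers a measure $\alpha = \alpha(\mu,\nu)$ and a non-decreasing $f$ with $f_\#(\alpha\ast\gamma) = \nu$ such that $M_t := f_t(B_t)$, with $f_t = \phi_{1-t}\ast f$ and $B_0 \sim \alpha$, satisfies Definition~\ref{s2Bm}. To show that this $M$ is the \sbm, apply Itô to get $\sigma_t = f_t'(B_t)$ and use that the process $f_t(B_t) - B_t$ is a martingale with diffusion $f_t'(B_t) - 1$, so that Itô isometry gives
\[
\mathbb{E}\!\int_0^1 (f_t'(B_t) - 1)^2\,dt = \mathbb{E}\bigl[(f(B_1) - B_1)^2\bigr] - \mathbb{E}\bigl[(f_0(B_0) - B_0)^2\bigr].
\]
A matching duality lower bound for MBB then identifies $M$ as the optimizer. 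The converse direction ``\ssbm exists $\Rightarrow$ irreducibility'' is softer: from the explicit endpoints $(M_0, M_1) = (f_0(B_0), f(B_1))$ and the fact that $\alpha\ast\gamma$ charges every nonempty open set, one builds martingale couplings transporting mass between any prescribed positive-measure subsets of $\mu$ and $\nu$, which is exactly irreducibility.

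The main obstacle is producing $\alpha$ and $f$ under irreducibility, which is precisely the fixed-point problem $F = \Acal F$ addressed in the remainder of the paper; once Theorem~\ref{thm:regularity}\ref{it:L_inf_Thm3} is in hand, the identification of the \sbm with the \ssbm reduces to the Itô-isometry bookkeeping sketched above, and uniqueness in law of the \sbm follows from the standard strict-convexity argument indicated for (i).
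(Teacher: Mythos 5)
This theorem is quoted from \cite{BaBeHuKa20} and \cite{BaBeScTs23}; the paper itself offers no proof, so your sketch can only be measured against those sources, and it has two genuine gaps. First, your uniqueness argument for (i) fails as stated: if $M^1,M^2$ are two optimizers realized on a common space, the average $\bar M=(M^1+M^2)/2$ is indeed a martingale, but its time-$0$ and time-$1$ marginals are the laws of $(M^1_0+M^2_0)/2$ and $(M^1_1+M^2_1)/2$, which are in general \emph{not} $\mu$ and $\nu$; so $\bar M$ is not admissible and the strict-convexity comparison collapses. (Averaging the laws instead preserves the marginals, but then $\mathbb E[\langle M\rangle_1]$ is affine in the law and no strictness is gained.) The route actually taken in \cite{BaBeHuKa20} is to rewrite ${\rm MBB}(\mu,\nu)$ as the static problem $\sup_{\pi\in\Mcal(\mu,\nu)}\int {\rm MCov}(\pi^x,\gamma)\,\mu(dx)$, where the maximal covariance is concave in $\pi^x$ and strictly so up to translation; both existence and uniqueness in law are read off from that reformulation. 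Your path-space compactness argument also asserts upper semicontinuity of $J(M)=\mathbb E\int_0^1\sigma_t\,dt$ under weak convergence, which is precisely the delicate point, since $\int_0^1\sigma_t\,dt$ is not a functional of the path of $M$ alone but of its joint law with the driving Brownian motion.

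Second, your treatment of (ii) is circular within this paper's logical structure. You invoke Theorem~\ref{thm:regularity}\ref{it:L_inf_Thm3} to produce the fixed point $F^\ast$ under irreducibility, but the paper proves that very item by chaining Theorem~\ref{char_s2Bm} with Theorem~\ref{thm.sbm}(ii): existence of a fixed point is \emph{deduced from} the equivalence between irreducibility and existence of a \ssbm that you are trying to prove. The hard content of (ii) --- that irreducibility forces existence of the Bass representation $M_t=f_t(B_t)$ and that the \sbm then coincides with it --- is the main theorem of \cite{BaBeScTs23}, obtained through a dual problem and a compactness argument that neither this paper nor your sketch supplies. The pieces of your proposal that do hold up are the reduction of the cost via It\^o isometry, the verification that a given \ssbm optimizes ${\rm MBB}$ once a dual certificate is available, and the converse implication that existence of a \ssbm yields irreducibility via the full support of the Gaussian transition kernel.
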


It was established in \cite{BaBeHuKa20} that the \sbm from $\mu$ to $\nu$ is a \ssbm on each irreducible component, thus the \sbm can be determined by finding the \ssbm on each component.

Now recall that Brenier's theorem states that, for $\xi,\zeta\in\Pcal_2(\RR)$ with $\xi$ absolutely continuous, the unique 
optimizer $\pi^*\in\Pi(\xi,\zeta)$ for the optimal transport with quadratic cost is concentrated on the graph of a monotone function $T_{\xi,\zeta}:\RR\to\RR$, so that $\pi^*=(Id,T_{\xi,\zeta})_\#\xi$. This means that $\Wcal_2(\xi,\zeta)^2=\int |x-T_{\xi,\zeta}(x)|^2\xi(dx)$, with $T_{\xi,\zeta}(x)=Q_\zeta\circ F_\xi$. Then, from Definition~\ref{s2Bm} and \eqref{fund_rel}, we see that a \ssbm from $\mu$ to $\nu$ exists if and only if there exists a distribution $\alpha \in \Pcal(\RR)$ such that the following diagram commutes 
\[
\begin{tikzcd}[row sep=0.7in, column sep = 1.4in]
  M_0 \sim \mu \arrow[r, "\text{\normalsize Brownian martingale}"]  & M_1 \sim \nu  \\
  B_0 \arrow[u, "\text{\normalsize $T_{\alpha, \mu}$}"] \sim \alpha \arrow[r, "\text{\normalsize Brownian motion}"] & B_1 \sim \alpha \ast \gamma  \arrow[u, "\text{\normalsize $T_{\alpha\ast \gamma, \nu}$}"]
\end{tikzcd}
\]
that is if and only if
\begin{equation}\label{eq.comm}
    T_{\alpha, \mu} = \phi \ast T_{\alpha \ast \gamma, \nu}.
\end{equation}
In particular, computing the \ssbm from $\mu$ to $\nu$ is equivalent to  finding $\alpha \in \Pcal (\RR)$ such that \eqref{eq.comm} holds.

\subsection{The Bass-LV model and the calibration problem} 
The Bass-LV model can be easily, quickly and perfectly calibrated to vanilla options. In fact, this model requires only the simulation of a Brownian motion and numerical experiments in \cite{CoHe21} showed that it can be 80 times faster than the Dupire local volatility model.

The Bass-LV model can be defined as follows.

\begin{definition}[Bass-LV model]\label{def:BLV}
Let $T>0$ and $\mu_1, \dots, \mu_n$ be the asset's risk-neutral marginal distributions implied from market price of vanilla options with expiries $0\leq T_1< \dots <T_n=T$. The Bass-LV model is a \ssbm on each time interval $I_i=[T_i, T_{i+1}]$, $i=1, \dots, n-1$, up to a time change as the length $T_{i+1}-T_i$ may differ from $1$. 
\end{definition}

Therefore, the calibration of the Bass-LV model is equivalent to computing a standard stretched Brownian motion from $\mu_i$ to $\mu_{i+1}$, for any $i=1, \dots, n-1$, accounting for scaling. 

From \eqref{eq.comm} and \cite[Proof of Theorem 2.1]{CoHe21} we have the following result.

\begin{theorem}[\cite{CoHe21}]
\label{char_s2Bm}
Let $\mu, \nu \in \mathcal{P}_2(\mathbb R)$ be such that $\mu \preceq_c \nu$. Then, there exists a standard stretched Brownian motion between $\mu$ and $\nu$ if and only if there exists a fixed point $F^*$ for the nonlinear integral operator $\Acal$ defined in \eqref{Acal_intro}.
In this case, the standard stretched Brownian motion from $\mu$ to $\nu$ is given by $M_t = f_t(B_t)$, where 
 \begin{equation*}
     f_t = \phi_{1-t} \ast (Q_\nu \circ (\phi \ast F^*)),
 \end{equation*}
 and $(B_t)_{t \in [0,1]}$ is a Brownian motion with $B_0 \sim \alpha$, where $\alpha$ is such that $F_\alpha=F^*$.
\end{theorem}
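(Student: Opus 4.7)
The plan is to translate the abstract Bass construction in Definition~\ref{s2Bm} into the commuting diagram shown before equation \eqref{eq.comm}, and then apply $F_\mu$ to turn it into the fixed-point equation $\Acal F = F$. The whole argument pivots on two elementary facts: first, for the Gaussian $\gamma$ we have $F_{\alpha \ast \gamma} = \phi \ast F_\alpha$; second, since $\alpha \ast \gamma$ and (generically) $\alpha$ are atomless, any non-decreasing function pushing them forward to $\nu$, respectively $\mu$, must coincide (on the relevant support) with the monotone rearrangement $Q_\nu \circ F_{\alpha\ast\gamma}$, respectively $Q_\mu \circ F_\alpha$.

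\emph{Forward direction.} Suppose a s$^2$Bm exists, with associated $\alpha$ and non-decreasing $f$. By Definition~\ref{s2Bm}(ii) and the monotonicity of $f$, we must have $f = Q_\nu \circ (\phi \ast F_\alpha)$ almost everywhere with respect to $\alpha\ast\gamma$. Plugging this into Definition~\ref{s2Bm}(iii) at $t=0$ gives $M_0 = (\phi \ast f)(B_0)$ with $B_0 \sim \alpha$, and Definition~\ref{s2Bm}(i) forces $(\phi\ast f)_\# \alpha = \mu$. Since $\phi \ast f$ is again non-decreasing (in fact continuous and strictly increasing whenever $f$ is not constant; the constant case corresponds to $\nu$ being a Dirac and is trivial), the same rearrangement argument yields $\phi \ast f = Q_\mu \circ F_\alpha$ up to $\alpha$-null sets. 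Composing with $F_\mu$ on the left and using $F_\mu \circ Q_\mu = \mathrm{id}$ on the range of $F_\mu$, we obtain
\[
F_\alpha \;=\; F_\mu \circ \bigl(\phi \ast (Q_\nu \circ (\phi \ast F_\alpha))\bigr) \;=\; \Acal F_\alpha,
\]
so $F^\ast := F_\alpha$ is a fixed point of $\Acal$.

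\emph{Converse direction.} Suppose $\Acal F^\ast = F^\ast$. Let $\alpha$ be the measure with $F_\alpha = F^\ast$, set $f := Q_\nu \circ (\phi \ast F^\ast)$ and $f_t := \phi_{1-t} \ast f$, and let $B$ be a Brownian motion with $B_0 \sim \alpha$, defined on a filtered space rich enough for this. Then $f$ is non-decreasing, so $f_\#(\alpha \ast \gamma) = \nu$ follows from $F_{\alpha\ast\gamma} = \phi \ast F^\ast$ and the definition of $Q_\nu$. For Definition~\ref{s2Bm}(i), the fixed-point relation gives $F_\alpha = F_\mu \circ (\phi \ast f)$; since $\phi \ast f$ is continuous and strictly increasing, for $B_0 \sim \alpha$ we compute
\[
\PP\bigl((\phi \ast f)(B_0) \le x\bigr) \;=\; F_\alpha\bigl((\phi \ast f)^{-1}(x)\bigr) \;=\; F_\mu(x),
\]
so $M_0 = f_0(B_0) = (\phi\ast f)(B_0) \sim \mu$. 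Finally, $M_t := f_t(B_t) = \EE[f(B_1)\mid \Fcal_t]$ by the Markov property, which makes $M$ a martingale and also yields $M_1 = f(B_1)\sim \nu$. All three conditions of Definition~\ref{s2Bm} are satisfied, and the explicit formula for $f_t$ matches the one in the statement.

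\emph{Main obstacle.} The delicate point is handling the cancellation $F_\mu \circ Q_\mu = \mathrm{id}$, which only holds on the range of $F_\mu$: when $\mu$ has gaps in its support or flats of $F_\mu$, the map $\phi \ast f$ need not equal $Q_\mu \circ F_\alpha$ pointwise but only $\alpha$-almost everywhere. This is enough for the push-forward computation and hence for the argument, but one must verify that $\Acal F_\alpha = F_\alpha$ holds as an equality of CDFs (which is automatic since both sides are right-continuous and agree on the dense range of the other composition). The analogous care is needed when inverting $\phi \ast f$ in the converse direction, where strict monotonicity of the convolution $\phi \ast f$ saves us provided $f$ itself is non-constant.
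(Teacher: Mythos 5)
Your argument is correct and follows essentially the same route the paper takes: it reduces the existence of the s$^2$Bm to the commuting-diagram condition $T_{\alpha,\mu} = \phi \ast T_{\alpha\ast\gamma,\nu}$ of \eqref{eq.comm} via monotone rearrangements, and then rewrites this as $F_\alpha = \Acal F_\alpha$ (the paper itself only sketches this and defers the details to \cite{CoHe21}). Your handling of the $F_\mu \circ Q_\mu$ cancellation is also sound, though it is cleaner to note directly that $\Acal F_\alpha(x) = F_\mu(\phi\ast f(x)) = \alpha\bigl((\phi\ast f)^{-1}((-\infty,\phi\ast f(x)])\bigr) = F_\alpha(x)$ by the push-forward property and strict monotonicity of $\phi \ast f$, which avoids any case analysis on atoms or flats of $F_\mu$.
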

In order to do this for the pair $(\mu_i,\mu_{i+1})$ in the interval $[T_i,T_{i+1}]$, we again need to adapt by scaling, thus finding fixed points for the operators
\begin{equation}
\label{Acal_i}
    \Acal_i F:= F_{\mu_i} \circ(\phi_{T_{i+1}-T_i} \ast ( Q_{\mu_{i+1}} \circ (\phi_{T_{i+1}-T_i} \ast F))).
\end{equation}

\subsection{Connections with Dupire local volatility model}
In this section, we follow closely \cite[Section 1.4.6]{BaBeHuKa20} and provide the detailed argument for the convergence to the Dupire-LV model.
Before diving into the analysis of the fixed-point scheme we want to discuss in which sense the Bass-LV model is an approximation of the local volatility model.
For this reason, we consider a curve of marginal distributions $(\mu_t)_{t \in [0,T]}$, increasing in convex order, and a sequence of partitions $0 = T_0^{(n)} < T_1^{(n)} < \ldots < T_n^{(n)} = T$, $n \in \mathbb N$, where $\sup_{i = 1,\ldots, n} |T_i^{(n)} - T_{i - 1}^{(n)}|$ vanishes for $n \to \infty.$
To give a proper construction, let $B = (B_t)_{t \in [0,T]}$ be a Brownian motion and $U$ be independent and uniformly distributed on $(0,1)$.
By \cite[Theorem 2.2 and Remark 2.3]{BaBeHuKa20} the stretched Brownian motion between any two marginals can be written as a function of starting position and an independent Brownian motion.
Building on this observation, we obtain measurable maps $f^{(n)}_t \colon \mathbb R \times \mathbb R \to \mathbb R$ for $t \in [0,T]$ such that
\[
    f^{(n)}_t \left(F^{-1}_{\mu_{T^{(n)}_{i - 1}}}(U), B_t - B_{T^{(n)}_{i - 1}}\right)
\]
is a stretched Brownian motion starting at time $T^{(n)}_{i - 1}$ in $\mu_{T^{(n)}_{i - 1}}$ and terminating at time $T^{(n)}_{i}$ in $\mu_{T^{(n)}_{i}}$.
Successively for each $i = 1,\ldots,n$, we define the process $M^{(n)} = (M^{(n)}_t)_{t \in [0,T]}$ via
\[
    M_0^{(n)} := F_{\mu_0}^{-1}(U) \text{ and }M^{(n)}_t := f_t^{(n)} \left(M^{(n)}_{T^{(n)}_{i-1}},B_t - B_{T^{(n)}_{i - 1}} \right),
    \quad t \in [T^{(n)}_{i - 1}, T^{(n)}_{i}].
\]
By construction $M^{(n)}$ is a continuous, strong Markov martingale with prescribed one-dimensional marginal distributions on the $n$-th partition.

We recall \cite[Definition 1.1]{Lo08c} that an almost continuous diffusion $X$ is a real-valued strong Markov process with c\`adl\`ag paths that are right-continuous in probability such that for any iid pair $Y, Z \sim X$ and $0 < s < t$
\[
    \PP[Y_s < Z_s, Y_t > Z_t \text{ and }Y_u \neq Z_u \text{ for every }u \in (s,t) ] = 0.
\]

\begin{theorem}
    Assume that $(\mu_t)_{t \in [0,T]}$ is weakly continuous, increasing in convex order, and $\mu_T \in \mathcal P_1(\mathbb R)$.
    Then $M^{(n)}$ converges in finite dimensional distributions to the unique almost continuous diffusion $M$ with one-dimensional marginal distributions $(\mu_t)_{t \in [0,T]}$.
\end{theorem}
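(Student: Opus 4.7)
The plan is to first establish tightness of the finite-dimensional distributions of $(M^{(n)})_n$ and identify the one-dimensional marginal limits, then verify that every subsequential limit is an almost continuous diffusion with marginals $(\mu_t)_{t \in [0,T]}$, and finally invoke Lowther's uniqueness theorem (recalled just before the statement) to pin down the limit and upgrade subsequential convergence to convergence of the whole sequence.

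For the marginal convergence, I would fix $t \in [0,T]$ and let $i(n)$ be such that $t \in [T_{i(n)-1}^{(n)}, T_{i(n)}^{(n)}]$. Since $M^{(n)}$ is a martingale on this subinterval, Jensen's inequality for conditional expectations yields the convex-order sandwich
\[
\mu_{T_{i(n)-1}^{(n)}} \preceq_c \mathrm{Law}(M^{(n)}_t) \preceq_c \mu_{T_{i(n)}^{(n)}}.
\]
Both endpoints converge weakly to $\mu_t$ by the assumed weak continuity of $s \mapsto \mu_s$; combined with uniform integrability anchored on $\mu_T \in \mathcal{P}_1(\mathbb{R})$, this forces $\mathrm{Law}(M^{(n)}_t) \Rightarrow \mu_t$. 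Tightness at finitely many times then follows by standard arguments for martingales with uniformly integrable marginals and the strong Markov structure of $M^{(n)}$ across partition points.

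For the almost continuous diffusion property, I would exploit the structural representation provided by Theorem~\ref{char_s2Bm}: on each subinterval $M^{(n)}_t = f^{(n)}_t(B^{(n)}_t)$ with $f^{(n)}_t$ non-decreasing and $B^{(n)}$ a Brownian motion whose starting value is measurable with respect to $M^{(n)}_{T^{(n)}_{i-1}}$. Coupling two independent copies $Y^{(n)}, Z^{(n)}$ of $M^{(n)}$ with the same Brownian driver but independent initial positions and using monotonicity of $f^{(n)}_t$ in its spatial argument, on each subinterval the pair cannot reverse order without coinciding; conditioning at partition points propagates this across the entire horizon, so $M^{(n)}$ itself satisfies Lowther's non-crossing-without-meeting condition. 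The main obstacle is to transfer this property to a weak limit, because the conditioning event depends on a continuum of times. I would handle this by first upgrading finite-dimensional tightness to tightness in $C([0,T])$ via a Kolmogorov-type increment estimate derived from the sBm structure on each subinterval together with the uniform convex-order control; the crossing event is then a closed (or at worst $G_\delta$) subset of $C([0,T]) \times C([0,T])$, and the portmanteau inequality applied to the joint laws of $(Y^{(n)}, Z^{(n)})$ transfers the zero-probability statement to the limit.

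The final step is then immediate: any subsequential limit is a continuous strong Markov process with prescribed one-dimensional marginals $(\mu_t)_{t \in [0,T]}$ satisfying the almost continuous diffusion condition, so by the uniqueness half of Lowther's theorem \cite{Lo08c} it coincides with $M$. Since every subsequential limit equals $M$, the whole sequence converges in finite-dimensional distributions to $M$.
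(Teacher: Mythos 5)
Your overall skeleton (relative compactness of finite-dimensional distributions, identification of subsequential limits as almost continuous diffusions, then Lowther's uniqueness theorem) matches the paper's, and your convex-order sandwich for identifying the one-dimensional marginals is a correct and even more explicit version of what the paper leaves implicit. However, the middle step of your argument has a genuine gap, and it is exactly the step the paper delegates to a black-box result.

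First, verifying that a subsequential limit is an almost continuous diffusion requires, by Definition 1.1 of \cite{Lo08c}, that the limit be a \emph{strong Markov} process with c\`adl\`ag paths that is right-continuous in probability, in addition to the non-crossing condition. Your proposal only addresses the non-crossing condition. The strong Markov property does not pass to limits in finite-dimensional distributions (limits of Markov processes are in general not Markov), and nothing in your argument recovers it. This is precisely the content of \cite[Corollary 1.3]{Lo08c}, which the paper invokes: Lowther shows that fdd-limits of continuous strong Markov martingales that are continuous in probability are almost continuous diffusions, and the bulk of the difficulty there is the Markov property of the limit, not the crossing estimate. Without citing that result (or reproving it), your argument does not close.

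Second, your route to the non-crossing condition is both unnecessary for the prelimit and unavailable for the limit. For the prelimit processes no coupling is needed: each $M^{(n)}$ has continuous paths, so on the event $\{Y_s < Z_s,\ Y_t > Z_t\}$ the intermediate value theorem forces $Y_u = Z_u$ for some $u \in (s,t)$, and the probability in the definition is trivially zero. (Your proposed coupling of ``two independent copies with the same Brownian driver'' is in any case not an iid pair as the definition requires.) The real issue is transferring the property to the limit, and your fix --- tightness in $C([0,T])$ via a Kolmogorov-type increment estimate --- is not available under the stated hypotheses: only $\mu_T \in \mathcal P_1(\mathbb R)$ is assumed, so one has no moment control on increments beyond $L^1$, the modulus of continuity of $t \mapsto \mu_t$ is only weak continuity, and the limit process need not be continuous at all (an almost continuous diffusion is merely c\`adl\`ag). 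The paper avoids all of this by using Meyer--Zheng tightness for uniformly integrable martingales \cite[Theorems 4 and 11]{MeZh84} to get fdd-subsequential limits, the martingale convergence theorem to get continuity in probability from weak continuity of the marginals, and then \cite[Corollary 1.3]{Lo08c} to identify each limit as an almost continuous diffusion. You should replace your hands-on verification with these two citations, after which your final uniqueness step via \cite[Theorem 1.3]{Lo08b} goes through as you wrote it.
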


\begin{proof}
    Recall that, for every $n \in \mathbb N$, $M^{(n)}$ is a martingale with continuous paths that terminates in $\mu_T$.
    As a consequence of \cite[Theorem 4 and Theorem 11]{MeZh84}, we thus have that any subsequence of $(M^{(n)})_{n \in \mathbb N}$ admits subsequences that converge in finite dimensional distributions.
    Furthermore, it follows from the martingale convergence theorem that any martingale $X = (X_t)_{t \in [0,T]}$ with $\EE[|X_T|] < \infty$ and such that $t \mapsto Law(X_t)$ is weakly continuous, is also continuous in probability.
    Therefore, we can invoke \cite[Corollary 1.3]{Lo08c} to find that each such limit is an almost-continuous diffusion, cf.\ \cite[Definition 1.1]{Lo08c}.
    On the other hand, \cite[Theorem 1.3]{Lo08b} tells us that there exists a unique almost continuous diffusion with one-dimensional marginals $(\mu_t)_{t \in [0,T]}$.
    Hence, $(M^{(n)})_{n \in \mathbb N}$ converges in finite dimensional distributions to this process.
\end{proof}

As immediate consequence we get the next corollary.

\begin{corollary}\label{cor.Dup}
    The sequence $(M^{(n)})_{n \in \mathbb N}$ converges in finite dimensional distributions to the Dupire local volatility model if the latter exists and is a strong Markov martingale.
\end{corollary}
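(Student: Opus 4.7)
The plan is to deduce the corollary from the preceding theorem by identifying the Dupire local volatility model with the unique almost continuous diffusion $M$ having one-dimensional marginals $(\mu_t)_{t\in[0,T]}$. Assume that the Dupire-LV model $D=(D_t)_{t\in[0,T]}$ exists and is a strong Markov martingale. By construction, it is a continuous process whose one-dimensional marginals coincide with $(\mu_t)_{t\in[0,T]}$.

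First, I would verify that $D$ is an almost continuous diffusion in the sense of \cite[Definition 1.1]{Lo08c}. Since $D$ is a continuous strong Markov martingale, right-continuity in probability is immediate, so only the non-crossing condition for two independent copies $Y,Z\sim D$ needs to be checked. This is a classical property of continuous strong Markov martingales: at any time two independent copies meet, they do so on the boundary of their joint support and, by the strong Markov property applied at the hitting time together with the continuity of paths, almost surely cannot cross strictly in between without first coinciding on a whole interval. I would appeal to the corresponding statement in \cite{Lo08c} (the same criterion that was used in the proof of the preceding theorem for the martingales $M^{(n)}$).

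Second, once $D$ is identified as an almost continuous diffusion with one-dimensional marginal laws $(\mu_t)_{t\in[0,T]}$, the uniqueness result \cite[Theorem 1.3]{Lo08b} invoked in the proof of the preceding theorem forces $D$ to coincide in law with the almost continuous diffusion $M$ constructed there. Combining this identification with the convergence in finite dimensional distributions $M^{(n)}\to M$, established in the preceding theorem, yields the claimed convergence $M^{(n)}\to D$ in finite dimensional distributions.

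The only genuine point to be careful about is the verification of the almost continuous (non-crossing) property for a strong Markov continuous martingale. Since this is precisely the content of \cite[Corollary 1.3]{Lo08c} (applied in the same way as in the proof of the preceding theorem), the corollary follows without further work.
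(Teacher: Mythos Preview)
Your approach is correct and matches the paper's intent: the paper states the corollary as an ``immediate consequence'' of the preceding theorem without giving a separate proof, and your argument spells out exactly the identification that makes it immediate---namely, that a continuous strong Markov martingale with the prescribed marginals is an almost continuous diffusion and hence, by Lowther's uniqueness theorem, coincides in law with the limit $M$. One minor caveat: \cite[Corollary 1.3]{Lo08c} is phrased for limits of sequences rather than for a single process, so you may want either to invoke it on the constant sequence $D$ (which trivially converges and satisfies the required continuity-in-probability and strong Markov hypotheses) or to cite directly the result in Lowther's work asserting that continuous strong Markov martingales are almost continuous diffusions.
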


The idea of using \sbm to interpolate between finitely many marginals and then passing to a limit, was picked up earlier in \cite{BaBeHuKa20, BePaSc21, PaRoSc22}.
In \cite{PaRoSc22} an existence result in the vein of Corollary \ref{cor.Dup} was established in a regularized setting while the authors also provide counterexamples for uniqueness.
Nevertheless, we insist that the question of an appropriate multi-dimensional counterpart to Corollary \ref{cor.Dup} is still an open and interesting endeavour.

\section{Analysis of the fixed-point iteration}\label{sect:compact}
This section is devoted to the study of the fixed-point iteration scheme introduced in \cite{CoHe21}. In particular, to the proofs of Theorem~\ref{thm:regularity} and Theorem~\ref{thm:conv}, that collect our main contributions.

From now on, we fix measures $\mu,\nu \in \mathcal P(\mathbb R)$ and let $\mathcal A$ be the nonlinear integral operator in \eqref{Acal_intro}.
\begin{assumption}\label{ass:A1&2}
    The measures $\mu$ and $\nu$ satisfy:
    \begin{enumerate}[label = (A\arabic*)]
        \item \label{it:A1} The measure $\nu$ is absolutely continuous w.r.t.\ the Lebesgue measure;
        \item \label{it:A2} The support of $\mu$ is contained in ${\rm int}({\rm co}({\rm supp}(\nu)))$;
        \item \label{it:A3} The density of $\nu$ is bounded away from zero on $\co(\supp(\nu))$.
    \end{enumerate}
\end{assumption}
Note that \ref{it:A3} implies that $\nu$ has compact support and the derivative of $Q_\nu$ satisfies $\lambda$-almost surely that $0 \le Q'_\nu \le L$ for $L := \sup_{u \in (0,1)} \frac{1}{F_\nu' \circ Q_\nu(u)}$.

\subsection{Basic properties}\label{ssec:basic}
We start by establishing some basic properties of $\Acal$.
As preparation, consider $Q \in L^0(0,1)$ and let $F$ be the CDF of the random variable $Q(U)$ where $U$ is uniformly distributed on $(0,1)$.
Observe that
\begin{equation}
    \label{eq:Acal.innerpart}
    \phi \ast F(x) = \int_\RR \phi(x- y) F(y) \, dy =  \int_{\RR} \Phi(x - y) \, dF(y) = \int_0^1 \Phi(x - Q(y)) \, dy.
\end{equation}
The right-hand side in \eqref{eq:Acal.innerpart} inherits from $\Phi$ the property of being a smooth, strictly increasing bijection from $\RR$ to $(0,1)$.
Thus, by composing $Q_\nu$ with \eqref{eq:Acal.innerpart} we obtain an increasing function on $\RR$ taking values in $\supp(\nu)$ with
\begin{align} \label{eq:Acal.innerpart.lim-}
    \lim_{x \to -\infty} Q_\nu  \left( \int_0^1 \Phi(x - Q(y)) \, dy \right) = \inf_{u \in (0,1)} Q_\nu(u), \\
    \label{eq:Acal.innerpart.lim+}
    \lim_{x \to +\infty} Q_\nu  \left( \int_0^1 \Phi(x - Q(y)) \, dy \right) = \sup_{u \in (0,1)} Q_\nu(u).
\end{align}

We say that a measure $\eta \in \Pcal(\RR)$ is symmetric if its quantile function $Q_\eta$ is symmetric around $1/2$, i.e., $Q_\eta(1/2 + u) = Q_\eta(1/2 - u)$ for all $u \in (0,1/2)$. Similarly, we call a CDF $F$ symmetric if the associated distribution is symmetric.

\begin{lemma} \label{lem:Acal.properties}
    Let $F, G \in \CDF$ and $\nu \in \mathcal P(\RR)$ be compactly supported. The operator $\mathcal A$ satisfies:
    \begin{enumerate}[label = (\roman*)]
        \item \label{it:A.range} {Range}: Assuming \ref{it:A2}, we have ${\rm range}(\mathcal A) \subseteq {\rm CDF_c}$.
        \item \label{it:A.shift} {Shift-invariance}: $(\mathcal A F(\cdot + c))(x)=(\mathcal A F)(x + c)$ for every $c \in \mathbb R$.
		\item \label{it:A.monotone} {Monotonicity}: If $F \leq G$, then $\mathcal A F \leq \mathcal A G$. 
        \item \label{it:A.symmetry} {Symmetry}: If $\mu$, $\nu$, $F$ are symmetric so is $\Acal F$.
        Further, the medians of $F$ and $\Acal F$ coincide.
    \end{enumerate}
\end{lemma}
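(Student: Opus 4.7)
The plan is to verify the four properties by peeling the composition
\[
\Acal F = F_\mu \circ g, \qquad g := \phi \ast \bigl(Q_\nu \circ (\phi \ast F)\bigr),
\]
into its four ingredients -- convolution with $\phi$, composition with $Q_\nu$, a second convolution with $\phi$, composition with $F_\mu$ -- and tracking each property stage by stage.

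For \ref{it:A.range}, I would first observe that $g$ is smooth and non-decreasing: smoothness comes from the outer convolution with $\phi$, while monotonicity is inherited at every stage because $\phi \ge 0$ and both $Q_\nu$ and $F$ are non-decreasing. The analytic crux is to compute $\lim_{x \to \pm\infty} g(x)$. The inner function $Q_\nu \circ (\phi \ast F)$ has limits $\inf \supp(\nu)$ and $\sup \supp(\nu)$ at $\mp\infty$ by \eqref{eq:Acal.innerpart.lim-} and \eqref{eq:Acal.innerpart.lim+}; since $\nu$ is compactly supported this function is bounded, so dominated convergence transfers these limits through the outer convolution to $g$. Assumption \ref{it:A2} then provides strict separation $\inf\supp(\nu) < \inf\supp(\mu)$ and $\sup\supp(\mu) < \sup\supp(\nu)$, so for $|x|$ sufficiently large the value $g(x)$ lies outside $\supp(\mu)$, forcing $\Acal F(x) \in \{0,1\}$ and yielding compact support. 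That $\Acal F$ is a genuine CDF -- non-decreasing, right-continuous, with the correct limits -- then follows from continuity and monotonicity of $g$ combined with right-continuity of $F_\mu$.

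Parts \ref{it:A.shift} and \ref{it:A.monotone} are formal. Shift-invariance follows from the identity $(\phi \ast F(\cdot + c))(x) = (\phi \ast F)(x + c)$, which propagates unchanged through $Q_\nu$, the second convolution and $F_\mu$. Monotonicity holds because each of the four ingredients is order-preserving.

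For \ref{it:A.symmetry}, by a simultaneous change of coordinates I may assume the common median of the symmetric $\mu$ and $\nu$ equals $0$; using \ref{it:A.shift}, it then suffices to prove that if $F$ is symmetric about $0$ then $\Acal F$ is symmetric about $0$. This is a symbol-by-symbol computation: evenness of $\phi$ together with $F(-x) = 1 - F(x)$ gives $(\phi \ast F)(-x) = 1 - (\phi \ast F)(x)$; the identity $Q_\nu(1-u) = -Q_\nu(u)$ coming from symmetry of $\nu$ then yields $(Q_\nu \circ \phi \ast F)(-x) = -(Q_\nu \circ \phi \ast F)(x)$; a second convolution with the even $\phi$ preserves oddness; and finally $F_\mu(-y) = 1 - F_\mu(y)$ gives $\Acal F(-x) = 1 - \Acal F(x)$, i.e.\ symmetry of $\Acal F$ about $0$ with median $0$. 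Undoing the shift via \ref{it:A.shift} shows in general that $\Acal F$ is symmetric about the median of $F$, from which the last claim follows. I expect the main obstacle to lie in \ref{it:A.range}: the ingredients are standard, but one has to be careful both in passing the limits at $\pm\infty$ through the outer convolution (using only compactness of $\supp(\nu)$, not the stronger \ref{it:A3}) and in translating \ref{it:A2} into the strict boundary inequalities needed to land outside $\supp(\mu)$; the other three parts reduce to essentially one-line verifications once the appropriate identity for each ingredient is identified.
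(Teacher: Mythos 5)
Your proposal is correct and follows essentially the same route as the paper: the same stage-by-stage decomposition of $\Acal$, the same use of \eqref{eq:Acal.innerpart.lim-}--\eqref{eq:Acal.innerpart.lim+} plus dominated convergence and Assumption \ref{it:A2} for the range, and the same symmetry identities for \ref{it:A.symmetry} (the paper carries the medians $m_1,m_2$ explicitly where you normalise them to $0$, a purely cosmetic difference). The only point to phrase carefully in \ref{it:A.range} is that $g(x)$ landing ``outside $\supp(\mu)$'' should be sharpened to $g(x)<\min(\supp(\mu))$ for $x$ small and $g(x)>\max(\supp(\mu))$ for $x$ large, which is exactly what the strict inequalities from \ref{it:A2} give you.
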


\begin{proof}
We show the first statement in \ref{it:A.range}.
Let $F \in {\rm CDF}$ and denote its quantile function by $Q$.
Write $H$ for $Q_\nu$ composed with the right-hand side in \eqref{eq:Acal.innerpart} and note that, as $\nu$ is compactly supported by assumption, $H$ is bounded.
Since $F_\mu$, $H$ as well as $\phi \ast H$ are all increasing functions, the same holds true for $\mathcal AF = F_\mu \circ (\phi \ast H)$.
Further, since $\phi \ast H$ is continuous and $F_\mu$ is right-continuous, we conclude that $\mathcal AF$ is also right-continuous.
Using \eqref{eq:Acal.innerpart.lim-}, \eqref{eq:Acal.innerpart.lim+}, and dominated convergence we find
\begin{equation} \label{eq:lem:Acal.properties.1}
    \lim_{x \to -\infty} \phi \ast H(x) = \inf_{(0,1)} Q_\nu(u) \text{ and } \lim_{x \to +\infty} \phi \ast H(x)= \sup_{(0,1)} Q_\nu(u).
\end{equation}
As $\phi \ast H$ is strictly increasing and continuous, there exist unique points $x_-, x_+ \in \RR$ with 
\begin{align*}
    \min(\supp(\nu)) < \phi \ast H(x_-) &= \min(\supp(\mu)), \\ 
    \max(\supp(\nu)) > \phi \ast H(x_+) &= \max(\supp(\mu)).
\end{align*}
Hence, we have $\Acal F(x_-) = 0$ and $\Acal F(x_+) = 1$, which yields that $\Acal F$ can be identified as the CDF of a measure concentrated on the compact interval $[x_-,x_+]$.
    
To see \ref{it:A.shift}, fix $c,x \in \mathbb R$ and observe that $(\phi \ast F)(x+c) = \int_\mathbb R F(x+c-z)\, \phi(z)dz$.
   Then
    \begin{align*}
        (\mathcal A F)(x + c) 
        &= F_\mu\left(\phi \ast \left(Q_\nu(\phi \ast F)\right)(x+c)\right) 
        \\
        &= F_\mu \left ( \int_\mathbb R Q_\nu((\phi \ast F)(x+c-y)) \phi(y) \, dy \right)
        \\
        &= F_\mu \left( \int_\mathbb R Q_\nu\left(  \int_\mathbb R F(x + c - y - z) \phi(z) \, dz\right) \phi(y) \, dy \right),
    \end{align*}
    which shows the claim, that is $(\mathcal A F(\cdot + c))(x)=(\mathcal A F)(x + c)$.

    To see \ref{it:A.monotone}, note that $F_\mu$ and $Q_\nu$ are both increasing. 
    Furthermore, the convolution with a Gaussian is an order-preserving operator
    , i.e. if $F \leq G$, we have $\phi \ast F \le \phi \ast G$.
    Therefore, all operations involved in the definition of $\mathcal A$ preserve pointwise the order, which yields the claim.

    Finally we show \ref{it:A.symmetry}:
    Let $\mu$, $\nu$ and the distribution with CDF $F$ be symmetric. 
    Write $m_1$ for the median of $F$ and $m_2$ for the median of $\mu$ and $\nu$.
    By symmetry of the distributions we have
    \begin{align}
        \label{eq:Acal.props.4.1}
        F(x) + F(2 m_1 - x) = 1, \\
        \label{eq:Acal.props.4.2}
        \phi \ast F(x) + \phi \ast F(2 m_1 - x) = 1, \\
        \label{eq:Acal.props.4.3}
        F_\mu(x) + F_\mu(2 m_2 - x) = 1, \\
        \label{eq:Acal.props.4.4}
        Q_\nu(u) = 2m_2 - Q_\nu(1 - u),
    \end{align}
    for all $x \in \RR$, $u \in (0,1)$.
    Using these identities we compute
    \begin{align*}
        \Acal F(x) &= 1 - F_\mu \left( 2 m_2 - \phi \ast Q_\nu \left( \phi \ast F \right)(x) \right) \\
        &= 1 - F_\mu \left( 2 m_2 - \int_\RR  Q_\nu \left( \phi \ast F(x - y) \right) \phi(y) \, dy \right) \\
        &= 1 - F_\mu \left( \int_\RR Q_\nu \left( 1 - \phi \ast F(x - y) \right) \phi(y) \, dy \right) \\
        &= 1 - F_\mu \left( \int_\RR  Q_\nu \left( \phi \ast F( 2 m_1 - x + y)\right) \phi(y) \, dy \right) \\
        &= 1 - \Acal F(2 m_1 - x),
    \end{align*}
    where the first equality is due to \eqref{eq:Acal.props.4.3}, the third due to \eqref{eq:Acal.props.4.4}, the fourth follows from \eqref{eq:Acal.props.4.2}, and the last by symmetry of $\phi$.
    We have shown that $\Acal F$ is symmetric as well as that $m_1$ is its median.
\end{proof}

In order to prove regularity of the operator $\Acal$ (cf. Proposition~\ref{mainProposition} below) and Theorem~\ref{thm:regularity}, we  study the following auxiliary functions.

\begin{definition}
 Under Assumption \ref{ass:A1&2}, define, for $Q \in L^0(0,1)$, the functions
\begin{align}
\label{aux_operator_def1}
	S_Q(x) &:= \int_\RR Q_\nu \left( \int_0^1 \Phi(x-Q(y)-z)dy \right)\phi(z)dz,\quad x \in \mathbb R,\\
 T_Q(x,y)& := \int_\RR \phi(x-y-z) Q_\nu'\left( \int_0^1 \Phi(x-Q(w)-z)dw \right) \phi(z) \, dz,\quad x,y \in \mathbb R,\nonumber
\end{align}
and, for $Q,f\in L^\infty(0,1)$, the function 
\begin{equation*}
 S_{Q, f}(\epsilon, x):=S_{Q+\epsilon f}(x),\quad x,\epsilon \in \mathbb R.
\end{equation*}

\end{definition}
The main properties of these functions are proved in Lemma~\ref{auxiliary_lemma}, Lemma~\ref{auxiliary_lemma2} and Lemma~\ref{auxiliary_lemma3}, respectively.
Intuitively speaking, the operator $Q \mapsto S_{Q}$ maps the random variable $Q\in L^0(0,1)$ with law $\alpha$ to $\phi \ast T_{\alpha \ast \gamma, \nu}$, where $T_{\alpha\ast\gamma, \nu}$ is the unique monotone map from $\alpha\ast\gamma$ to $\nu$.
Further, we introduce the operator $(Q,f) \mapsto S_{Q, f}$ to investigate the sensitivity of $S_Q$ under perturbations of $Q$ by $\epsilon f$, where $f \in L^\infty(0,1)$ and $\epsilon > 0$. 
Importantly, the properties of $Q\mapsto T_Q$ which are established in Lemma \ref{auxiliary_lemma3}, will be key to showing Proposition \ref{mainProposition}.

\begin{lemma}
    \label{auxiliary_lemma}
    Under Assumption \ref{ass:A1&2}, let $Q, \tilde Q \in L^0(0,1)$. Then the following statements hold:
    \begin{enumerate}[label = (\roman*)]  
     \item \label{it:auxiliary_lemma.1} $S_Q$ is smooth and strictly increasing with ${\rm range}(S_Q) = \interior (\co(\supp \nu))$.
        \item \label{it:auxiliary_lemma.2} $S_Q'$ is strictly positive, $(x,Q) \mapsto S_Q'(x)$ is continuous on $\RR \times L^0(0,1)$, and 
        \[ 
            \|S_Q' - S_{\tilde Q}'\|_\infty \le L\|Q - \tilde Q\|_\infty.
        \]
        \item \label{it:auxiliary_lemma.3} If $(Q_k)_{k \in \NN} \subseteq L^0(0,1)$ converges in probability to $Q$, then
        \begin{align*}
            \lim_{k \to \infty} S_{Q_k}^{-1} = S_{Q}^{-1}\quad  \text{locally uniformly in }\interior(\co(\supp(\nu))).
        \end{align*}        
    \end{enumerate}
    In particular, for every $R > 0$ there exist $\epsilon_S(R), \delta_S(R) > 0$ such that for all $(x,Q) \in \RR \times L^\infty(0,1)$ with $|x|, \|Q\|_\infty \le R$ we have
    \begin{equation}
        \label{eq:auxiliary_lemma.inparticular}
        \epsilon_S(R) \le S'_Q(x) \le \delta_S(R).
    \end{equation}
\end{lemma}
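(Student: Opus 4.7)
The strategy is to factor $S_Q$ through the convolution $S_Q = \phi \ast (Q_\nu \circ H_Q)$, where $H_Q(x) := \int_0^1 \Phi(x-Q(y))\,dy$ is the inner object already studied in \eqref{eq:Acal.innerpart}--\eqref{eq:Acal.innerpart.lim+}: $H_Q \colon \RR \to (0,1)$ is a smooth, strictly increasing bijection, and Assumptions \ref{it:A1} and \ref{it:A3} make $Q_\nu \colon (0,1) \to (a,b) := \interior(\co(\supp \nu))$ a continuous, strictly increasing, $L$-Lipschitz bijection (with $L$ as defined after Assumption \ref{ass:A1&2}). For \ref{it:auxiliary_lemma.1}, smoothness of $S_Q$ is immediate from $\phi \in C^\infty$ and boundedness of $Q_\nu \circ H_Q$; strict monotonicity follows since $\phi > 0$ and $Q_\nu \circ H_Q$ is strictly increasing; identification of the range reduces, by dominated convergence, to the boundary limits \eqref{eq:Acal.innerpart.lim-}--\eqref{eq:Acal.innerpart.lim+}.

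For \ref{it:auxiliary_lemma.2}, I would use the two equivalent forms $S_Q' = \phi \ast (Q_\nu \circ H_Q)' = \phi' \ast (Q_\nu \circ H_Q)$. Strict positivity is transparent from the first form. For joint continuity on $\RR \times L^0(0,1)$, if $(x_k, Q_k) \to (x, Q)$, then $\Phi(x_k - z - Q_k(y)) \to \Phi(x - z - Q(y))$ in probability and dominated convergence (with integrand in $[0,1]$) yields $H_{Q_k}(x_k - z) \to H_Q(x - z)$ pointwise in $z$; continuity of $Q_\nu$ and another DCT dominated by $\|Q_\nu\|_\infty |\phi'(z)|$ deliver $S_{Q_k}'(x_k) \to S_Q'(x)$. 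The Lipschitz estimate then comes from the second form:
\[
\|S_Q' - S_{\tilde Q}'\|_\infty \le \|\phi'\|_1 \, \|Q_\nu \circ H_Q - Q_\nu \circ H_{\tilde Q}\|_\infty \le \|\phi'\|_1 \cdot L \cdot \|\phi\|_\infty \|Q - \tilde Q\|_\infty = \tfrac{L}{\pi} \|Q - \tilde Q\|_\infty,
\]
using $\|\Phi(\cdot - q) - \Phi(\cdot - \tilde q)\|_\infty \le \|\phi\|_\infty |q - \tilde q|$ pointwise in $q,\tilde q$ and $\|\phi'\|_1 \|\phi\|_\infty = 1/\pi \le 1$.

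For \ref{it:auxiliary_lemma.3}, the family $\{S_{Q_k}\}$ is equi-Lipschitz with constant $L\|\phi\|_\infty$ (independent of $k$) and converges pointwise to $S_Q$ by the same DCT argument as above, so the convergence is locally uniform on $\RR$; a standard elementary argument then transfers this to the inverses on compacts of $(a,b)$, using that each $S_Q$ is a homeomorphism $\RR \to (a,b)$. For the uniform bound \eqref{eq:auxiliary_lemma.inparticular}, the key trick is the change of variables $u = H_Q(y)$ in $S_Q'(x) = \int \phi(x - y)(Q_\nu \circ H_Q)'(y)\,dy$, which produces the clean identity
\[
S_Q'(x) = \int_0^1 \phi(x - H_Q^{-1}(u))\, Q_\nu'(u)\, du,
\]
from which the upper bound $\delta_S(R) := L \|\phi\|_\infty$ is immediate. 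The main obstacle is the lower bound: under \ref{it:A3} the density of $\nu$ need not be bounded above, so $Q_\nu'$ need not be bounded away from zero, and I cannot argue pointwise. Instead, I would exploit a fixed bulk of the unit interval by restricting the integral to $u \in [1/4, 3/4]$; the quantile sandwich $\Phi(y - R) \le H_Q(y) \le \Phi(y + R)$ (valid whenever $\|Q\|_\infty \le R$) forces $|x - H_Q^{-1}(u)| \le 2R + \Phi^{-1}(3/4)$ uniformly in such $(x,Q)$ and $u$, yielding $\epsilon_S(R) := \phi\bigl(2R + \Phi^{-1}(3/4)\bigr)\,(Q_\nu(3/4) - Q_\nu(1/4)) > 0$.
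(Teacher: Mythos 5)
Your proof is correct and follows the paper's overall decomposition $S_Q=\phi\ast(Q_\nu\circ H_Q)$, but it deviates from the paper in two places, both to good effect. First, for the continuity of $(x,Q)\mapsto S_Q'(x)$ you work with the representation $S_Q'=\phi'\ast(Q_\nu\circ H_Q)$ and only ever use continuity and boundedness of $Q_\nu$; the paper instead differentiates the inner function, writes $S_Q'=\phi\ast H'$ with $H'=Q_\nu'(H_Q)\cdot H_Q'$, and needs its auxiliary Lemma~\ref{lem:aux.ToP} together with a probabilistic argument (that $\phi\ast F_\alpha(B_1)$ is uniform, so $Q_\nu'$ evaluated there is a.s.\ positive) to control the composition with $Q_\nu'$. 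Your route is cleaner here; the only point you should make explicit is why "strict positivity is transparent": $(Q_\nu\circ H_Q)'\ge 0$ only $\lambda$-a.e., so you need that $Q_\nu\circ H_Q$ is strictly increasing \emph{and} Lipschitz (hence absolutely continuous), giving $\int_a^b (Q_\nu\circ H_Q)'>0$ on every interval, which then makes $\phi\ast(Q_\nu\circ H_Q)'>0$ everywhere. Second, and more substantially, your proof of the uniform bounds \eqref{eq:auxiliary_lemma.inparticular} is genuinely different: the paper reduces to $Q\in\QF$, invokes compactness of $\{Q\in\QF:\|Q\|_\infty\le R\}$ in $L^0(0,1)$ via Prokhorov, and deduces the bounds from joint continuity and strict positivity on a compact set -- a soft argument with no explicit constants. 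Your change of variables $u=H_Q(y)$, giving $S_Q'(x)=\int_0^1\phi(x-H_Q^{-1}(u))Q_\nu'(u)\,du$, combined with the sandwich $\Phi(y-R)\le H_Q(y)\le\Phi(y+R)$ and restriction to $u\in[1/4,3/4]$, correctly sidesteps the fact that \ref{it:A3} gives no pointwise lower bound on $Q_\nu'$, and yields the explicit constants $\epsilon_S(R)=\phi(2R+\Phi^{-1}(3/4))(Q_\nu(3/4)-Q_\nu(1/4))$ and $\delta_S(R)=L\|\phi\|_\infty$. This buys quantitative information that the paper's compactness argument does not provide (and which would be relevant for the convergence rates discussed after Proposition~\ref{prop:lin}), at the cost of a slightly longer computation; the remaining items \ref{it:auxiliary_lemma.1} and \ref{it:auxiliary_lemma.3} are handled essentially as in the paper.
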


\begin{remark}[Continuity of $\Acal$]\label{rem:auxiliary_lemma}
    We remark that if $Q \in \QF$, then $S_Q=\phi \ast ( Q_\nu \circ ( \phi \ast F ))$ where $F=Q^{-1} \in \CDF$.
    Moreover, we have $(\Acal F)^{-1} = S_Q^{-1} \circ Q_\mu$ by \eqref{Acal_intro}.
    Let $(Q_k)_{k \in \NN}$ be a sequence in $\QF$ that converges in probability to $Q$.
    Denote by $F_k$ resp.\ $F$ the CDF associated with $Q_k$ resp.\ $Q$.
    Then it follows from Lemma \ref{auxiliary_lemma} \ref{it:auxiliary_lemma.3} that $(\Acal F_k)^{-1} = S_{Q_k}^{-1} \circ Q_\mu \to S_Q^{-1} \circ Q_\mu = (\Acal F)^{-1}$ at all continuity points of $Q_\mu$.
    Since $Q_\mu$ is increasing, this means that this convergence holds in probability.
\end{remark}

\begin{proof}[Proof of Lemma~\ref{auxiliary_lemma}]
    \ref{it:auxiliary_lemma.1}: 
    Since $\nu$ is compactly supported, $Q_\nu$ is bounded and measurable.
    Denote by $H$ the composition of $Q_\nu$ with the right-hand side of \eqref{eq:Acal.innerpart} and note that $S_Q = \phi \ast H$.
    In the proof of Lemma \ref{lem:Acal.properties} \ref{it:A.range},  
    we have shown that $\phi \ast H$ is smooth and strictly increasing.
    Thus, we deduce from \eqref{eq:lem:Acal.properties.1} that the range of $S_Q = \phi \ast H$ is $(\min(\supp(\nu)), \max(\supp(\nu))$, where the boundary is not attained due to Assumption \ref{it:A1}.

        \ref{it:auxiliary_lemma.2}: We have to show that $S_Q'$ is strictly positive.
        To this end, note that
        \begin{equation}
            \label{eq:auxiliary_lemma.H'}
            H'(x) =
            Q_\nu' \left( \int_0^1 \Phi(x-Q(w)) \, dw \right) \int_0^1 \phi(x-Q(w)) \,dw.
        \end{equation}
        Due to \ref{it:A1} we have that $F_\nu$ is continuous and differentiable.
        Using \ref{it:A3} we can apply the inverse function rule and find a constant $L > 0$ with $0 < Q'_\nu \le L$.
        Hence, $\|H'\|_\infty \le L$, which permits us to exchange differentiation with integration in
        \begin{align*}
            S'_Q(x) &= \frac{\partial}{\partial x}
            \int H(x - y) \phi(y) \, dy 
            = \int H'(x-y) \phi(y) \, dy.
        \end{align*}
        In order to conclude that $S'_Q > 0$, we will show $H'(x) > 0$ $dx$-almost everywhere.
        By \eqref{eq:auxiliary_lemma.H'} it remains to establish that the first factor $Q_\nu'(\int_0^1 \Phi(x - Q(w)) \, dw)$ is strictly positive for $dx$-almost every $x$.
        To see this, we define $\alpha$ as the law of $Q(U)$, where $U$ is uniformly distributed on $(0,1)$.
        Let $B_1 \sim \gamma \ast \alpha$.
        Since $\gamma \ast \alpha$ is equivalent to the Lebesgue measure, it suffices to prove 
        \[
            Q_\nu'\left( \int_0^1 \Phi(B_1 - Q(w)) \right) = Q_\nu'\left( \phi \ast F_\alpha (B_1)  \right)> 0\quad \text{a.s.,}
        \]
        where the equality stems from \eqref{eq:Acal.innerpart}.
        Using the fact that $( \phi \ast F_\alpha)(B_1)$ is uniformly distributed on $(0,1)$ and $Q_\nu'(U) > 0$ a.s., yields the claim.

    Let $\tilde Q \in L^0(0,1)$ and write $\tilde H$ for $Q_\nu$ composed with the right-hand side of \eqref{eq:Acal.innerpart} (adequately replacing $Q$ by $\tilde Q$).
    By the previous part we have that $Q_\nu$ is $L$-Lipschitz.
    Hence, as $\Phi$ is 1-Lipschitz we get  for any $x \in \RR$
    \[
        |H(x) - \tilde H(x)| \le L \int_{0}^1 |\Phi(x - Q(w)) - \Phi(x - \tilde Q(w))| \, dw \le L \|Q - \tilde Q\|_\infty.
    \]
    Consequently, we have
    \begin{align}
        \label{eq:auxiliary_lemma.it.3.2}
        |S_Q'(x) - S_{\tilde Q}'(x)| &= |\phi' \ast (H - \tilde H)(x)| \le L \|Q- \tilde Q\|_\infty  \int_\RR |\phi'(x-y)| \, dy \le L\|Q - \tilde Q\|_\infty,
    \end{align}
    where the last inequality follows from the identity $\phi'(x) = - x \phi(x)$ and the fact that the first absolute moment of $\phi$ is less than 1.
    Moreover, if $(Q_k)_{k \in \NN}$ is a sequence in $L^0(0,1)$ that converges in probability to $Q$,
    then it follows from Lemma \ref{lem:aux.ToP} that
    \[
        \lim_{k \rightarrow \infty}  Q_\nu'\left( \int_0^1 \Phi(\cdot - Q_k(w)) \, dw \right)  = \lim_{k \rightarrow \infty}  Q_\nu' \circ F_{\gamma \ast \alpha_k} = Q_\nu' \circ F_{\gamma \ast \alpha}  =
        Q_\nu'\left( \int_0^1 \Phi(\cdot - Q(w)) \, dw \right)
    \]
 in $L^1(\RR; \gamma \ast \alpha)$, where $\alpha_k$ denotes the distribution whose quantile function is $Q_k$. Therefore, we can conclude that 
    \begin{equation*}
        H'_k(x) :=
            Q_\nu' \left( \int_0^1 \Phi(x-Q_k(w)) \, dw \right) \int_0^1 \phi(x-Q_k(w)) \,dw \xrightarrow[k \rightarrow \infty]{} H'(x)
    \end{equation*}
    in $L^1(\RR; \gamma \ast \alpha)$. As $\gamma \ast \alpha$ is equivalent to the Lebesgue measure, we have $\lim_{k \to \infty} S'_{Q_k}(x) = \lim_{k \to \infty} \int H'_k(x-y) \phi(y) \, dy = S'_Q(x)$  by dominated convergence theorem. This yields continuity of $(x,Q) \mapsto S_Q'(x)$ on $\RR \times L^0(0,1)$.

    \ref{it:auxiliary_lemma.3}
    Since $S_Q$ is strictly increasing and continuous, the same holds true for $S_Q^{-1}$.
    For any sequence $(Q_k)_{k \in \NN}$ that converges in probability to $Q$,
    we have by dominated convergence and Lipschitz continuity of $Q_\nu$ that
    \[
        Q_\nu \left( \int_0^1 \Phi(x - Q_k(w)) \, dw \right) \to
        Q_\nu \left( \int_0^1 \Phi(x - Q(w)) \, dw \right),
    \]
    for every $x \in \RR$.
    Using Assumption \ref{it:A3}, we can once more apply dominated convergence to find that $S_{Q_k} \to S_Q$ pointwise.
    Thus, by standard arguments (see, for example, \cite[Proof of Proposition 5 in Chapter 14]{FrGr13}), $(S_{Q_k}^{-1})_{k \in \NN}$ converges pointwise to $S_Q^{-1}$ at all continuity points of $S_Q^{-1}$.
    But, as $S_Q^{-1}$ is continuous, this simply means that $S_{Q_k}^{-1} \to S_Q^{-1}$ pointwise.
    Finally, we note that pointwise convergence of a sequence of increasing functions to a continuous function entails locally uniform convergence.

    Finally, to see the last assertion, we first observe that if $Q_\# \lambda = \tilde Q_\# \lambda$ then also $S_Q = S_{\tilde Q}$.
    Therefore, we can restrict without loss of generality to $Q \in \QF$.
    The set $\{ Q \in \QF : \|Q\|_\infty \le R\}$ is compact in $L^0(0,1)$.
    Indeed, the map $Q \mapsto Q_\# \lambda$ is a homeomorphism between $\QF$ and $\Pcal(\RR)$.
    By Prokhorov's theorem $\Pcal([-R,R])$ is compact, which means that $\{ Q \in \QF : \|Q\|_\infty \le R\}$ is compact in $L^0(0,1)$.
    We conclude \eqref{eq:auxiliary_lemma.inparticular} by \ref{it:auxiliary_lemma.2}.
\end{proof}

\begin{lemma} \label{auxiliary_lemma2}
    Under Assumption \ref{ass:A1&2}, let $Q,f \in L^\infty(0,1)$. Then the following statements hold:
    \begin{enumerate}[label = (\roman*)]
        \item \label{it:aux_lem2.1} $S_{Q,f} \in C^1(\RR^2)$ with $\partial_x S_{Q,f} > 0$.
        \item \label{it:aux_lem2.2} If $(Q_k)_{k \in \NN}$ converges to $Q$ in $L^\infty(0,1)$, then $\partial_\epsilon S_{Q_k,f} \to \partial_\epsilon S_{Q,f}$ in $L^\infty(\RR^2)$.
        \item If $(Q_k)_{k \in \NN}$ converges in probability to $Q$, then $\partial_\epsilon S_{Q_k,f} \to \partial_\epsilon S_{Q,f}$ uniformly on compacts.
    \end{enumerate}
\end{lemma}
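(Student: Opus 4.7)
My plan is to compute both partial derivatives of $S_{Q,f}$ explicitly, use them to establish (i), and then exploit the resulting representation of $\partial_\epsilon S_{Q,f}$ to derive (ii) and (iii). The $x$-derivative is immediate since $\partial_x S_{Q,f}(\epsilon,x) = S'_{Q+\epsilon f}(x)$, which Lemma~\ref{auxiliary_lemma}\ref{it:auxiliary_lemma.2} guarantees to be strictly positive and jointly continuous in $(x, Q+\epsilon f) \in \RR \times L^0(0,1)$; composition with the continuous map $\epsilon \mapsto Q+\epsilon f$ from $\RR$ into $L^\infty(0,1) \hookrightarrow L^0(0,1)$ delivers joint continuity on $\RR^2$ and strict positivity. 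For the $\epsilon$-derivative, I would differentiate the double integral defining $S_{Q,f}$ via Leibniz's rule, justified by dominated convergence using the $L$-Lipschitz continuity of $Q_\nu$ (from Assumption~\ref{it:A3}), which controls the difference quotient by $L\|f\|_\infty \|\phi\|_\infty$. This produces
\[
    \partial_\epsilon S_{Q,f}(\epsilon, x) = -\int_\RR Q'_\nu\bigl(V(\epsilon, x, z)\bigr)\,A(\epsilon, x, z)\,\phi(z)\, dz,
\]
where $V(\epsilon, x, z) = \int_0^1 \Phi(x - z - (Q+\epsilon f)(w))\, dw$ and $A(\epsilon, x, z) = \int_0^1 \phi(x - z - (Q+\epsilon f)(w))\, f(w)\, dw$.

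The delicate point is continuity of $\partial_\epsilon S_{Q,f}$, since $Q'_\nu$ is only essentially bounded. To sidestep possible discontinuities of $Q'_\nu$, I would change variables $u = V(\epsilon, x, z)$, recasting the formula as
\[
    \partial_\epsilon S_{Q,f}(\epsilon,x) = -\int_0^1 Q'_\nu(u)\, J_Q(\epsilon, x, u)\, du,
\]
with $J_Q(\epsilon, x, u) = \phi\bigl(x - y_\epsilon^Q(u)\bigr)\, K_\epsilon^Q\bigl(y_\epsilon^Q(u)\bigr)$, where $y_\epsilon^Q$ is the inverse of $y \mapsto \beta_\epsilon^Q(y) := \int_0^1 \Phi\bigl(y - (Q+\epsilon f)(w)\bigr)\, dw$ and
\[
    K_\epsilon^Q(y) = \frac{\int_0^1 \phi\bigl(y-(Q+\epsilon f)(w)\bigr)f(w)\,dw}{\int_0^1 \phi\bigl(y-(Q+\epsilon f)(w)\bigr)\,dw}
\]
is the conditional-expectation factor coming from the Jacobian. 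Since $|K_\epsilon^Q| \le \|f\|_\infty$ and $\phi \le \|\phi\|_\infty$, the integrand is uniformly bounded by $L\|f\|_\infty\|\phi\|_\infty$, while $y_\epsilon^Q$ and $K_\epsilon^Q$ depend continuously on $\epsilon$ pointwise in $u$. Dominated convergence in $u$ then yields joint continuity of $\partial_\epsilon S_{Q,f}$ on $\RR^2$, completing (i).

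For (ii) and (iii) I would leverage the same representation. When $Q_k \to Q$ in $L^\infty(0,1)$, Lipschitz continuity of $\Phi$ gives $\|\beta_\epsilon^{Q_k} - \beta_\epsilon^Q\|_\infty \le \|Q_k - Q\|_\infty$ uniformly in $\epsilon$, and analogous Lipschitz estimates propagate to the numerator and denominator of $K_\epsilon^Q$. Transferring these bounds through the inverse $y_\epsilon^{Q_k}(u)$ and composing, one checks that $J_{Q_k} \to J_Q$ in a sense sufficient, after integration against the bounded $Q'_\nu$, to yield $\|\partial_\epsilon S_{Q_k,f} - \partial_\epsilon S_{Q,f}\|_{L^\infty(\RR^2)} \to 0$, proving (ii). For (iii), convergence in probability of $Q_k \to Q$ gives pointwise (in $u$) convergence of $J_{Q_k}$ to $J_Q$ via Lemma~\ref{lem:aux.ToP}, and together with the same uniform envelope, dominated convergence in $u$ yields the desired uniform convergence on compact $(\epsilon, x)$-sets. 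The principal technical obstacle is the $L^\infty(\RR^2)$-uniformity in (ii): as $u$ approaches $0$ or $1$, $y_\epsilon^Q(u)$ may tend to $\pm \infty$ and $y_\epsilon^{Q_k}(u) - y_\epsilon^Q(u)$ need not be controlled by $\|Q_k - Q\|_\infty$; this pathology must be absorbed by the rapid decay of the Gaussian factor $\phi(x - y_\epsilon^Q(u))$ for large $|y_\epsilon^Q(u)|$ in order to recover uniformity over the full parameter space.
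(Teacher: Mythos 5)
Your derivative formula is correct (it agrees with the paper's \eqref{eq:auxiliary_lemma.it.4.1} after the substitution $y = x-z$), and your overall architecture is sound; where you genuinely diverge from the paper is in how the essential boundedness-only regularity of $Q_\nu'$ is handled. The paper keeps the integral over $\RR$, views $g_{\tilde Q}(\tilde\epsilon,\cdot)$ as the CDF of $B_1 + \tilde Q(U) + \tilde\epsilon f(U)$, and invokes Lemma~\ref{lem:aux.ToP} (a Lusin--Tietze argument for total-variation convergent laws) to get convergence in measure of $Q_\nu'\circ g_{\tilde Q}(\tilde\epsilon,\cdot)$, then passes to $L^1(\RR)$ convergence of $\partial_\epsilon(Q_\nu\circ g_{\tilde Q})$ and uses $\|\phi\|_\infty\le 1$ to get a bound independent of $x$. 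You instead change variables $u=\beta_\epsilon^Q(y)$ so that $Q_\nu'$ is evaluated at the frozen variable $u$ and all the $(\epsilon,x,Q)$-dependence is pushed into a continuous, uniformly bounded kernel $J_Q$. Your route avoids Lemma~\ref{lem:aux.ToP} entirely and makes part (i) a clean dominated-convergence argument; the price is that you must control the inverse $y_\epsilon^Q=(\beta_\epsilon^Q)^{-1}$, which is where your remaining difficulty lives. Both are legitimate; the paper's version keeps the estimates translation-friendly, yours isolates the irregular object once and for all.

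There is one concrete soft spot, in part (ii). You correctly identify that $y_\epsilon^{Q_k}(u)-y_\epsilon^{Q}(u)$ is not controlled by $\|Q_k-Q\|_\infty$ near $u\in\{0,1\}$ (the derivative of $\beta_\epsilon^Q$ degenerates there), but the rescue you propose --- decay of the factor $\phi(x-y_\epsilon^Q(u))$ for large $|y_\epsilon^Q(u)|$ --- does not work, because in the $L^\infty(\RR^2)$ norm $x$ ranges over all of $\RR$ and may sit exactly at $y_\epsilon^Q(u)$, where $\phi(x-y_\epsilon^Q(u))=\phi(0)$. The correct mechanism is the smallness of the $u$-measure of the tail: since $\|Q_k\|_\infty,\|Q\|_\infty,|\epsilon f|$ need not be uniformly bounded in $\epsilon$ but are for each fixed $\epsilon$-range --- and in any case $\beta_\epsilon^Q(Y)\ge\Phi(Y-\|Q+\epsilon f\|_\infty)$ --- the set $\{u: |y_\epsilon^Q(u)|>Y\}$ has Lebesgue measure at most $1-\beta_\epsilon^Q(Y)+\beta_\epsilon^Q(-Y)$, and the integrand is bounded by $L\|f\|_\infty\|\phi\|_\infty$ there; equivalently, undoing the change of variables, the tail contribution is $\int_{|y|>Y}\phi(x-y)|K_\epsilon^Q(y)|(\beta_\epsilon^Q)'(y)\,dy$, which is small because $\int_{|y|>Y}(\beta_\epsilon^Q)'(y)\,dy$ is a Gaussian-convolution tail, not because of $\phi(x-y)$. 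With this split (uniformly small tail plus locally Lipschitz control of $y_\epsilon^{Q_k}$ on the complementary compact region) your argument for (ii) closes; note also that genuine uniformity in $\epsilon\in\RR$ requires tracking that these tail bounds can be taken uniform in $\epsilon$, since $\|Q+\epsilon f\|_\infty$ grows linearly in $\epsilon$ --- this is handled by observing that the tail estimate only needs $Y$ large relative to $\|Q+\epsilon f\|_\infty$ and that the local Lipschitz constant of $y_\epsilon^{Q_k}$ on the corresponding window is then controlled by the same quantities. Part (iii) as you sketch it is fine, since there only compact $(\epsilon,x)$-sets are involved.
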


\begin{proof}
    Because of Lemma \ref{auxiliary_lemma} \ref{it:auxiliary_lemma.2} it remains to show that $S_{Q, f}$ is continuously differentiable w.r.t.\ $\epsilon$.
    We proceed to show this hand in hand with \ref{it:aux_lem2.2}.
    Note that
    \[
        g_Q(\epsilon,x) := \int_0^1 \Phi(x - Q(w) - \epsilon f(w)) \, dw
    \]
    is continuously differentiable with uniformly bounded derivative.
    Due to the Assumptions \ref{it:A1} and \ref{it:A3}, we have that $Q_\nu$ is differentiable on $\co(\supp(\nu))$ with bounded derivative.
    Therefore, we can exchange differentiation with integration and get
    \begin{equation}
        \label{eq:auxiliary_lemma.it.4.1}
         \frac{\partial}{\partial \epsilon} S_{Q,f}(\epsilon, x) =
        \int_\RR  \left( Q_\nu' \circ g_Q(\epsilon,y) \right) \frac{\partial}{\partial\epsilon}g_Q(\epsilon,y) \phi(x - y) \, dy 
    \end{equation}
    Clearly, this function is continuous in $x$.
    To see continuity in $\epsilon \in \RR$, let $B_1 \sim \mathcal N(0,1)$ be independent of $U \sim \text{Unif}_{[0,1]}$, and write $B_{\tilde \epsilon, \tilde Q} := B_1 + \tilde Q(U) + \tilde  \epsilon f(U)$ for $\tilde Q \in L^\infty(0,1)$.
    In fact, $g_{\tilde Q}(\tilde  \epsilon,\cdot)$ is the CDF of $B_{\tilde  \epsilon,\tilde Q}$ and $(\tilde  \epsilon, \tilde Q) \mapsto Law(B_{\tilde  \epsilon, \tilde Q})$ is continuous with domain $\RR \times L^\infty(0,1)$ and codomain $\Pcal(\RR)$ equipped with the total variation norm.
    Applying Lemma \ref{lem:aux.ToP} yields
    \[
        \lim_{(\tilde \epsilon, \tilde Q) \to (\epsilon, Q)} Q_\nu' (g_{\tilde Q}(\tilde \epsilon,\cdot)) = 
        Q_\nu' (g_Q (\epsilon,\cdot) )\quad \text{in } L^1(\RR; Law(B_{\epsilon, Q})).
    \]
    As $Law(B_{\epsilon,Q})$ is equivalent to the Lebesgue measure, this convergence holds also in probability w.r.t.\ the Lebesgue measure.
    Further, we have the upper bound
    \begin{align*}
        |\partial_\epsilon g_Q(\epsilon,x)| 
        &\le \int_0^1 | f(w) | \phi(x - Q(w) - \epsilon f(w)) \, dw
        \\
        &\le \|f\|_\infty \phi(x) \int_0^1 e^{x (Q(w)  + \epsilon f(w)) } \, dw \le \|f\|_\infty \phi(x) e^{|x| \| Q + \epsilon f\|_\infty }.        
    \end{align*}
    By dominated convergence, we have that
    \[
        \lim_{(\tilde\epsilon,\tilde Q) \to (\epsilon,Q)} \partial_\epsilon g_{\tilde Q}(\tilde \epsilon,\cdot) =
        \partial_\epsilon g_Q(\epsilon,\cdot) \quad \text{in }L^1(\RR).
    \]    
    By Assumption \ref{it:A3} we have that $Q'_\nu$ is absolutely bounded, hence, again by dominated convergence
    \begin{equation}
        \label{eq:aux_lem2.1}
        \lim_{(\tilde\epsilon,\tilde Q) \to (\epsilon,Q)} \frac{\partial}{\partial\epsilon} (Q_\nu \circ g_{\tilde Q} (\tilde \epsilon,\cdot) ) =  \frac{\partial}{\partial\epsilon} (Q_\nu \circ g_{Q} (\epsilon,\cdot) ) \quad \text{in }L^1(\RR).
    \end{equation}
    By \eqref{eq:auxiliary_lemma.it.4.1}, for every $x\in\RR$ we get
    \begin{align*}
        \left| \frac{\partial}{\partial \epsilon} S_{\tilde Q, f}(\tilde \epsilon, x) - \frac{\partial}{\partial \epsilon} S_{Q, f}(\epsilon, x) \right| &\le 
        \int_\RR \left| \frac{\partial}{\partial \epsilon} \left( Q_\nu \circ g_{\tilde Q}(\tilde \epsilon, y) - Q_\nu \circ g_{Q}(\epsilon, y) \right) \right| \, \phi(x-y) \, dy \\
        &\le \left\| \frac{\partial}{\partial \epsilon} \left( Q_\nu \circ g_{\tilde Q}(\tilde \epsilon, \cdot) - Q_\nu \circ g_{Q}(\epsilon, \cdot) \right) \right\|_{L^1(\RR)},
    \end{align*}
    where we used $\|\phi\|_\infty \le 1$ in the second inequality.
    Due to \eqref{eq:aux_lem2.1}, the right-hand side vanishes for $(\tilde \epsilon, \tilde Q) \to (\epsilon,Q)$ uniformly in $x$, which concludes the proof.
\end{proof}

\begin{lemma} \label{auxiliary_lemma3}
    Under Assumption \ref{ass:A1&2}, let $Q \in L^0(0,1)$. Then the following statements hold:
    \begin{enumerate}[resume, label = (\roman*)]
        \item \label{it:auxiliary_lemma3.1} $T_Q$ is well-defined, bounded, continuous and strictly positive.
        \item \label{it:auxiliary_lemma3.2} If $(Q_k)_{k \in \NN} \subseteq L^0(0,1)$ converges in probability to $Q$, then
        \[ \lim_{k \to \infty} T_{Q_k} = T_{Q} \quad \text{uniformly.} \]
    \end{enumerate}
    In particular, for every $R > 0$ there are $\epsilon_T(R), \delta_T(R) > 0$ such that for all $x,y\in \RR$, $Q \in L^\infty(0,1)$ with $|x|, |y|, \|Q\|_\infty \le R$ we have
    \begin{equation}
        \label{eq:auxiliary_lemma3.inparticular}
        \epsilon_T(R) \le T_Q(x,y) \le \delta_T(R).  
    \end{equation}
\end{lemma}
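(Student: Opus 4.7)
The plan is to first rewrite $T_Q$ in a symmetric form in which the inner dependence on $Q$ is encapsulated in a single bounded function $K$, and then dispatch the two claims together with the final bounds. Substituting $z \mapsto x - z$ in the defining integral and using $\int_0^1 \Phi(z - Q(w))\,dw = \phi \ast F_\alpha(z)$ from \eqref{eq:Acal.innerpart} (with $\alpha := Q_\# \lambda$), I would recast
\[
    T_Q(x,y) = \int_\RR \phi(x-z)\,\phi(y-z)\, K(z)\,dz, \qquad K(z) := Q_\nu'\bigl(\phi \ast F_\alpha(z)\bigr).
\]
Assumption \ref{it:A3} together with the inverse function rule (as noted just after Assumption \ref{ass:A1&2}) yields $0 < Q_\nu' \le L$ $\lambda$-a.e., so $K$ is bounded by $L$; combined with $\int \phi(x-z)\phi(y-z)\,dz = e^{-(x-y)^2/4}/(2\sqrt{\pi})$ this immediately gives $T_Q(x,y) \le L/(2\sqrt{\pi})$ uniformly. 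Continuity of $T_Q$ on $\RR^2$ is standard dominated convergence, using that the integrand is jointly continuous in $(x,y,z)$ and is dominated locally in $(x,y)$ by a fixed $L^1$ Gaussian-product bound. Strict positivity follows because $\phi \ast F_\alpha$ is a $C^1$-diffeomorphism $\RR \to (0,1)$ with derivative $\phi \ast \alpha > 0$, so the $\lambda$-null set $\{Q_\nu' = 0\}\subset (0,1)$ pulls back to a $\lambda$-null set in $\RR$; hence $K > 0$ $\lambda$-a.e., and since $\phi(x-z)\phi(y-z) > 0$ for every $z$ the integral is strictly positive. This settles \ref{it:auxiliary_lemma3.1}.

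For \ref{it:auxiliary_lemma3.2} I would first upgrade convergence in probability of $Q_k \to Q$ to \emph{uniform} convergence $\phi \ast F_{\alpha_k} \to \phi \ast F_\alpha$ on all of $\RR$: convergence in probability of the quantiles entails weak convergence $\alpha_k \Rightarrow \alpha$, so the continuous CDFs $\phi \ast F_{\alpha_k}$ converge pointwise, and P\'olya's theorem promotes this to uniform convergence thanks to continuity of the limit. Invoking Lemma \ref{lem:aux.ToP} then yields $K_k \to K$ in $\lambda$-measure locally on $\RR$. To promote this to uniform convergence of $T_{Q_k}$, I use the Gaussian product identity $\phi(x-z)\phi(y-z) = \frac{1}{2\pi}\exp\!\bigl(-(x-y)^2/4 - (z-(x+y)/2)^2\bigr)$ together with the translation $w = z - (x+y)/2$ to obtain
\[
    |T_{Q_k}(x,y) - T_Q(x,y)| \;\le\; \frac{1}{2\pi}\int_\RR e^{-w^2}\,\bigl|K_k - K\bigr|\bigl(w + \tfrac{x+y}{2}\bigr)\,dw,
\]
so the claim reduces to showing $\sup_{a \in \RR}\int e^{-w^2}\,|K_k - K|(w+a)\,dw \to 0$. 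The bounded regime $|a| \le M$ is handled by combining Lusin's theorem (approximating $Q_\nu'$ on $(0,1)$ by a continuous function off an arbitrarily small bad set) with the uniform convergence of $\phi \ast F_{\alpha_k}$ and dominated convergence. The large-$|a|$ regime is the delicate part and I expect it to be the main obstacle: its resolution uses that, uniformly in $k$, $\phi \ast F_{\alpha_k}(w+a)$ becomes arbitrarily close to $0$ (resp.\ $1$) as $a \to -\infty$ (resp.\ $+\infty$) on the essential Gaussian window in $w$, so $K_k(w+a)$ and $K(w+a)$ are both pushed near a common limiting value of $Q_\nu'$, with the mismatched remainder absorbed by the rapidly decaying weight $e^{-w^2}$.

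The ``in particular'' statement is then a compactness corollary. The upper bound $\delta_T(R) = L/(2\sqrt{\pi})$ has already been observed and holds uniformly in $R$, $(x,y)$ and $Q$. For the lower bound, the set $\{Q \in \QF : \|Q\|_\infty \le R\}$ is compact in $L^0(0,1)$ via its homeomorphism with $\Pcal([-R,R])$ and Prokhorov's theorem, while $[-R,R]^2$ is compact in $\RR^2$; on this compact product, $(x,y,Q) \mapsto T_Q(x,y)$ is continuous by \ref{it:auxiliary_lemma3.1}--\ref{it:auxiliary_lemma3.2} and strictly positive by \ref{it:auxiliary_lemma3.1}, so its infimum is attained and strictly positive, giving the required $\epsilon_T(R) > 0$.
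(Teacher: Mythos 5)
Your part~\ref{it:auxiliary_lemma3.1} and the ``in particular'' bounds are correct and essentially the paper's argument: the symmetric rewriting $T_Q(x,y)=\int\phi(x-z)\phi(y-z)K(z)\,dz$, the bound $K\le L$, the a.e.\ strict positivity of $K$ obtained by pulling back the null set $\{Q_\nu'=0\}$ through the diffeomorphism $\phi\ast F_\alpha$, and the compactness argument on $[-R,R]^2\times\{Q\in\QF:\|Q\|_\infty\le R\}$ all match the paper (which phrases positivity tersely as ``convolution of a strictly positive function with a Gaussian'', resting on the same fact established in Lemma~\ref{auxiliary_lemma}).

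The genuine gap is exactly where you flag it: the large-$|a|$ regime of $\sup_{a}\int e^{-w^2}|K_k-K|(w+a)\,dw$. Your proposed resolution --- that $K_k(w+a)$ and $K(w+a)$ are ``both pushed near a common limiting value of $Q_\nu'$'' as $|a|\to\infty$ --- presupposes that $Q_\nu'$ admits limits at $0$ and $1$. Assumption~\ref{ass:A1&2} only gives $0<Q_\nu'\le L$ $\lambda$-a.e.: the density of $\nu$ may oscillate near the endpoints of its support (say between two positive constants), in which case $Q_\nu'$ has no limit at $1$ and $|Q_\nu'(s)-Q_\nu'(t)|$ need not be small for $s\neq t$ both close to $1$ --- which is precisely the situation of $\phi\ast F_{\alpha_k}(w+a)$ versus $\phi\ast F_{\alpha}(w+a)$ for large $a$. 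So this step fails as written. The paper takes a different route for \ref{it:auxiliary_lemma3.2} that never analyzes the tails pointwise: it applies Cauchy--Schwarz to the defining integral, splitting off the factor $\bigl(\int\phi^2(x-y-z)\,dz\bigr)^{1/2}\le 1$ and reducing the claim to $L^2(\RR)$-convergence of $f_{Q_k}:=\phi\cdot\bigl(Q_\nu'\circ(\phi\ast F_{\alpha_k})\bigr)\to f_Q$; the latter follows from convergence locally in measure (your Lemma~\ref{lem:aux.ToP} step) together with domination by $L\phi\in L^2(\RR)$, so no smallness of $|K_k-K|$ at infinity is required --- only boundedness against a square-integrable weight. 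Note also that your bounded-$|a|$ argument already yields locally uniform convergence of $T_{Q_k}$, which is what is actually invoked downstream (in Proposition~\ref{mainProposition} and for \eqref{eq:auxiliary_lemma3.inparticular}), but it does not by itself deliver the uniform convergence asserted in the lemma.
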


\begin{proof}
    By Assumption \ref{it:A3}, we have that $Q'_\nu$ is bounded by $L$, therefore $T_Q$ is well-defined and bounded by $L$.
    Using dominated convergence, it is easy to see that $T_Q$ is continuous.
    Finally, we have $T_Q > 0$ as convolution of a strictly positive function with a Gaussian.

    We proceed to show \ref{it:auxiliary_lemma3.2}.    
    As in the proof of Lemma \ref{auxiliary_lemma2} one can show that
    \[
        Q \mapsto Q_\nu'\left( \int_0^1 \Phi(\cdot - Q(w)) \, dw \right)
    \]
    is continuous from $L^0(0,1)$ to $L^0(\RR)$, when domain and codomain are both equipped with convergence in probability.
    Since $Q_\nu'$ is bounded, dominated convergence yields $(L^0(0,1),L^2(\RR))$-continuity of
    \[
        Q \mapsto f_Q(\cdot) := \phi(\cdot) Q_\nu'\left( \int_0^1 \Phi(\cdot -Q(w)) \, dw \right).
    \]
    Hence, by the Cauchy-Schwartz inequality we find that
    \begin{align*}
        |T_Q(x,y) - T_{Q_k}(x,y)| &\le \left( \int_\RR \phi^2(x - y - z) \, dz \right)^\frac12 \|f_Q - f_{Q_k} \|_{L^2(\RR)} 
        \\
        &\le
        \|f_Q - f_{Q_k} \|_{L^2(\RR)},
    \end{align*}
    from where we conclude that $T_{Q_k} \to T_Q$ uniformly.    

    To see \eqref{eq:auxiliary_lemma3.inparticular}, note that by the same reasoning given as in the final part of the proof of Lemma \ref{auxiliary_lemma}, we can restrict to the $L^0(0,1)$-compact set $\{ Q \in \QF : \|Q\|_\infty \le R\}$.
    Therefore, \eqref{eq:auxiliary_lemma3.inparticular} follows from continuity and strict positivity of $(x,y,Q) \mapsto T_Q(x,y)$ on the domain $\RR^2 \times L^0(0,1)$.
\end{proof}

\subsection{Regularity} \label{ssec:regularity}
To establish convergence of the fixed-point algorithm, we reinterpret $\Acal$, cf.\ \eqref{Acal_intro}:
Instead of working with CDFs, we consider an operator that acts on quantile functions.
This point of view allows us to define an operator $\Gcal \colon L^\infty(0,1) \to L^\infty(0,1)$ that essentially takes the role of $\Acal$.
In the main result of this section, we exploit the linear structure of $L^\infty(0,1)$ in order to study the regularity of $\Gcal$ (and thereby implicitly of $\Acal$).
This step plays a crucial role in the proof of our main results.

\begin{definition}
	Under Assumption \ref{ass:A1&2}, we define the operator $\Gcal \colon L^\infty(0,1) \to L^\infty(0,1)$
    \[
         Q \mapsto S_Q^{-1} \circ Q_\mu,
    \]
    where $Q \mapsto S_Q$ is given as in  \eqref{aux_operator_def1}.
\end{definition} 

We proceed by fleshing out the connection between $\Gcal$ and $\Acal$, cf.\ especially Proposition \ref{mainProposition} \ref{it:mainProposition.2}.

\begin{proposition}
    \label{mainProposition}
    Under Assumption \ref{ass:A1&2}, $\Gcal$ has the following properties:
    \begin{enumerate}[label = (\roman*)]
        \item \label{it:mainProposition.1}  For every $Q \in L^\infty(0,1)$, $u \mapsto \Gcal Q(u) =: \Gcal_u (Q)$ is right-continuous.
        \item \label{it:mainProposition.2}  For every $F \in {\rm CDF}$ with quantile function $Q \in L^\infty(0,1)$, we have $(\Acal F)^{-1} = \Gcal Q$;
        \item \label{it:mainProposition.3} The operator $\Gcal$ is Fréchet differentiable with derivative $D_Q \Gcal$;
        \item \label{it:mainProposition.4} For every $Q \in L^\infty(0,1)$ the operator norm of the Fréchet derivative $D_Q\Gcal$ is bounded by 1 and
        \[
            \| D_Q \Gcal(f) \|_\infty = \|f \|_\infty \iff f \in {\rm Span}(\mathbf 1).
        \]
    \end{enumerate}
\end{proposition}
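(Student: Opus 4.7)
The plan is to leverage Lemmas \ref{auxiliary_lemma}--\ref{auxiliary_lemma3} in order to compute $\Gcal$ and its Fréchet derivative explicitly, with the decisive observation being that $D_Q\Gcal(f)$ pointwise evaluates to a \emph{weighted average} of the direction $f$.

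Part \ref{it:mainProposition.1} is immediate from the definition $\Gcal Q = S_Q^{-1} \circ Q_\mu$: by Lemma \ref{auxiliary_lemma} \ref{it:auxiliary_lemma.1}, $S_Q^{-1}$ is continuous on $\interior(\co(\supp(\nu))) \supseteq \supp(\mu)$ (the inclusion from \ref{it:A2}), and $Q_\mu$ is right-continuous, so the composition inherits right-continuity. Part \ref{it:mainProposition.2} would then follow by first writing $\Acal F = F_\mu \circ S_Q$ using the identity $S_Q = \phi \ast (Q_\nu \circ (\phi \ast F))$ for $Q = F^{-1}$ from Remark \ref{rem:auxiliary_lemma}, and then inverting: since $S_Q$ is a continuous strictly increasing bijection from $\RR$ onto $\interior(\co(\supp(\nu)))$ and $Q_\mu$ lands in this interval, one obtains $(\Acal F)^{-1} = S_Q^{-1} \circ Q_\mu = \Gcal Q$.

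For \ref{it:mainProposition.3}, the plan is to apply the implicit function theorem to the defining equation. Fixing $f \in L^\infty(0,1)$ and $u \in (0,1)$, the function $x(\epsilon) := \Gcal(Q + \epsilon f)(u)$ satisfies $S_{Q+\epsilon f}(x(\epsilon)) = Q_\mu(u)$; Lemmas \ref{auxiliary_lemma} \ref{it:auxiliary_lemma.2} and \ref{auxiliary_lemma2} ensure $(\epsilon, x)\mapsto S_{Q+\epsilon f}(x)$ is $C^1$ with strictly positive $\partial_x$, hence
\[
    x'(0) = \frac{-\partial_\epsilon S_{Q+\epsilon f}(x(0))|_{\epsilon = 0}}{S_Q'(x(0))} = \frac{\int_0^1 f(w) \, T_Q(\Gcal Q(u), Q(w)) \, dw}{S_Q'(\Gcal Q(u))}.
\]
The crucial identity is
\[
    S_Q'(x) = \int_0^1 T_Q(x, Q(w)) \, dw,
\]
obtained by differentiating the definition of $S_Q$ under the integral sign and recognizing the resulting integrand. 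It identifies the candidate derivative as
\[
    D_Q\Gcal(f)(u) = \int_0^1 f(w) \rho_u(w) \, dw, \qquad \rho_u(w) := \frac{T_Q(\Gcal Q(u), Q(w))}{\int_0^1 T_Q(\Gcal Q(u), Q(w'))\, dw'},
\]
i.e., the average of $f$ against the probability density $\rho_u$ on $(0,1)$. Part \ref{it:mainProposition.4} then follows from this representation: the bound $|D_Q\Gcal(f)(u)| \le \|f\|_\infty$ is immediate as $\rho_u \ge 0$ integrates to $1$, and constants $f = c\mathbf 1$ attain equality. For the converse, if $f$ is not a.e.\ constant, put $M := \esssup f > m := \essinf f$; since $Q$ and $\Gcal Q$ are bounded, Lemma \ref{auxiliary_lemma3} \eqref{eq:auxiliary_lemma3.inparticular} yields a uniform lower bound $\rho_u(w) \ge c_0 > 0$, and with $\alpha := \lambda(\{f \le (M+m)/2\}) > 0$ a short two-line estimate gives $m + \tfrac12 (M-m) c_0 \alpha \le D_Q\Gcal(f)(u) \le M - \tfrac12 (M-m) c_0 \alpha$ uniformly in $u$, hence $\|D_Q\Gcal(f)\|_\infty < \max(M, -m) = \|f\|_\infty$.

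The main obstacle I expect is upgrading from Gâteaux to Fréchet differentiability, which requires the remainder to be $o(\|f\|_\infty)$ \emph{uniformly} in $u$. The plan is to write
\[
    \Gcal(Q + f)(u) - \Gcal Q(u) - D_Q\Gcal(f)(u) = \int_0^1 \bigl[x'(t) - x'(0)\bigr] dt
\]
and to control $\|x'(t) - x'(0)\|_\infty$ as $t\|f\|_\infty \to 0$. The trivial bound $|x'(t)| \le \|f\|_\infty$ (which also yields $1$-Lipschitz continuity of $\Gcal$ with respect to $\|\cdot\|_\infty$) keeps $x(t)$ in a uniform compact set, so Lemma \ref{auxiliary_lemma} \ref{it:auxiliary_lemma.2} and Lemma \ref{auxiliary_lemma3} \ref{it:auxiliary_lemma3.2} deliver uniform convergence (in $u$) of the weights associated with $Q + tf$ to those of $Q$; this should give a modulus $\omega(t\|f\|_\infty)\|f\|_\infty$ for $\|x'(t) - x'(0)\|_\infty$, and integrating in $t$ completes the proof.
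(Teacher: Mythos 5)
Your proposal is correct and follows essentially the same route as the paper: the implicit function theorem applied to $S_{Q+\epsilon f}(x(\epsilon))=Q_\mu(u)$, the identity $S_Q'(x)=\int_0^1 T_Q(x,Q(w))\,dw$ turning $D_Q\Gcal_u$ into integration against a probability density, the uniform positive lower bound on that density (from Lemmas \ref{auxiliary_lemma} and \ref{auxiliary_lemma3}) for the strict inequality in \ref{it:mainProposition.4}, and uniform convergence of the weights to upgrade Gâteaux to Fréchet differentiability. The only cosmetic slip is that in \ref{it:mainProposition.4} the lower bound on $D_Q\Gcal(f)(u)$ should use $\lambda(\{f\ge (M+m)/2\})$ rather than $\alpha=\lambda(\{f\le (M+m)/2\})$; both are positive when $f$ is not a.e.\ constant, so the conclusion stands.
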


\begin{proof}
Before starting with proving the statements, we remark that by Lemma \ref{auxiliary_lemma}, for any $Q \in L^\infty(0,1)$, the inverse $S_Q^{-1}$ has domain $\co(\interior(\supp(\nu))) \supseteq \supp(\mu)$.
Consequently, $\Gcal Q = S_Q^{-1} \circ Q_\mu$ is well-defined and \ref{it:mainProposition.1}  follows from continuity of $S^{-1}_Q$ and right-continuity of $Q_\mu$.
Further, $\Gcal$ is a continuous operator, since the map $Q \mapsto S_Q^{-1}|_{\supp(\mu)}$ is by Lemma \ref{auxiliary_lemma} \ref{it:auxiliary_lemma.3} continuous with domain $L^\infty(0,1)$ and codomain $L^\infty(\supp(\mu))$.

We proceed to show \ref{it:mainProposition.2}:
Let $F$ be a CDF with quantile function $Q \in L^\infty(0,1)$. 
For any $u \in (0,1)$ we have
\begin{align*}
    (\Acal F)^{-1}(u) & = \inf \{ x \in \mathbb R : \mathcal AF(x) \ge u \} \\
    &= \inf \{ x \in \RR : F_\mu(S_{ Q}(x)) \ge u \} \\
    &= \inf \{ S_{ Q}^{-1}(x) : x \in \interior(\co(\supp(\nu))), F_\mu(x) \ge u \}\\
    &= S_{ Q}^{-1} \circ Q_\mu(u) = \Gcal Q(u).
\end{align*}
To see \ref{it:mainProposition.3} , let $Q, f \in L^\infty(0,1)$, $u \in (0,1)$  and $\epsilon\in\RR$.
We have
\[
    Q_\mu(u) = S_{Q + \epsilon f} \circ \Gcal_u(Q + \epsilon f) = S_{Q,f}(\epsilon,\Gcal_u(Q + \epsilon f)).
\]
By Lemma \ref{auxiliary_lemma2}, $S_{Q,f}$ is $C^1(\RR^2)$ with $\partial_x S_{Q,f}(\epsilon,x) > 0$ for all $(\epsilon,x) \in \RR^2$.
Therefore, applying the implicit function theorem yields that $\Gcal_u(Q + \epsilon f)$ is differentiable w.r.t.\ $\epsilon$.
We find
\begin{equation}
    \label{eq:mainProposition1}
    0 = \frac{\partial}{\partial \epsilon}\Bigr|_{\epsilon = 0} (S_{Q,f}(\epsilon,\Gcal_u(Q+\epsilon f)) = 
    \partial_\epsilon S_{Q,f}(0,\Gcal_u(Q))  + 
    \partial_x S_{Q,f}(0,\Gcal_u(Q)) D_{Q} \Gcal_u(f).
\end{equation}
Note that $S_{Q,f}(0,x) = S_Q(x)$ and thus $\partial_x S_{Q,f}(0,x) = S_Q'(x) > 0$ by Lemma \ref{auxiliary_lemma} \ref{it:auxiliary_lemma.2}.
We can rearrange the terms in \eqref{eq:mainProposition1} to obtain the following representation of $D_Q \Gcal_u(f)$
\begin{equation}
    \label{eq:mainProposition2}
    D_Q\Gcal_u(f) = - \frac{\partial_\epsilon S_{Q,f}(0,\Gcal_u(Q))}{S_Q'(\Gcal_u(Q))}.
\end{equation}
This allows us to define the map
\begin{equation}
    \label{eq:mainProposition3}
    D_Q\Gcal \colon L^\infty(0,1) \to L^\infty(0,1) \colon f \mapsto \left( u \mapsto D_Q \Gcal_u(f) \right),
\end{equation}
which is by construction the Gâteaux derivative of $\Gcal$ at $Q$ in direction $f$.
For $v \in (0,1)$, we let
\begin{equation}
    \label{eq:mainProposition4}
    D_Q\Gcal_u(v) := \frac{T_Q(\Gcal_u Q, Q(v))}{S'_Q(\Gcal_u Q)},
\end{equation}
where we recall that
\[
    S'_Q(x) = \int_0^1 T_Q(x, Q(v)) \, dv,
\]
and $T_Q$ is a strictly positive map by Lemma \ref{auxiliary_lemma3} \ref{it:auxiliary_lemma3.1}.
Thus, we observe that $S'_Q$ acts as a positive normalization constant and $D_Q \Gcal_u$ is a probability density.
By construction, we obtain for all $f \in L^\infty(0,1)$
\begin{equation}
\label{eq:mainProposition5}
        D_Q\Gcal_u(f) = \int_0^1 f(v) D_Q \Gcal_u(v) \, dv.
\end{equation}
Consequently, $D_Q \Gcal_u$ is a linear operator with operator norm 1, and in particular continuous.

To conclude that $\Gcal$ is (continuously) Fréchet differentiable, it suffices to show continuity of $Q \mapsto D_Q \Gcal$ from $L^\infty(0,1)$ to $(L^\infty(0,1))'$.
To this end, let $(Q_k)_{k \in \NN}$ be a sequence that converges to $Q$ in $L^\infty(0,1)$. In particular, we have
\begin{align}
    \sup_{\| f \|_\infty = 1} \| D_{Q_k} \Gcal(f)  - D_Q \Gcal (f) \|_\infty 
    &\leq \sup_{\| f \|_\infty = 1} \lambda\text{-}\esssup_{u \in (0,1)}\int_0^1 |f(v)| |D_{Q_k} \Gcal_u(v) - D_{Q} \Gcal_u(v)| \, dv \nonumber \\
    \nonumber
    &\leq \int_0^1 \lambda\text{-}\esssup_{u \in (0,1)} |D_{Q_k} \Gcal_u(v)-D_Q \Gcal_u(v)| \, dv \\ \label{eq:mainPropositionFrechetcont}
    &\le \lambda^2\text{-}\esssup_{u,v \in (0,1)} |D_{Q_k} \Gcal_u(v)-D_Q \Gcal_u(v)|.
\end{align}
As $\Gcal$ is continuous, there is $R \in \RR$ such that $\sup_{k \in \NN} \| \Gcal Q_k \|_\infty \le R$ and $\sup_{k \in \NN} \|Q_k\|_\infty \le R$.
By Lemma \ref{auxiliary_lemma} there is $\epsilon_S(R) > 0$ such that $\inf_{k \in \NN} S'_{Q_k}(x) \ge \epsilon_S(R)$ for all $x \in [-R,R]$.
Moreover, Lemma \ref{auxiliary_lemma} \ref{it:auxiliary_lemma.2} and Lemma \ref{auxiliary_lemma3} \ref{it:auxiliary_lemma3.2} establish that $S'_{Q_k} \to S'_Q$ uniformly and $T_{Q_k} \to T_Q$ locally uniformly, respectively.
Combining these observations yields that
\begin{equation}
    \label{eq:mainPropositionFrechet1}
    \frac{T_{Q_k}(x,y)}{S'_{Q_k}(x)} \to \frac{T_Q(x,y)}{S'_Q(x)} \quad \text{locally uniformly for }(x,y) \in \RR^2.
\end{equation}
By Lemma \ref{auxiliary_lemma} \ref{it:auxiliary_lemma.1} and Lemma \ref{auxiliary_lemma3} \ref{it:auxiliary_lemma3.1} we have continuity of $S_Q'$ and $T_Q$, respectively, from where we deduce continuity of $(x,y) \mapsto T_Q(x,y) / S'_Q(x)$.
We find by simple computations that
\begin{multline} \label{eq:mainPropositionFrechet}
    \left| D_{Q_k} \Gcal_u(v) - D_Q \Gcal_u(v) \right|
    \\
    \le
    \left| \frac{T_{Q_k}(\Gcal_u Q_k, Q_k(v))}{S_{Q_k}'(\Gcal_u Q_k)} - \frac{T_{Q}(\Gcal_u Q_k, Q_k(v))}{S_{Q}'(\Gcal_u Q_k)}\right| 
    + \left| \frac{T_{Q}(\Gcal_u Q_k, Q_k(v))}{S_{Q}'(\Gcal_u Q_k)} - \frac{T_{Q}(\Gcal_u Q, Q(v))}{S_{Q}'(\Gcal_u Q)} \right|.
\end{multline}
Since $Q_k \to Q$ in $L^\infty(0,1)$ and $\Gcal$ is continuous, this means that
\[
    (\Gcal_u Q_k, Q_k(v)) \to (\Gcal_u Q, Q(v)) \quad \text{uniformly, for $\lambda^2$-almost every }(u,v) \in (0,1)^2.
\]
The first term in \eqref{eq:mainPropositionFrechet} vanishes by \eqref{eq:mainPropositionFrechet1}, as $k \to \infty$ for $\lambda^2$-almost every $(u,v)$, while the second term in \eqref{eq:mainPropositionFrechet} vanishes by uniform continuity of the map $(x,y) \mapsto T_Q(x,y) / S'_Q(x)$ on $[-R,R]^2$, as $k \to \infty$ for $\lambda^2$-almost every $(u,v) \in (0,1)^2$.
We have shown by \eqref{eq:mainPropositionFrechetcont} that $D_{Q_k} \Gcal \to D_Q \Gcal$, which means that $Q \mapsto D_Q \Gcal$ is continuous.

Finally, we prove \ref{it:mainProposition.4}:
It follows from \eqref{eq:mainProposition4} that $D_Q \Gcal(\textbf{1}) = \textbf{1}$.
Furthermore, we have by Lemma \ref{auxiliary_lemma} and Lemma \ref{auxiliary_lemma3} that there are $\delta_S, \epsilon_T > 0$ with $S'_Q(x) \le \delta_S$ and $T_Q(x,y) > \epsilon_T$ for all $x,y \in \RR$ with $|x|,|y| \le \|Q\|_\infty$.
Write $\tilde \epsilon := \epsilon_T / \delta_S$ and observe that for $\lambda^2$-almost every $(u,v) \in (0,1)^2$ we have $D_Q \Gcal_u(v) \ge \tilde \epsilon > 0$.
Let $f \in L^\infty(0,1)$ with $f \not \in {\rm Span}(\mathbf 1)$.
Since we can assume $\|f\|_\infty = 1$ w.l.o.g., then $\int_0^1 1 - |f(v)| \, dv =: \tilde \delta > 0$, so that
\begin{align*}
    |D_Q \Gcal_u(f)| &\le \int_0^1 |f(v)| D_Q \Gcal_u(v) \, dv  \\
    &= \int_0^1 |f(v)| (D_Q \Gcal_u(v) - \tilde\epsilon ) \, dv + \tilde \epsilon \int_0^1 |f(v)| \, dv \\
    &\le 1 - \tilde \epsilon + \tilde \epsilon \int_0^1 |f(v)| \, dv = 1 - \tilde \epsilon \tilde \delta.
\end{align*}
Hence, $\|D_Q \Gcal(f)\|_\infty \le 1 - \tilde \epsilon \tilde \delta < 1$, which completes the proof. \qedhere
\end{proof}

Finally, we need the following remarks to proceed with the proof of our main results.  
\begin{remark} [Gradient of $\Gcal$]
    As we showed in the proof of Proposition \ref{mainProposition}\ref{it:mainProposition.3}, the Frèchet differential of $\Gcal$ at $Q \in L^\infty(0,1)$ can be identified with the map
    \[
        (0,1)^2 \ni (u,v) \mapsto D_Q\Gcal_u(v) = \frac{T_Q(\Gcal_u Q, Q(v))}{S'_Q(\Gcal_u Q)}, \quad S'_Q(\Gcal_u Q) = \int_0^1 T_Q(x, Q(v)) \, dv,
    \]
    cf.\ \eqref{eq:mainProposition4}.
    More precisely, $(u,v) \mapsto D_Q\Gcal_u(v)$ can be identified with the gradient of the operator $\Gcal$ since
    \begin{equation*}
        D_Q\Gcal_u(f) = \int_0^1 f(v) D_Q \Gcal_u(v) \, dv,
    \end{equation*}
    for any $f \in L^\infty(0,1)$ and $u \in (0,1)$, cf.\ \eqref{eq:mainProposition5}.
\end{remark}
\begin{remark}
\label{rmk:elliptic}
    It follows from the proof of Proposition \ref{mainProposition} that the gradient of $\Gcal$
    is bounded and bounded away from zero on $\overline{B_\infty(0, R)} := \{ Q \in L^\infty(0,1) : \|Q\|_\infty \le R\}$ where $R > 0$.
    This means, there exist $\epsilon(R), \delta(R) > 0$ such that for $\lambda^2$-almost every $(u,v) \in (0,1)^2$
    \begin{equation}
        \label{eq:elliptic}
        \epsilon(R) \leq D_Q \Gcal_u(v) \leq \delta(R), \quad \text{for any } Q \in \overline{B_\infty(0, R)}.
    \end{equation}
    In fact, $Q \in L^\infty(0,1)$ with $\|Q\|_\infty \le R$ implies by non-expansiveness (cf.\ Theorem \ref{thm:regularity} \ref{it:L_inf_Thm1}) that $\| \Gcal Q - \Gcal 0\|_\infty \le R$.
    Thus, there exists $\tilde R=\tilde R(R)>0$ such that $\|Q\|_\infty, \|\Gcal Q\|_\infty \le \tilde R$ for all $Q \in L^\infty(0,1)$ with $\|Q\|_\infty \le R$.    
    Therefore, by Lemma \ref{auxiliary_lemma}, there exist $\epsilon_S(\tilde R), \delta_S(\tilde R)>0$ such that $S'_{Q}(\Gcal_u Q) \in [\epsilon_S(\tilde R), \delta_S(\tilde R)]$ for any $\lambda$-almost every $u \in (0,1)$.
    Similarly, by Lemma \ref{auxiliary_lemma3} \ref{it:auxiliary_lemma3.2}, there exist $\epsilon_T(\tilde R), \delta_T(\tilde R) > 0$ such that $T_Q(\Gcal_u Q, Q(v)) \in [\epsilon_T(\tilde R), \delta_T(\tilde R)]$ for $\lambda^2$-almost every $(u, v) \in (0,1)^2$. 
    Therefore, \eqref{eq:elliptic} holds with 
    \[ \epsilon(R):=\epsilon_T(\tilde R)/\delta_S(\tilde R) \quad \text{and} \quad \delta(R):=\delta_T(\tilde R)/\epsilon_S(\tilde R). \] 
\end{remark}

\subsection{Proof of the main results}\label{sect:main_proofs}
This section is devoted to the proofs of Theorem~\ref{thm:regularity}, Theorem~\ref{thm:conv}, and to show linear convergence under further assumptions.

\begin{proof}[Proof of Theorem~\ref{thm:regularity}]
    To see  \ref{it:L_inf_Thm1}, let $F, G \in \CDFc$. 
    We have
    \begin{align*}
        \Wcal_\infty  (\Acal F, \Acal G) 
         &= |\!| \Gcal(F^{-1})- \Gcal(G^{-1})|\!|_\infty
         \\
         &= |\!| D_{Q_t} \Gcal (F^{-1}-G^{-1})|\!|_\infty
        \leq |\!| D_{Q_t} \Gcal |\!|_\infty |\!| F^{-1}-G^{-1}|\!|_\infty,
    \end{align*}
    where the first equality is due to Proposition \ref{mainProposition} \ref{it:mainProposition.2} , and
    the latter follows from Taylor's theorem \cite[Section 5.6]{Ca64} with $Q_t = tF^{-1}+(1-t)G^{-1}$, for some $t \in [0,1]$.
    Since $|\!| D_{Q_t} \Gcal|\!|_\infty \leq 1$ by Proposition \ref{mainProposition} \ref{it:mainProposition.3}, $\Wcal_\infty  (\Acal F, \Acal G) \leq \Wcal_\infty  (F, G)$. Furthermore, $\Wcal_\infty  (\Acal F, \Acal G) = \Wcal_\infty  (F, G)$ if and only if $|\!| D_{Q_t} \Gcal (F^{-1}-G^{-1})|\!|_\infty = |\!| F^{-1}-G^{-1}|\!|_\infty$, which is equivalent to $F^{-1}-G^{-1} \in {\rm Span}(\mathbf 1)$, by Proposition \ref{mainProposition} \ref{it:mainProposition.4}.
    
    We derive \ref{it:L_inf_Thm2} from \ref{it:L_inf_Thm1}: Indeed, if $F, G \in \CDFc$ are fixed points of $\Acal$, then $\Wcal_\infty (\Acal F, \Acal G) = \Wcal_\infty (F, G)$. Therefore, $F^{-1} = G^{-1}+ c \mathbf{1}$, for some $c \in \RR$.

     To prove \ref{it:L_inf_Thm3}, note that $\Acal$ admits a fixed-point if and only if there exists a \ssbm from $\mu$ to $\nu$ by Theorem~\ref{char_s2Bm}. Moreover, a \ssbm from $\mu$ to $\nu$ exists if and only if $(\mu, \nu)$ is irreducible by Theorem~\ref{thm.sbm}.
\end{proof}

\begin{proof}[Proof of Corollary~\ref{cor:non-exp}]
To emphasize the dependence of the operators $Q \mapsto S_Q$ and $Q \mapsto \Gcal Q$ on $\mu$ and $\nu$, we introduce the following notations 
\begin{align*}
       	 S_Q^{\nu}(x) &:= \int_\RR Q_\nu \left( \int_0^1 \Phi(x-Q(y)-z)dy \right)\phi(z)dz,\\
     \Gcal_{\mu, \nu}Q &:= (S_Q^\nu)^{-1} \circ Q_\mu.
\end{align*}
Even though we are only assuming that $\nu$ is compactly supported, we can prove that the operator $Q \mapsto S_Q$ is well-defined, smooth and strictly increasing, with ${\rm range}(S_Q^{ \nu}) = \interior (\co(\supp \nu))$ for any $Q \in L^0(0,1)$, following the proof of Lemma \ref{auxiliary_lemma}. Now, let us invoke Lemma \ref{lem:approx.corol} to find a sequence of pairs $(\mu_n, \nu_n)$, $n \in \NN,$ that satisfy Assumption~\ref{ass:A1&2} and 
\begin{equation}
\label{approxConvCor}
\lim_{n \to \infty} \mathcal W_1(\mu_n,\mu) = 0 = \lim_{n \to \infty} \mathcal W_1(\nu_n,\nu).
\end{equation}
Notably, \eqref{approxConvCor} implies weak convergence of $\nu_n$ to $\nu$. Therefore, $Q_{\nu_n}$ converges to $Q_{\nu}$ pointwise for all continuity points of $Q_\nu$, which entails pointwise converges of $S_Q^{\nu_n}$ to $S_Q^{\nu}$ by the dominated convergence theorem, for any $Q \in L^0(0,1)$. As we already showed in the proof of Lemma \ref{auxiliary_lemma} \ref{it:auxiliary_lemma.3}, this implies locally uniform convergence of $(S_Q^{\nu_n})^{-1}$ to $(S_Q^{\nu})^{-1}$. Hence, we have
\begin{equation*}
    (\Gcal_{\mu, \nu} Q)(u) = \lim_{n \rightarrow \infty} ((S_{Q_1}^{\nu_n})^{-1} \circ Q_{\mu_n})(u) = \lim_{n \rightarrow \infty} (\Gcal_{\mu_n, \nu_n} Q)(u),\quad \text{for $\lambda$-a.e. $u \in (0,1)$}.
\end{equation*}
As $(\mu_n, \nu_n)$ satisfies Assumption \ref{ass:A1&2} for any $n \in \NN$, we can invoke Theorem \ref{thm:regularity} \ref{it:L_inf_Thm1} to get
\begin{align*}
    \Wcal_\infty(\Acal F_1, \Acal F_2) & = \| \Gcal_{\mu, \nu} (F_1^{-1}) - \Gcal_{\mu, \nu} (F_2^{-1}) \|_\infty \\  & =  \left \| \lim_{n \rightarrow \infty} \left(\Gcal_{\mu_n, \nu_n} (F_1^{-1}) - \Gcal_{\mu_n, \nu_n} (F_2^{-1}) \right)\right \|_\infty \\ & \leq \liminf_{n \rightarrow \infty} \| \Gcal_{\mu_n, \nu_n} (F_1^{-1}) - \Gcal_{\mu_n, \nu_n} (F_2^{-1}) \|_\infty \\
     & \leq \| F_1^{-1} - F_2^{-1} \|_\infty = \Wcal_\infty(F_1,F_2),
\end{align*}
for any $F_1, F_2 \in \CDF_c$.
\end{proof}

\begin{proof}[Proof of Theorem~\ref{thm:conv}]
    Recall that, due to Proposition \ref{mainProposition} \ref{it:mainProposition.2}, the fixed-point operator $\Acal$ (which acts on CDFs) is equivalent to the operator $\Gcal$ (which acts on quantile functions).
    Therefore, we proceed to work with $\Gcal$.
    Furthermore, we may w.l.o.g.\ assume that $F \in \CDFc$ with quantile function $Q \in L^\infty(0,1)$.
    We write $Q_k := \Gcal^k Q$ for $k \in \NN$.
    Since we assume, in addition to Assumption~\ref{ass:A1&2}, that $(\mu,\nu)$ is irreducible, we know by Theorem \ref{thm:regularity} that there exists a unique (up to translation) solution $Q^\ast$ with $\Gcal Q^\ast = Q^\ast$.
    Put differently, we have 
    \[ \Lcal^\ast := \{ Q \in \QFc : \Gcal Q = Q \} = \{ Q^\ast + c : c\in \RR\}. \]
    \ref{it:conv_1}: We write $Q^\ast_k$ for an element in $\Lcal^\ast$ with minimal $L^\infty$-distance to $Q_k$.
    Indeed, such an element exists because $Q \mapsto \|Q - Q_k\|_\infty$ is lower semicontinuous on $L^0(0,1)$ while $\{ Q \in \Lcal^*: \|Q\|_\infty \le R \}$ is compact in $L^0(0,1)$; cf.\ reasoning in final part of the proof of Lemma \ref{auxiliary_lemma}.
    We have to show that there is $q \in (0,1)$ with
    \begin{equation}
        \label{eq:thm.conv1.toshow}
        \|Q^\ast_{k + 1} - Q_{k + 1}\|_\infty \le q \|Q^\ast_k - Q_k\|_\infty,\quad k \in \NN.
    \end{equation}
    Observe that, by Proposition \ref{mainProposition}, $\Gcal$ is continuously Fréchet differentiable.
    Therefore, we can invoke Taylor's theorem \cite[Section 5.6]{Ca64} to find, for every $k \in \NN$, some $t \in (0,1)$ and $\tilde Q_k = t Q_k^\ast + (1-t)Q_k$ such that
    \begin{equation}
        \label{eq:taylor}
         Q_{k}^\ast - Q_{k + 1} =  \Gcal(Q_{k}^\ast) - \Gcal(Q_{k}) = D_{\tilde Q_k} \Gcal(Q_{k}^\ast - Q_k).
    \end{equation}
    Furthermore, by $L^\infty$-non-expansiveness of $\Gcal$ (see Theorem \ref{thm:regularity} \ref{it:L_inf_Thm1}), the sequence $(\|Q^\ast_1 - Q_k\|_\infty)_{k \in \NN}$ is non-increasing since
    \[
     \|Q^\ast_1 - Q_{k+1}\|_\infty = \|\Gcal(Q^\ast_1) - \Gcal(Q_{k})\|_\infty \leq  \|Q^\ast_1 - Q_{k}\|_\infty,
    \]
     for every $k \in \NN$. Thus, $\sup_{k \in \NN} \|Q^\ast_1 - Q_k\|_\infty <\infty$
    as well as $\sup_{k \in \NN} \|Q^\ast_k - Q_k\|_\infty  \leq \sup_{k \in \NN} \|Q^\ast_1 - Q_k\|_\infty <\infty$, where the first inequality follows from the definition of $Q_k^*$, and $\sup_{k \in \NN} \|Q^\ast_1 - Q_k^*\|_\infty  \leq 2 \sup_{k \in \NN} \|Q^\ast_1 - Q_k\|_\infty <\infty$.
    Consequently, there is $R > 0$ such that $\| \tilde Q_k \|_\infty \le R$ for all $k \in \NN$.
    By Remark \ref{rmk:elliptic}, there are $\epsilon = \epsilon(R) > 0$ and $\delta = \delta(R) > 0$ such that, for all $k \in \mathbb N$,
    \begin{equation}
        \label{eq:densitybounds}
        \epsilon \le D_{\tilde Q_k}\Gcal_u(v) \le \delta \quad \text{for $\lambda^2$-a.e. }(u,v) \in (0,1)^2.
    \end{equation}

    In order to find a factor $q \in (0,1)$ satisfying \eqref{eq:thm.conv1.toshow}, we fix $k \in \mathbb N$ and pick any $y \in [0, \min(\frac{2}{\delta},1)]$.
    We proceed with a case distinction.

    \emph{Case 1}: Assume that $|\int_0^1 Q_k^\ast(v) - Q_k(v) \, dv| \ge (1 - y) \|Q_k^\ast - Q_k\|_\infty$. 
    Writing $s$ for the sign of $\int_0^1 Q_k^\ast(v) - Q_k(v) \, dv$, we have
    \begin{align} \nonumber
        y \| Q_k^\ast - Q_k \|_\infty &\ge \|Q_k^\ast - Q_k\|_\infty - s \int_0^1 Q_k^\ast(v) - Q_k(v) \, dv \\ \nonumber
        &\ge \int_0^1 |s(Q_k^\ast(v) - Q_k(v))| - s(Q_k^\ast(v) - Q_k(v)) \, dv
        \\ \label{eq:lower_bound}
        &= -2 \int_0^1 \min\big( s(Q_k^\ast(v) - Q_k(v)), 0 \big) \, dv.
    \end{align}
    We use this to find for $\lambda$-a.e.\ $u \in (0,1)$ the lower bound
    \begin{align} \nonumber
        s(Q_k^\ast(u) - Q_{k + 1}(u)) &= s \int D_{\tilde Q_k} \mathcal G_u(v) (Q_k^\ast(v) - Q_k(v))\, dv \\ \nonumber
        &\ge \int D_{\tilde Q_k} \mathcal G_u(v) \min\big( s(Q_k^\ast(v) - Q_k(v)), 0 \big) \, dv
        \\ \label{eq:lower_bound.2}
        &\ge \delta \int D_{\tilde Q_k} \min\big( s(Q_k^\ast(v) - Q_k(v)), 0 \big) \, dv \ge - \frac{\delta y}{2} \|Q_k^\ast - Q_k\|_\infty,
    \end{align}
    where the equality follows from \eqref{eq:taylor}, the second inequality from \eqref{eq:densitybounds} and the final inequality from \eqref{eq:lower_bound}.
    Hence, using \eqref{eq:lower_bound.2} and non-expansiveness of $\mathcal G$, we derive that
    \[
        -\frac{\delta y}{2} \|Q_k^\ast - Q_k\|_\infty \le s (Q_k^\ast(u) - Q_{k + 1}(u)) \le \|Q_k^\ast - Q_k\|_\infty.
    \]
    Observe that, as $y \le \frac{2}{\delta}$, we can translate $s(Q_k^\ast - Q_k)$ by
    \[
        \tau := -s \frac{2 - \delta y}{4} \|Q_k^\ast - Q_k\|_\infty
    \]
    in order to improve the bounds, and find
    \[
        |Q_k^\ast(u) + \tau - Q_{k + 1}(u)| \le \frac{2 + \delta y}{4} \|Q_k^\ast - Q_k\|_\infty.
    \]
    Hence, as $\mathcal L^\ast$ is closed under translation, translating the fixed-point $Q_k^\ast$ by $\tau$
    yields
    \begin{align*}
        \|Q_{k+1}^\ast - Q_{k+1}\|_\infty &\le
        \|Q_k^\ast + \tau - Q_{k + 1}\|_\infty
        \le \frac{2 + \delta y}{4} \|Q_k^\ast - Q_k\|_\infty.
    \end{align*}

    \emph{Case 2}: Assume that $|\int_0^1 Q_k^\ast(v) - Q_k(v)\, dv| \le (1 - y) \|Q_k^\ast - Q_k\|_\infty$.
    Using first \eqref{eq:taylor}, then \eqref{eq:densitybounds}, the triangle inequality and that $\epsilon \in (0,1)$, and finally the assumption, we deduce
    \begin{align*}
        \|Q_k^\ast - Q_{k + 1}\|_\infty &= \lambda\text{-}\esssup_{u \in (0,1)} \left| \int_0^1 D_{\tilde Q_k}\mathcal G_u(v)(Q_k^\ast(v) - Q_k(v)) \, dv \right|
        \\
        &\le \lambda\text{-}\esssup_{u \in (0,1)} \left| \int_0^1 D_{\tilde Q_k} \mathcal G_u(v) - \epsilon \, dv \right| \|Q_k^\ast - Q_k\|_\infty + \epsilon \left| \int_0^1 Q_k^\ast(v) - Q_k(v) \, dv \right|
        \\
        &\le (1 - \epsilon y) \|Q_k^\ast - Q_k\|_\infty.
    \end{align*}
    By combining \emph{Case 1} and \emph{Case 2}, we have
    \[
        \|Q_{k+1}^\ast - Q_{k + 1} \|_\infty \le \max \Big( \frac{2 + \delta y}{4}, 1 - \epsilon y \Big) \| Q_k^\ast - Q_k \|_\infty,
    \]
    which is minimal for $y^\ast = \min( 2 / (\delta + 4\epsilon), 1)$ under the constraint that $y^\ast \in [0,\min(2/\delta,1)]$.
    Consequently, we conclude that $q$ given by
    \[
        q := 
        \begin{cases}
            \frac{\delta + 2\epsilon}{\delta + 4\epsilon} & 2 < \delta + 4\epsilon, \\
            1 - \epsilon & \text{else},
        \end{cases}
    \]
    is a suitable factor for \eqref{eq:thm.conv1.toshow}.

    \ref{it:conv_2}: For every $k \in \NN$, we have $Q_k \in \{ Q \in \QF : \|Q\|_\infty \le R\}$.
    Consequently, as in the final part of the proof of Lemma \ref{auxiliary_lemma}, we obtain that $(Q_k)_{k \in \NN}$ is relatively compact in $L^0(0,1)$.
    Let $(Q_{k_n})_{n \in \NN}$ be a convergent subsequence in $L^0(0,1)$ with limit $\hat Q$.
    By lower semicontinuity of $Q \mapsto \|Q\|_\infty$ on $L^0(0,1)$ and non-expansiveness of $\Gcal$, we get
    \[
        \|\hat Q - Q^\ast_k\|_\infty \le \liminf_{n \to \infty} \|Q_{k_n} - Q^\ast_k\|_\infty \le \|Q^\ast_k - Q_k\|_\infty \le q^k \| Q_0^\ast - Q\|_\infty \le q^k \|\hat Q - Q\|_\infty.
    \]
    In particular, we have
    \[
        \| \hat Q - Q_k\|_\infty \le \| \hat Q - Q_k^\ast\|_\infty + \|Q_k^\ast - Q_k\|_\infty \le 2 q^k \|\hat Q - Q\|_\infty,
    \]
    which is what we wanted to show.
\end{proof}

\begin{proposition}
    \label{prop:lin}
    Under Assumption \ref{ass:A1&2}, let $(\mu,\nu)$ be irreducible and $F \in \CDF$ such that one of the following conditions holds:
    \begin{enumerate}[label = (\roman*)]
        \item \label{it:prop.lin.1} $\mu,\nu$ and the measure with CDF $F$ are all symmetric;
        \item \label{it:prop.lin.2} $\mu$ is concentrated on finitely many points.
    \end{enumerate}
    Then $(\Acal^k F)_{k \in \NN}$ converges linearly.
\end{proposition}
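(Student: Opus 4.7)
The plan is to handle the two cases separately; in both, we exploit the strict inequality of Proposition~\ref{mainProposition}~\ref{it:mainProposition.4} to identify a distinguished fixed-point and thereby upgrade the convergence of Theorem~\ref{thm:conv} to linear convergence at a specific point.

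For case~\ref{it:prop.lin.1}, I first apply Lemma~\ref{lem:Acal.properties}~\ref{it:A.symmetry} to observe that every iterate $\Acal^k F$ is symmetric with the same median $m$ as $F$. Theorem~\ref{thm:conv} then provides a fixed-point limit, and since symmetry and the median $m$ are preserved under $\Wcal_\infty$-limits, this limit is a symmetric fixed-point $F^\ast$ with median $m$; by Theorem~\ref{thm:regularity}~\ref{it:L_inf_Thm2} it is the unique such fixed-point. The elementary but decisive observation is that for two symmetric quantile functions $Q$ and $Q^\ast$ with the same median, the difference $Q - Q^\ast$ is antisymmetric around $1/2$, so its essential supremum and negative essential infimum coincide, and hence
\[
    \inf_{c \in \RR} \| Q - Q^\ast - c \mathbf 1 \|_\infty = \| Q - Q^\ast \|_\infty.
\]
Applied to $Q_k := (\Acal^k F)^{-1}$ and $Q^\ast := (F^\ast)^{-1}$, this yields $\Wcal_\infty(\Acal^k F, F^\ast) = \Wcal_\infty(\Acal^k F, \Lcal^\ast)$, and the desired linear convergence follows directly from Theorem~\ref{thm:conv}~\ref{it:conv_1}.

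For case~\ref{it:prop.lin.2}, write $\mu = \sum_{i = 1}^n p_i \delta_{x_i}$ with $p_i > 0$ and let $V \subset L^\infty(0,1)$ be the $n$-dimensional subspace of step functions constant on the intervals induced by the jumps of $Q_\mu$. Since $\Gcal Q = S_Q^{-1} \circ Q_\mu \in V$, the orbit (from the first iteration onwards) and $\Lcal^\ast$ both lie in $V$. My plan is to descend to the finite-dimensional quotient $V/{\rm Span}(\mathbf 1)$: by Lemma~\ref{lem:Acal.properties}~\ref{it:A.shift}, $\Gcal$ commutes with translations and induces a well-defined $[\Gcal]$ on the quotient. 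For any $Q \in V$ and any $[h] \neq 0$, choosing $h_0 \in [h]$ realising the quotient norm $\|[h]\|$ (attained in finite dimensions) and applying Proposition~\ref{mainProposition}~\ref{it:mainProposition.4} gives
\[
    \| [D_Q \Gcal] [h] \| \le \| D_Q \Gcal(h_0) \|_\infty < \| h_0 \|_\infty = \|[h]\|.
\]
Compactness of the unit sphere in the finite-dimensional quotient, combined with continuity of $Q \mapsto D_Q \Gcal$ from Proposition~\ref{mainProposition}~\ref{it:mainProposition.3}, then upgrades this pointwise strict inequality to a uniform contraction rate $q < 1$ on the bounded (hence compact) orbit, yielding $d_{k+1} \le q d_k$ with $d_k := \Wcal_\infty(\Acal^k F, \Lcal^\ast)$.

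To lift this to convergence at a specific fixed-point, let $Q^\ast_k \in \Lcal^\ast$ minimise $\| Q_k - Q^\ast_k \|_\infty$, so $\|Q_k - Q^\ast_k\|_\infty = d_k \le q^k d_0$. By non-expansiveness of $\Gcal$ one has $\|Q_{k+1} - Q^\ast_k\|_\infty \le d_k$, whence $\|Q^\ast_{k+1} - Q^\ast_k\|_\infty \le d_k + d_{k+1} \le 2 q^k d_0$, making $(Q^\ast_k)$ a Cauchy sequence in $L^\infty$ whose limit $Q^\infty \in \Lcal^\ast$ is the target, with a geometric-series estimate producing $\|Q_k - Q^\infty\|_\infty \le C q^k$ for an explicit constant $C$. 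I expect the main obstacle to be establishing the \emph{uniform} contraction rate $q < 1$ on the quotient: Proposition~\ref{mainProposition}~\ref{it:mainProposition.4} yields only a pointwise strict inequality, and both the finite-dimensionality of $V/{\rm Span}(\mathbf 1)$ (making its unit sphere compact) and the continuous dependence of $D_Q \Gcal$ on $Q$ are essential in passing from pointwise strictness to uniformity along the orbit.
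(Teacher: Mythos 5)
Your argument for case \ref{it:prop.lin.1} is correct and takes a genuinely different (and arguably cleaner) route than the paper. The paper re-runs the Taylor estimate, using that symmetry forces the \emph{mean} of every iterate to equal that of $F$, and then exploits $\|D_{\tilde Q_k}\Gcal(h)\|_\infty \le (1-\epsilon)\|h\|_\infty + \epsilon\,|\int_0^1 h|$ with $\int_0^1 h = 0$ to get the per-step factor $1-\epsilon$. You instead observe that for two symmetric quantile functions with the same median the difference is antisymmetric about $1/2$, so the distance to the translation-invariant solution space $\Lcal^\ast$ coincides with the distance to the symmetric fixed point, and Theorem~\ref{thm:conv}~\ref{it:conv_1} immediately yields the per-step contraction. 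Both are valid; yours reuses the solution-space contraction as a black box, the paper's gives an explicit rate $1-\epsilon$.

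Case \ref{it:prop.lin.2} has a genuine gap. Everything you prove there is already contained in Theorem~\ref{thm:conv}: your quotient-space construction re-derives the contraction $d_{k+1}\le q\,d_k$ of the distance to $\Lcal^\ast$, which is exactly Theorem~\ref{thm:conv}~\ref{it:conv_1} (so finite-dimensionality buys you nothing here — the obstacle you flag at the end is not the real one), and your Cauchy/telescoping argument on the nearest fixed points $Q_k^\ast$ produces only a bound of the form $\|Q_k - Q^\infty\|_\infty \le C q^k$, i.e.\ R-linear convergence, which is already Theorem~\ref{thm:conv}~\ref{it:conv_2}. The content of Proposition~\ref{prop:lin} — the only thing not already known — is the \emph{per-step} (Q-linear) contraction $\|Q_{k+1}-Q^\ast\|_\infty \le q\,\|Q_k - Q^\ast\|_\infty$ toward the actual limit $Q^\ast$, and your argument does not deliver this because the nearest fixed point $Q_k^\ast$ may drift with $k$. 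The missing ingredient is the paper's ``straddling'' step: writing $Q_k = \sum_j y_j^{(k)}\1_{[p_j,p_{j+1})}$ and $Q^\ast = \sum_j y_j^\ast\1_{[p_j,p_{j+1})}$, one considers the translation $\tau = \tfrac12\bigl(\min_j(y_j^{(k)}-y_j^\ast)+\max_j(y_j^{(k)}-y_j^\ast)\bigr)$ and uses non-expansiveness along the tail of the orbit (namely $|\tau| = \lim_\ell\|y^{(\ell)}-y^\ast-\tau\|_\infty \le \|y^{(k)}-y^\ast-\tau\|_\infty$) to conclude $\min_j(y_j^{(k)}-y_j^\ast)\le 0\le \max_j(y_j^{(k)}-y_j^\ast)$. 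This says the limit is never translationally offset from any iterate; since the density of $D_{\tilde Q_k}\Gcal_u$ assigns mass at least $\epsilon\min_j w_j$ to the block where the sign is favorable, the Taylor identity then gives $\|Q_{k+1}-Q^\ast\|_\infty \le (1-\epsilon\min_j w_j)\|Q_k-Q^\ast\|_\infty$, which is where the finitely many atoms of $\mu$ (through $\min_j w_j>0$) genuinely enter.
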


\begin{proof}
    \ref{it:prop.lin.1}
    Let us assume that $\mu,\nu$ as well as the distribution of $F$ are symmetric.
    We have established in Lemma \ref{lem:Acal.properties} that then the distribution of $\Acal F$ is symmetric with the same median as $F$.
    Since $\Acal F \in \CDFc$, its mean exists and coincides by symmetry with its median.
    To summarize, we found that
    \[
        \int_0^1 Q_k(v) dv = \int_0^1 Q(v) \, dv \quad \text{for all }k \in \NN.
    \]
    Let $Q^\ast \in \Lcal^*$ be such that $\int_0^1 Q^\ast(v)\, dv = \int_0^1 Q(v) \, dv$.
    As in the proof of Theorem \ref{thm:conv}, we find, for every $k \in \NN$, some $t \in (0,1)$ and $\tilde Q_k = (1-t)Q_k + tQ^\ast$ such that \eqref{eq:taylor} and \eqref{eq:densitybounds} hold.
    As in case 2 of the proof of Theorem \ref{thm:conv}, we deduce that
    \[
        \|Q_{k + 1} - Q^\ast\|_\infty \le (1 - \epsilon) \| Q_k^\ast - Q_k \|_\infty + \epsilon \left| \int_0^1 Q_k(v) - Q^\ast(v) \, dv \right| = (1 - \epsilon) \| Q_k^\ast - Q_k \|_\infty.
    \]
    Hence, the rate of convergence of $(Q^k)_{k \in \NN}$ to $Q^\ast$ is at least linear.

    \ref{it:prop.lin.2}: Next, we assume that $\mu$ is concentrated on finitely many points, i.e.,  there are $n \in \NN$, $x_1,\ldots,x_n \in \RR$, and $w_1,\ldots,w_n \in (0,1]$ with $x_1 \le \dots \le x_n$ such that
    $\mu = \sum_{j = 1}^n w_j \delta_{x_j}$.
    Write $p_1 = 0$, $p_j = \sum_{m = 1}^{j-1} w_j$ for $j = 2,\ldots,n$, then we have for $u \in (0,1)$
    \[
        Q_\mu(u) = \sum_{j = 1}^n x_j \mathbbm 1_{[p_j,p_{j + 1})}(u).
    \]
    Because $\Gcal Q = S_Q^{-1} \circ Q_\mu$ (cf.\ Proposition \ref{mainProposition}), we also have that
    \[
        Q_k(u) = \Gcal^k Q(u) = \sum_{j = 1}^n y_j^{(k)} \mathbbm 1_{[p_j,p_{j + 1})}(u),
    \]
    for some $y_1^{(k)} ,\ldots,y_n^{(k)} \in \RR$ with $y_1^{(k)} \le \ldots \le y_n^{(k)}$.
    Similarly, there are $y_1^\ast,\ldots,y_n^\ast \in \RR$ with $y_1^\ast \le \ldots \le y_n^\ast$ such that $Q^\ast = \sum_{j = 1}^n y_j^\ast \mathbbm 1_{[p_j,p_{j + 1})}$ where we write $Q^\ast$ for the limit of $(Q_k)_{k \in \NN}$ in $L^\infty(0,1)$.
    Write $y^{(k)}$ resp.\ $y^\ast$ for the vector $(y^{(k)}_1,\ldots,y^{(k)}_n)$ resp.\ $(y^\ast_1,\ldots,y^\ast_n)$.
    Fix $k \in \NN$ and consider the translation
    \[
        \tau := \frac{\min(y^{(k)} - y^\ast) + \max(y^{(k)} - y^\ast)}2.
    \]

    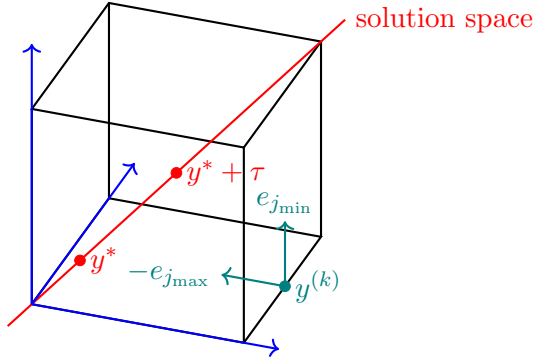
\begin{figure}[H]
     \centering
    \tdplotsetmaincoords{60}{20}
     \begin{tikzpicture}
        [tdplot_main_coords, cube/.style={thick,black},
        solutionSpace/.style={thick,red},
        axis/.style={->,blue,thick},
        vector/.style={->,teal,thick},]
    
        \draw[cube] (0,0,0) -- (0,3,0) -- (3,3,0) -- (3,0,0) -- cycle;
        \draw[cube] (0,0,3) -- (0,3,3) -- (3,3,3) -- (3,0,3) -- cycle;
    
        \draw[cube] (0,0,0) -- (0,0,3);
        \draw[cube] (0,3,0) -- (0,3,3);
        \draw[cube] (3,0,0) -- (3,0,3);
        \draw[cube] (3,3,0) -- (3,3,3);
    
        \draw[solutionSpace] (-0.25,-0.25,-0.25) -- (3.25,3.25,3.25)node[anchor=west]{solution space};
        \draw[axis] (0,0,0) -- (0,0,4) node[anchor=west] {};
        \draw[axis] (0,0,0) -- (0,4,0) node[anchor=east] {};
        \draw[axis] (0,0,0) -- (3.5,0,0) node[anchor=west] {};
    
        \filldraw [red] (1.5,1.5,1.5) circle(2pt) node[anchor=west]{$y^*+\tau$};
        \filldraw [red] (0.5,0.5,0.5) circle(2pt) node[anchor=west]{$y^*$};
    
        \filldraw [teal] (3,1.6,0) circle(2pt) node[anchor=west]{$y^{(k)}$};
        \filldraw [teal] (3,1.6,1.6) circle(0pt) node[anchor=north]{$e_{j_{\min}}$};
        \draw[vector] (3,1.6,0) -- (3,1.6,1) node[anchor=west]{};
        \draw[vector] (3,1.6,0) -- (2.1,1.6,0) node[anchor=east]{$-e_{j_{\max}}$};
    \end{tikzpicture}
    \caption{Geometric visualization of the translation in the proof of Proposition~\ref{prop:lin}\ref{it:prop.lin.2}, where we write $e_i$ for unit vectors in $\RR^n$ as well as $j_{\max} = \arg \max_j(y^{(k)}_j - y^\ast_j)$.}
    \end{figure}
    On the one hand, we have
    \begin{equation}
        \label{eq:tau1}
        \|y^{(k)} - y^\ast - \tau\|_\infty = \frac12 \left( \max(y^{(k)} - y^\ast) - \min(y^{(k)} - y^\ast) \right).        
    \end{equation}
    On the other hand, we can use $\|Q_k - Q^\ast\|_\infty = \|y^{(k)} - y^\ast\|_\infty$ to derive from non-expansiveness of $\Gcal$ that 
    \begin{equation}
        \label{eq:tau2}
        |\tau| = \|y^\ast - y^\ast -\tau\|_\infty = \lim_{\ell \to \infty} \|y^{(\ell)} - y^\ast - \tau \|_\infty \le \|y^{(k)} - y^\ast - \tau \|_\infty.
    \end{equation}
    By plugging \eqref{eq:tau1} into \eqref{eq:tau2}, we find the following inequalities:
    \begin{align*}
        \min_j(y^{(k)}_j - y^\ast_j) \le 0 \text{ and } \max_j(y^{(k)}_j - y^\ast_j) \ge 0.
    \end{align*}
    By Taylor's theorem \cite[Section 5.6]{Ca64} we find $t \in [0,1]$ and $\tilde Q_k := (1-t)Q_k + t Q_\ast$ such that \eqref{eq:taylor} holds.
    Therefore, we have for $\lambda$-almost every $u$ that
    \begin{align*}
        D_{\tilde Q_k} \Gcal_u (Q_k - Q^\ast) 
        &= 
        \sum_{j = 1}^n \int_{p_j}^{p_{j + 1}} D_{\tilde Q_k} \Gcal_u(v) \, dv \, (y^{(k)}_j - y^\ast_j) \\
        &\le \sum_{j = 1, j \neq j_{\min}}^n \int_{p_j}^{p_{j + 1}} D_{\tilde Q_k} \Gcal_u(v) \, dv \| y^{(k)} - y^\ast \|_\infty \\
        &\le (1 - \epsilon \min_j(w_j)) \| y^{(k)} - y^\ast \|_\infty,
    \end{align*}
    where $w = (w_1,\ldots,w_n)$, $j_{\min} = \arg \min_j(y^{(k)}_j - y^\ast_j)$ and $\epsilon$ is some lower bound for the density as in \eqref{eq:densitybounds}.
    For symmetry reasons, we obtain
    \[
        \|y^{(k+1)} - y^\ast\|_\infty = \| Q_{k + 1} - Q^\ast \|_\infty \le \|D_{\tilde Q_k}(Q_k - Q^\ast)\|_\infty \le (1 - \epsilon \min_j(w_j)) \| y^{(k)} - y^\ast \|_\infty,
    \]
    which yields that the convergence rate of $y^k \to y^\ast$ w.r.t.\ $\|\cdot\|_\infty$ is at least linear.
\end{proof}

\begin{remark}
    The precise convergence rates obtained in the proof of Theorem \ref{thm:conv} depend on upper and lower bounds on the density of the considered Fréchet derivatives, in the sense of Remark \ref{rmk:elliptic}.
    We have shown in Proposition \ref{mainProposition} that $Q \mapsto \left( (u,v) \mapsto D_Q \Gcal_u(v) \right)$ is continuous as a function from $L^\infty(0,1)$ to $L^\infty((0,1)^2)$.
    Therefore, asymptotically our obtained rates can be improved by using upper and lower bound of the density of $D_{Q^\ast} \Gcal$, that are
    \[
        \lambda^2\text{-}\essinf_{u,v \in (0,1)} D_{Q^\ast} \Gcal_u(v) \text{ and }
        \lambda^2\text{-}\esssup_{u,v \in (0,1)} D_{Q^\ast} \Gcal_u(v).
    \]
\end{remark}

\section{Computational aspects and numerical results}\label{sect:numerics}
In our numerical experiments we consider irreducible pairs $(\mu, \nu)$ satisfying Assumptions \ref{ass:A1&2}. Specifically, we discretize $\mu$ via quantization, obtaining a distribution $\mu_n=\frac{1}{n}\sum_{i=1}^n \delta_{x_i}$ such that $\mu_n \preceq_c \mu$ for $n \in \NN$, where $x_i = \int_{i/n}^{(i+1)/n} Q_\mu(u)du, i = 0, \dots, n-1$. Subsequently, we fix $n$ for each experiment and compute the standard stretched Brownian motion from $\mu_n$ to $\nu$. Notably, although the pair $(\mu_n, \nu)$ remains irreducible, the pair $(\mu_n, \mu)$ is non-irreducible. This is because $\RR$ can be partitioned into $n$ irreducible components due to the construction of $\mu_n$.

Finally, it is worth mentioning that we opt for the semi-discrete setting. This choice is motivated by the stability of the Martingale Benamou-Brenier Problem (see \cite{BeJoMaPa21b}), and the ease with which the $\infty$-Wasserstein distance between two discrete measures can be computed.

\paragraph{{\color{black} Linear convergence.}}
In our experiments we observe a rapid convergence of the fixed-point scheme, often reaching the fixed-point solution within just a few iterations. The required number of iterations for achieving convergence appears to be influenced by the proximity of our problem pair to a non-irreducible pair, as discussed below.

In particular, Figure~\ref{fig.lin} illustrates the linear convergence of the iteration scheme for the \ssbm  be\-tween $\mu_{50}$ and $\nu$, where $\mu$ is a convex combination of a normal and a logistic distribution, and $\nu$ is a truncated normal. The results indicate convergence in approximately $15$ iterations, regardless of the choice of the initial distribution for the fixed-point iteration. It is worth noting that the computation of the convolution with the Gaussian density $\phi$ plays a pivotal role in achieving this convergence. Therefore, ensuring a well-optimized implementation of this operation is crucial for obtaining optimal results.
\begin{figure}
\centering
\includegraphics[width=0.81\textwidth]{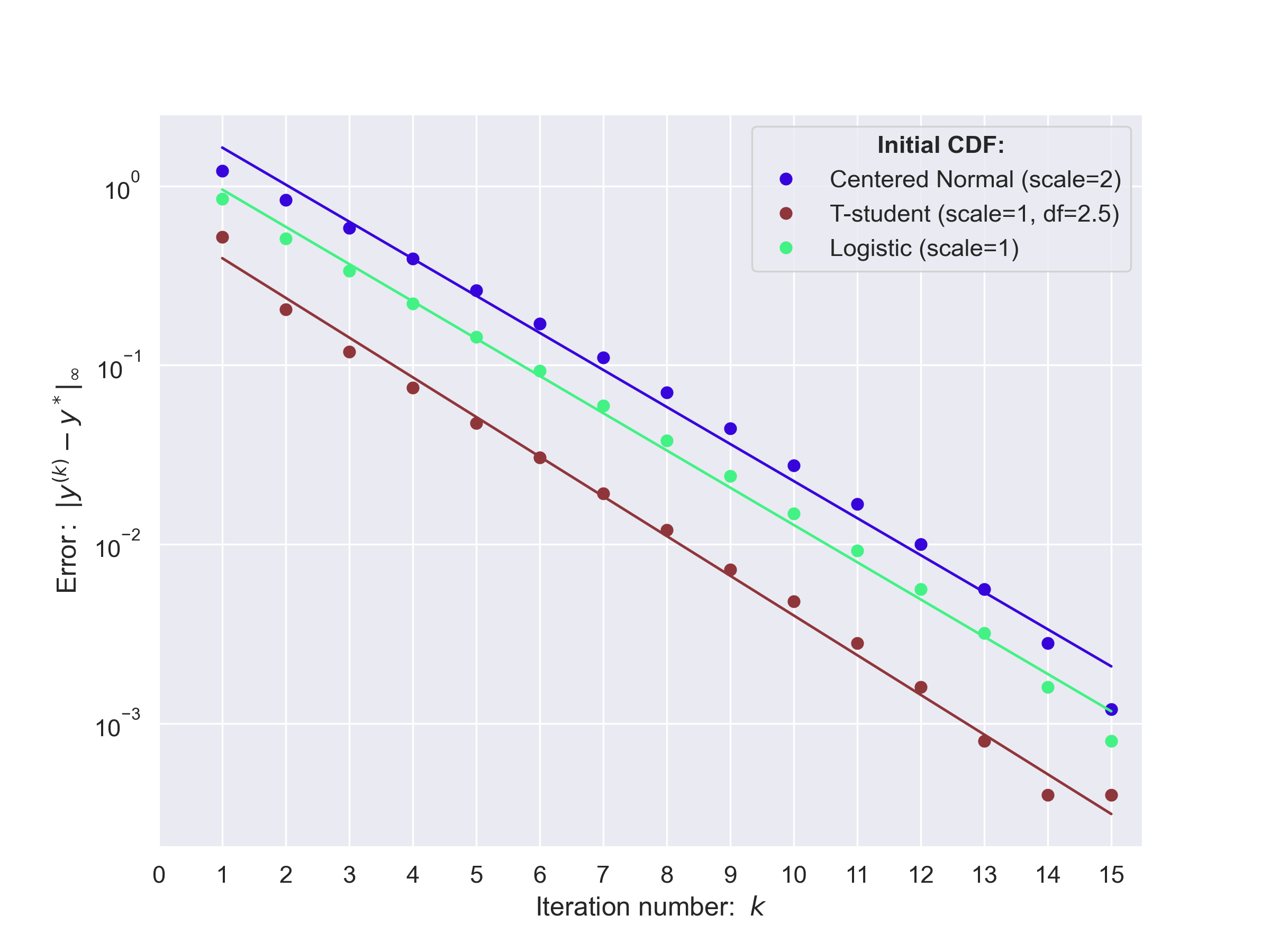}
\caption{Linear convergence in the fixed-point iteration across various initial CDFs.}\label{fig.lin}
\end{figure}

\paragraph{Irreducibility of the pair $(\mu, \nu)$.}
The irreducibility of the pair plays a critical role in determining the convergence of the algorithm. Specifically, when the pair $(\mu_n, \nu)$ is close to a non-irreducible pair, a higher number of iterations is required to reach the fixed-point solution. In Figure~\ref{fig.iter}, we observe the number of iterations needed to compute the \ssbm from $\mu_{10}$ to $\nu$, where both $\mu$ and $\nu$ are truncated normal distributions. Notably, while the standard deviation of $\mu$ is fixed at $1$, the standard deviation of $\nu$ approaches $1$.
Figure~\ref{fig.leb} illustrates the dependency of the size of the support of the fixed-point solution on the standard deviation of $\nu$.
It is noteworthy that consistent results are obtained when the standard deviation of $\nu$ is fixed, and the standard deviation of $\mu$ approaches that of $\nu$.

\begin{figure}
    \centering
    \begin{subfigure}[b]{0.78\textwidth}
        \includegraphics[width=1\linewidth]{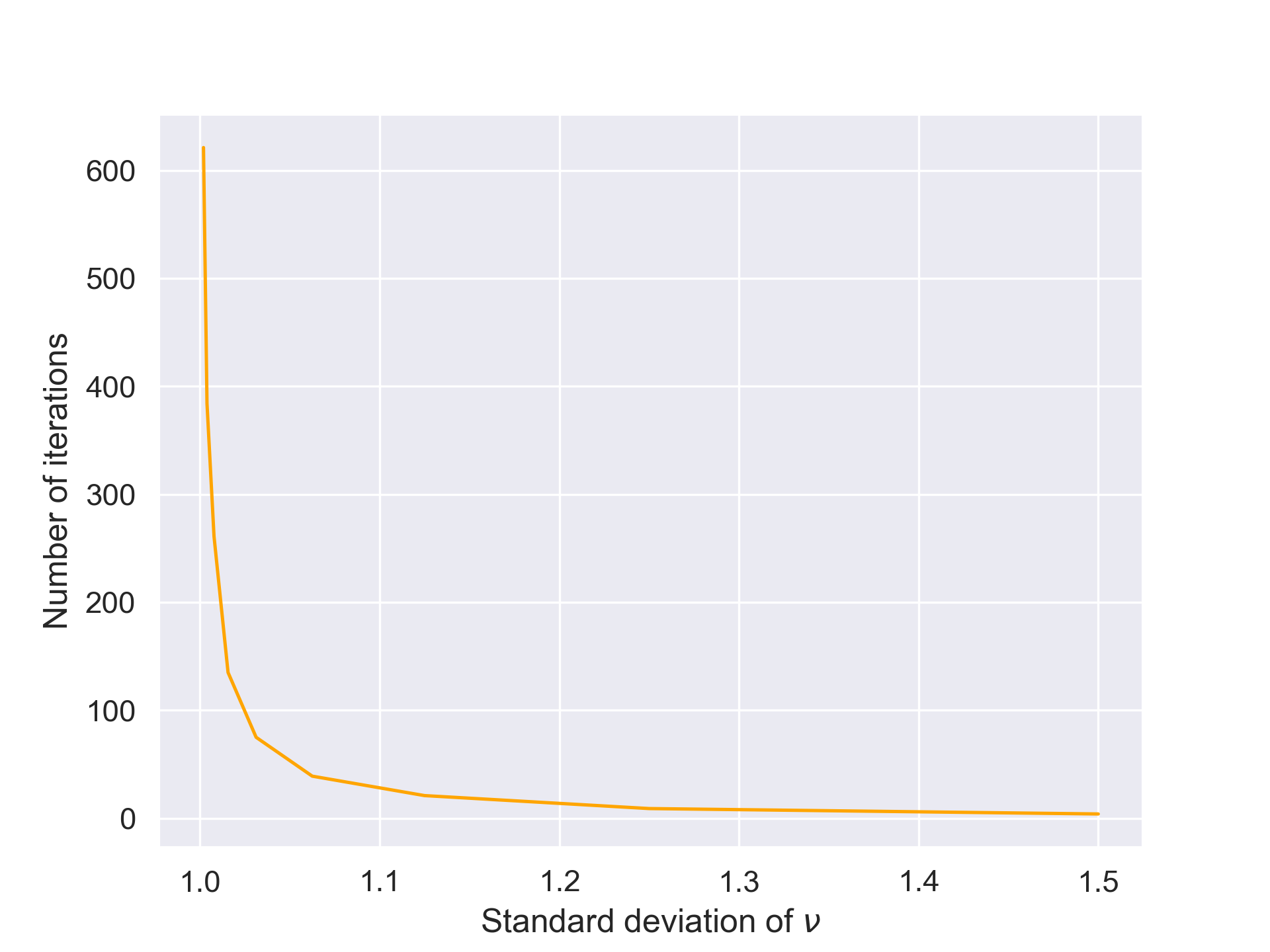}
        \caption{Number of iterations before convergence} \label{fig.iter}
    \end{subfigure}
    \begin{subfigure}[b]{0.78\textwidth}
    \includegraphics[width=1\linewidth]{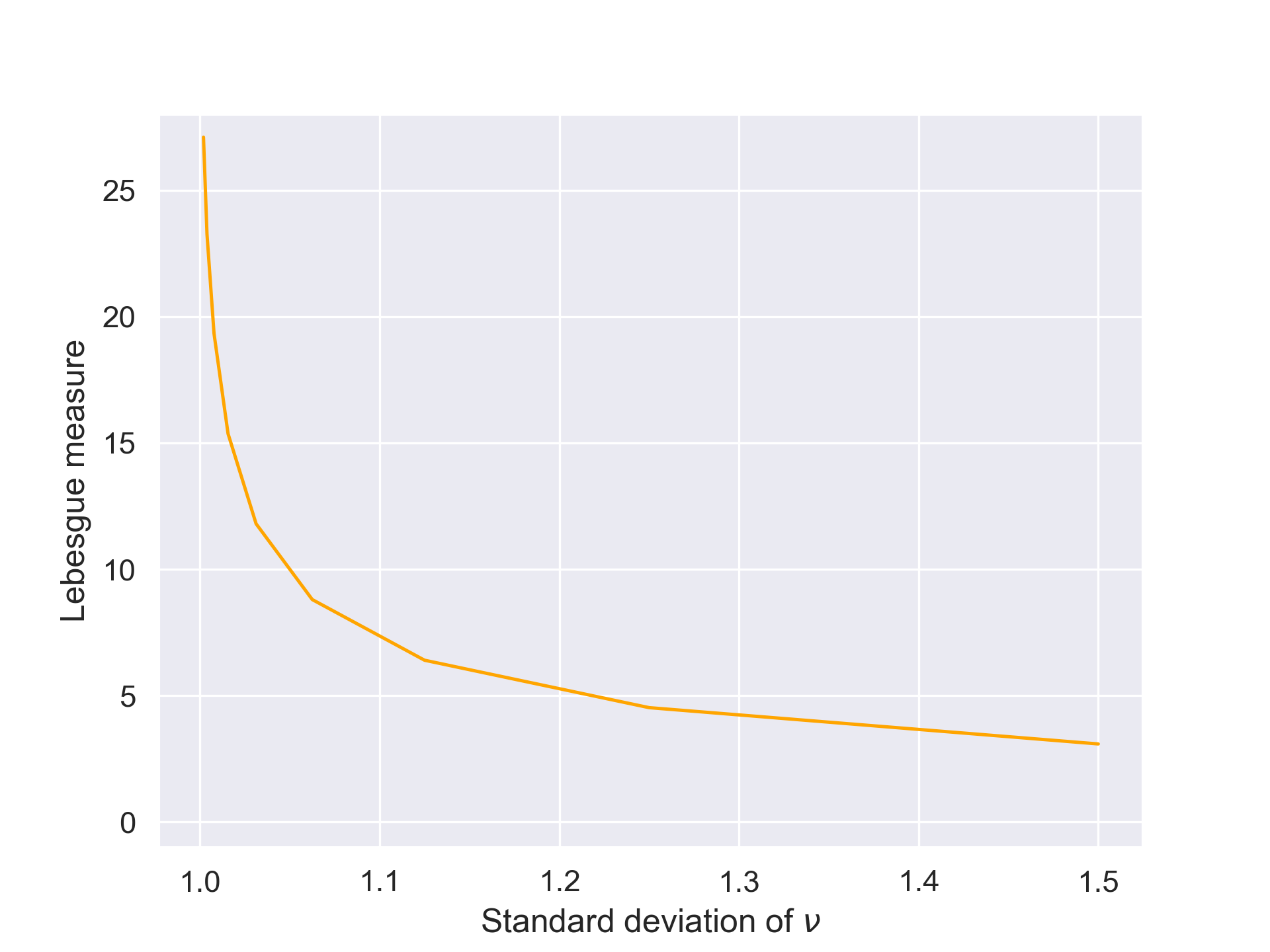}
        \caption{Lebesgue measure of the convex hull of the support of the fixed-point solution} \label{fig.leb}
    \end{subfigure}
    \caption{Computation of the standard stretched Brownian motion from $\mu_{10}$ to $\nu$, where both $\mu$ and $\nu$ are truncated normal distributions with $\mu$ having a fixed standard deviation of $1$.}
    \end{figure}

\paragraph{Semidiscrete system.}
We conclude by noticing that when $\mu$ is concentrated on a finite number of points, the fixed-point equation $\mathcal{A}F=F$ is equivalent to a system of nonlinear equations. This property suggests, therefore, an alternative method for calibrating the model.

\begin{corollary}[Semidiscrete system]
\label{remark_system}
    Let $\mu \in \Pcal(\RR)$ be concentrated on finitely many points, i.e. $\mu=\sum_{j=1}^n w_j \delta_{x_j}$, where $x_1, \dots, x_n \in \RR$, $w_1, \dots, w_n > 0$ and $w_1 + \dots + w_n = 1$. Then solving the fixed-point equation $\Acal F = F$ is equivalent to solving the following system of nonlinear equations
    \begin{equation}
		\label{nonLinearSyst}
			\int_{\mathbb R} Q_\nu \left (\sum_{j=1}^n w_j\Phi\left(y_i-y_j-z\right) \right) \phi(z)dz=x_i\;\;\; \text{for } i \in \{1, \dots n\}.
	\end{equation}
    In particular, $F \in \CDF$ solves the fixed-point equation if and only if it is the CDF of $\sum_{j=1}^n w_j \delta_{y_j}$ and $(y_1, \dots, y_n)$ is the solution to \eqref{nonLinearSyst}.
\end{corollary}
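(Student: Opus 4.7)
The plan is to exploit the representation $\Acal F = F_\mu \circ S_Q$ (with $Q = F^{-1}$) highlighted in Remark~\ref{rem:auxiliary_lemma}, together with the step-function structure that $F_\mu$ inherits from the finitely supported $\mu$. Once the shape of any fixed point is pinned down, the equation $\Acal F = F$ reduces to finitely many scalar equations on the atom locations.

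Without loss of generality assume $x_1 \le \dots \le x_n$. By Lemma~\ref{auxiliary_lemma}\ref{it:auxiliary_lemma.1}, $S_Q$ is smooth and strictly increasing, with range $\interior(\co(\supp(\nu)))$. Assumption~\ref{it:A2} ensures that each $x_j$ lies in this range, so there is a unique $y_j \in \RR$ with $S_Q(y_j) = x_j$, and the $y_j$ inherit the ordering $y_1 \le \dots \le y_n$. Since $F_\mu = \sum_{j=1}^n w_j \I_{[x_j,\infty)}$ is a right-continuous step function with jumps of size $w_j$ at $x_j$, the composition $F_\mu \circ S_Q$ is likewise a right-continuous step function with jumps of size $w_j$ precisely at $y_j$. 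Hence any $F$ with $\Acal F = F$ is necessarily the CDF of $\sum_{j=1}^n w_j \delta_{y_j}$.

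Next I would plug such a discrete $F$ back into the definition of $S_Q$. The integration-by-parts identity \eqref{eq:Acal.innerpart} gives $\phi \ast F(x) = \sum_{j=1}^n w_j \Phi(x - y_j)$, and therefore
\[
    S_Q(y_i) \;=\; \int_\RR Q_\nu\!\left(\sum_{j=1}^n w_j \Phi(y_i - y_j - z)\right) \phi(z) \, dz.
\]
The fixed-point condition $\Acal F = F$, read at the atoms $y_i$, becomes $S_Q(y_i) = x_i$ for every $i$, which is exactly \eqref{nonLinearSyst}. Conversely, any solution $(y_1, \dots, y_n)$ of \eqref{nonLinearSyst} reproduces a fixed point of $\Acal$ through the same computation, since $S_Q(y_j) = x_j$ and strict monotonicity of $S_Q$ then force the atoms of $F_\mu \circ S_Q$ to coincide with those of $F$.

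There is no genuine obstacle: the heart of the argument is the invertibility of $S_Q$ on $\interior(\co(\supp(\nu)))$, which is already packaged in Lemma~\ref{auxiliary_lemma}, together with Assumption~\ref{it:A2} placing the atoms of $\mu$ inside this range. The rest is a direct unfolding of definitions combined with the standard expression of $\phi \ast F$ as a weighted sum of shifted Gaussian CDFs in the discrete case.
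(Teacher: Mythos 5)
Your proof is correct and follows essentially the same route as the paper, whose one-line proof simply invokes Proposition~\ref{mainProposition}~\ref{it:mainProposition.2}: you unpack that identity in its CDF form $\Acal F = F_\mu \circ S_Q$ (equivalently $(\Acal F)^{-1} = S_Q^{-1}\circ Q_\mu$), use strict monotonicity and the range of $S_Q$ from Lemma~\ref{auxiliary_lemma} together with Assumption~\ref{it:A2} to identify any fixed point as a discrete CDF with atoms $y_j = S_Q^{-1}(x_j)$, and then read off the system \eqref{nonLinearSyst} from the explicit discrete form of $\phi\ast F$. No gaps.
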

\begin{proof}
    The proof follows from Proposition \ref{mainProposition} \ref{it:mainProposition.2}.
\end{proof}

\begin{example}
	Let $\mu = \sum_{j=1}^n w_j \delta_{x_j}$ be a symmetric probability distribution, $\nu = \text{Unif}_{[0,1]}$, and $\mu \preceq_c \nu$. Then, as a consequence of Corollary \ref{remark_system}, the unique (up to a shift constant) fixed point of $\Acal$ is the CDF of $\sum_{j=1}^n w_j \delta_{y_j^*}$ such that
	\begin{equation*}
		\sum_{j=1}^n w_j\Phi_2\left(\frac{y_i^*-y_j^*}{\sqrt{2}}\right) =x_i\;\;\; \text{for any } i \in \{1, \dots n\},
	\end{equation*}
 where $\Phi_2$ is the CDF of $\gamma_2$. For instance, let us take $n = 4$. By removing the 4th equation and using the symmetry of $\Phi_2$, we have
	 \begin{equation*}
		\begin{cases}
		w_2\Phi_2\left(\frac{y_1^*-y_2^*}{\sqrt{2}} \right)+w_3\Phi_2\left(\frac{y_1^*-y_3^*}{\sqrt{2}}\right)+w_4\Phi_2\left(\frac{y_1^*-y_4^*}{\sqrt{2}} \right) = 4x_1-w_1/2 \\
		-w_1\Phi_2\left(\frac{y_1^*-y_2^*}{\sqrt{2}} \right)+w_3\Phi_2\left(\frac{y_2^*-y_3^*}{\sqrt{2}}\right) +w_4\Phi_2\left(\frac{y_2^*-y_4^*}{\sqrt{2}} \right) = 4x_2-w_2/2-w_1 \\
		-w_1\Phi_2\left(\frac{y_1^*-y_3^*}{\sqrt{2}} \right)-w_2\Phi_2\left(\frac{y_2^*-y_3^*}{\sqrt{2}}\right) +w_4\Phi_2\left(\frac{y_3^*-y_4^*}{\sqrt{2}} \right) = 4x_3-w_3/2-w_1-w_2\\
		\end{cases}.
	\end{equation*}
	Moreover, we can suppose that $y_4^*=1-y_1^*$, $y_3^*=1-y_2^*$ by using Lemma \ref{lem:Acal.properties} \ref{it:A.symmetry}, and we can fix $y_1=0$ by shift-invariance (Lemma \ref{lem:Acal.properties} \ref{it:A.shift}). Then we get
	\begin{equation*}
		w_2\Phi_2\left(-\frac{y_2^*}{\sqrt{2}}\right)+w_3\Phi_2\left(\frac{y_2^*-1}{\sqrt{2}}\right)= 4x_1-w_1/2-\Phi_2\left(-\frac{1}{\sqrt{2}}\right).
	\end{equation*}
\end{example}
\appendix

\section{Approximation results}

\begin{lemma}
\label{lem:approx.corol}
Let $\mu, \nu \in \mathcal P_1(\mathbb R)$ with $\supp(\mu) \subseteq {\rm int}({\rm co}({\rm supp}(\nu)))$. Then there exists a sequence of pairs $(\mu_n, \nu_n)$, $n \in \NN$ that satisfy Assumption \ref{ass:A1&2} and 
\begin{equation*}
\lim_{n \to \infty} \mathcal W_1(\mu_n,\mu) = 0 = \lim_{n \to \infty} \mathcal W_1(\nu_n,\nu).
\end{equation*}
\end{lemma}
\begin{proof}
    By standard approximation results (for example via quantization) it is well-known that any measure with finite first moment admits an $\mathcal W_1$-approximating sequence by measures concentrated on finitely many points.
Let $(\mu_n)_{n \in \NN}$ (resp. $(\hat \nu_n)_{n \in \NN}$) be such a sequence for $\mu$ (resp. $\nu$). Furthermore, we can assume without loss of generality that $\supp(\mu_n) \subseteq {\rm int}({\rm co}({\rm supp}(\hat \nu_n)))$, for any $n \in \NN$.
Let $R_n := \max\{ |x| : x \in \supp(\hat \nu_n) \}$ and, for $R > 0$, write $\lambda_{R}$ for the uniform distribution on $[-R,R]$.
Then the measure 
\begin{equation}
\label{approxNuDef}
    \nu_n := \frac{1}{n R_n} \lambda_{R_n} + \left( 1 - \frac{1}{n R_n} \right) \left(\hat \nu_n \ast \lambda_{1/n}\right)
\end{equation}
is absolutely continuous w.r.t.\ the Lebesgue measure, with density bounded away from zero on $\co(\supp(\nu_n)) = \supp(\nu_n)$.
Thus, it is straightforward to verify that $\nu_n \to \nu$ in $\Wcal_1$, $\mu_n := \hat \mu_n \to \nu$ in $\Wcal_1$.
\end{proof}

\begin{lemma}[Existence of approximating sequence]
\label{lem:approx.sequence}
Let $\mu, \nu \in \mathcal P_1(\mathbb R)$ with $\mu \preceq_c \nu$. 
Then there exists a sequence of pairs $(\mu_n,\nu_n)$, $n \in \NN$, that are irreducible, satisfy the Assumption \ref{ass:A1&2}, and
\begin{equation}
\label{eq:lem.approx.sequence.assertion}
\lim_{n \to \infty} \mathcal W_1(\mu_n,\mu) = 0 = \lim_{n \to \infty} \mathcal W_1(\nu_n,\nu).
\end{equation}
\end{lemma}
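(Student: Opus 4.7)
The strategy is to construct $(\mu_n,\nu_n)$ through three successive operations on $(\mu,\nu)$---compactification, mollification, and spreading---each preserving convex order with asymptotically vanishing $\Wcal_1$-effect.

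\emph{Step 1: compactification.} Produce $(\bar\mu_n,\bar\nu_n)$ compactly supported, in convex order, with $\Wcal_1(\bar\mu_n,\mu)+\Wcal_1(\bar\nu_n,\nu)\to 0$. A natural route is via a martingale coupling $\pi\in\Mcal(\mu,\nu)$ supplied by Strassen's theorem: modify $\pi$ on the tails so that each fiber $\pi^x$ becomes compactly supported in $[-R_n,R_n]$ while preserving the martingale property $\int y\,\pi^x(dy)=x$. Concretely, first restrict the base variable to $|x|\le R_n$ with mean-preserving redistribution, then for each remaining fiber fuse the tails beyond $[-R_n,R_n]$ into point masses at points chosen to preserve the fiber mean. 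Taking marginals yields $\bar\mu_n\preceq_c\bar\nu_n$, compactly supported, and the $\Pcal_1$-integrability of the tails yields the $\Wcal_1$-approximation.

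\emph{Step 2: mollification.} Convolve both marginals with $U_n:=\mathrm{Unif}_{[-\epsilon_n,\epsilon_n]}$ for $\epsilon_n\downarrow 0$: convolution with a common symmetric kernel preserves $\preceq_c$, keeps the supports compact, and endows each measure with a continuous density on its support. The $\Wcal_1$-cost is at most $\epsilon_n$.

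\emph{Step 3: spreading.} Let $m:=\mathrm{mean}(\mu)=\mathrm{mean}(\nu)$ and pick nested symmetric intervals $I_n\subsetneq J_n$ centered at $m$ with $\supp(\bar\mu_n\ast U_n)\subseteq I_n$, $\supp(\bar\nu_n\ast U_n)\subseteq J_n$ (so $\mathrm{Unif}_{I_n}\preceq_c\mathrm{Unif}_{J_n}$). For $\delta_n\downarrow 0$ set
\[
    \mu_n:=(1-\delta_n)(\bar\mu_n\ast U_n)+\delta_n\,\mathrm{Unif}_{I_n},\qquad
    \nu_n:=(1-\delta_n)(\bar\nu_n\ast U_n)+\delta_n\,\mathrm{Unif}_{J_n}.
\]
Linearity of $\preceq_c$ in each component yields $\mu_n\preceq_c\nu_n$. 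The uniform component on $J_n$ forces the density of $\nu_n$ to be at least $\delta_n/|J_n|$ on $\co(\supp\nu_n)=J_n$, giving \ref{it:A3}; \ref{it:A1} is immediate; $\supp(\mu_n)\subseteq I_n\subsetneq\interior(J_n)$ gives \ref{it:A2}. Irreducibility follows from strict dominance $\int|\cdot-k|\,d\nu_n>\int|\cdot-k|\,d\mu_n$ on $\interior(J_n)$ (witnessed by the uniform contributions alone) via the potential-function criterion of \cite[Appendix A.1]{BeJu16}. A diagonal choice $R_n\uparrow\infty$, $\epsilon_n,\delta_n\downarrow 0$ delivers \eqref{eq:lem.approx.sequence.assertion}.

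The principal obstacle is Step~1: a direct truncation of the marginals alone (e.g., fusion of each tail at its conditional mean) preserves the convex order only within each marginal, not between $\bar\mu_n$ and $\bar\nu_n$, as is easily seen on the potential-function side. Working at the level of the martingale coupling---where the martingale property can be explicitly enforced on every fiber---sidesteps this issue. Once the compactly supported pair is in hand, Steps~2 and 3 are routine convex-order bookkeeping.
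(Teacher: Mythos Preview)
Your three-step scheme is sound in outline and genuinely different from the paper's. The paper quantizes $\nu$ to a finitely-supported $\hat\nu_n$, mollifies and mixes in a uniform to obtain $\nu_n$, and then takes $\hat\mu_n$ to be the Wasserstein projection of $\mu$ onto the convex-order cone $\{\eta:\eta\preceq_c\hat\nu_n\}$, invoking \cite{AlCoJo20,BaBePa19} together with the estimate $\Wcal_1(\mu,\hat\mu_n)\le\Wcal_1(\nu,\hat\nu_n)$ from \cite{JoMaPa23}; a point mass is then mixed in to form $\mu_n$. That projection device delivers the convex-order pair in one line, at the cost of citing a nontrivial external result; your route is more hands-on and, if Step~1 goes through, avoids that dependency. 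Steps~2 and~3 are correct as written.

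The gap is in Step~1. Your recipe ``fuse the tails beyond $[-R_n,R_n]$ into point masses at points chosen to preserve the fiber mean'' need not land inside $[-R_n,R_n]$: if a fiber $\pi^x$ has, say, only a right tail ($p_-=0$, $p_+>0$), preserving the mean forces the fused point to sit at the conditional tail mean $m_+>R_n$, so the resulting second marginal is not compactly supported. Allowing one point per tail does not help either, since $p_+a_++p_-a_-=\int_{|y|>R_n}y\,\pi^x(dy)$ has no solution with $|a_\pm|\le R_n$ whenever the tails are sufficiently unbalanced. A clean repair within your framework: clip each fiber via $y\mapsto(-R_n)\vee y\wedge R_n$, let the fiber mean drift to $\tilde m(x)$, and take the \emph{new} first marginal to be the law of $\tilde m(X)$ under $\mu(\,\cdot\mid |X|\le R_n)$; convex order survives because this is still a martingale pair, both marginals live in $[-R_n,R_n]$, and $\Wcal_1$-convergence follows from $\int|\tilde m(x)-x|\,\mu(dx)\le\int_{|y|>R_n}(|y|-R_n)\,\nu(dy)\to0$. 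Alternatively, bypass couplings and truncate potential functions directly: $u_{\bar\mu_n}:=u_\mu\wedge(|\cdot-m|+c_n)$ and $u_{\bar\nu_n}:=u_\nu\wedge(|\cdot-m|+c_n')$ with $c_n<c_n'\to\infty$ does the job.
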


\begin{proof}
Let us consider the sequence $(\nu_n)_{n \in \NN}$ defined in \eqref{approxNuDef}. We define $\hat \mu_n$ as the so-called Wasserstein projection in the convex order of $\mu$ onto $\{ \eta \in \mathcal P_1(\mathbb R) : \eta \preceq_c \hat \nu_n \}$, see \cite{AlCoJo20, BaBePa19}.
In \cite[Theorem 1.1]{JoMaPa23} the following estimate was established
\[
\Wcal_1(\mu, \hat \mu_n) \le \Wcal_1(\nu, \hat \nu_n),
\]
whence $\hat \mu_n \to \mu$ in $\Wcal_1$.
Finally, we set
\[
    \mu_n := \frac{1}{n R_n} \delta_0 + \left(1 - \frac{1}{n R_n} \right) \hat \mu_n.
\]
To conclude the assertion, it remains to show that $(\mu_n,\nu_n)$ is irreducible.
A simple application of Jensen's inequality reveals that we have, for every $x \in \interior(\supp(\nu))$,
\[
    \int_\RR |x - y| \, \mu_n(dy) < \int_\RR |x - y| \, \nu_n(dy).
\]
Hence, irreducibility follows from \cite[Definition A.3]{BeJu16}.
\end{proof}

\begin{lemma} \label{lem:aux.ToP}
Let $T \colon (0,1) \to \RR$ be measurable and bounded, and $(\mu_k)_{k \in \NN}$ be a sequence in $\mathcal P(\RR)$ that converges in total variation to $\mu$, with $\mu_k \ll \mu$.
Then
\begin{equation*}
T \circ F_{\mu_k} \to T \circ F_\mu \quad \text{in } L^1(\RR; \mu)
\end{equation*}
\end{lemma}

\begin{proof}
Assume without loss of generality that $T$ is bounded by 1.
By Lusin's theorem, for every $\epsilon > 0$ there is a compact set $K \subseteq (0,1)$ with $\lambda(K) \ge 1 - \epsilon$ and $T|_K$ is continuous.
We use Tietze's extension theorem to find a continuous function $T^K \colon \RR \to [-1,1]$ with $T^K|_K = T|_K$.
We compute
\begin{align*}
\int_\RR |T &\circ F_{\mu_k}(x) - T \circ F_\mu(x)| \, d\mu(x) \\
&\le \int_{\RR} |T^K \circ F_{\mu_k}(x) - T^K \circ F_\mu(x)| \, d\mu(x)
+ \int_{Q_{\mu_k}(K^c)} |T^K \circ F_{\mu_k}(x) - T \circ F_{\mu_k}(x)| \, d\mu(x) \\
&\phantom{\le \int_{\RR} |T^K \circ F_{\mu_k}(x) - T^K \circ F_\mu(x)| \, d\mu(x)}\; + \int_{Q_\mu(K^c)} |T^K \circ F_{\mu}(x) - T \circ F_{\mu}(x)| \, d\mu(x).
\end{align*}
Clearly, the first term of the right-hand side vanishes for $k \to \infty$.
We proceed to estimate the last two terms.
Using that $T$ and $T^K$ are bounded by 1, we obtain
\begin{align*}
\limsup_{k \to \infty} \int_\RR | T\circ F_{\mu_k}(x) - T\circ F_\mu(x)| \, d\mu(x) &\le
\limsup_{k \to \infty} 2 \mu(Q_{\mu_k}(K^c)) + {2}\mu(Q_\mu(K^c)) \\
&\le \limsup_{k \to \infty}2 \mu(Q_{\mu_k}(K^c)) + {2}\epsilon.
\end{align*}
Denoting by $f_k$ the density of $\mu_k$ w.r.t. $\mu$, we have
\begin{align*}
\mu(Q_{\mu_k}(K^c)) &\le \int_{Q_{\mu_k}(K^c)} f_k(x) \, d\mu(x) + \int_\RR |1 - f_k(x)| \, d\mu(x) \\
&\le \mu_k(Q_{\mu_k}(K^c)) + TV(\mu,\mu_k) \le \epsilon + TV(\mu,\mu_k).
\end{align*}
Therefore, we find
\[
\limsup_{k \to \infty} \int_\RR | T\circ F_{\mu_k}(x) - T\circ F_\mu(x)| \, d\mu(x) \le 4 \epsilon,
\]
and conclude as $\epsilon > 0$ was arbitrary.
\end{proof}

\bibliographystyle{abbrv}
\bibliography{joint_biblio}

\begin{thebibliography}{10}

\bibitem{AcBePeSc16}
B.~Acciaio, M.~Beiglb{\"o}ck, F.~Penkner, and W.~Schachermayer.
\newblock A model-free version of the fundamental theorem of asset pricing and the super-replication theorem.
\newblock {\em Mathematical Finance}, 26(2):233--251, 2016.

\bibitem{AcBePeScTe13}
B.~Acciaio, M.~Beiglb{\"o}ck, F.~Penkner, W.~Schachermayer, and J.~Temme.
\newblock A trajectorial interpretation of {D}oob's martingale inequalities.
\newblock {\em The Annals of Applied Probability}, 23(4):1494--1505, 2013.

\bibitem{AlCoJo20}
A.~Alfonsi, J.~Corbetta, and B.~Jourdain.
\newblock Sampling of probability measures in the convex order by {W}asserstein projection.
\newblock {\em Annales de l'Institut Henri Poincar{\'e}, Probabilit{\'e}s et Statistiques}, 56(3):1706--1729, 2020.

\bibitem{avellaneda1997calibrating}
M.~Avellaneda, C.~Friedman, R.~Holmes, and D.~Samperi.
\newblock Calibrating volatility surfaces via relative-entropy minimization.
\newblock {\em Applied Mathematical Finance}, 4(1):37--64, 1997.

\bibitem{BaBeHuKa20}
J.~Backhoff-Veraguas, M.~Beiglb\"{o}ck, M.~Huesmann, and S.~K\"{a}llblad.
\newblock Martingale {B}enamou-{B}renier: a probabilistic perspective.
\newblock {\em Ann. Probab.}, 48(5):2258--2289, 2020.

\bibitem{BaBePa19}
J.~Backhoff-Veraguas, M.~Beiglb{\"o}ck, and G.~Pammer.
\newblock Weak monotone rearrangement on the line.
\newblock {\em {Electronic Communications in Probability}}, 25, 2020.

\bibitem{BaBeScTs23}
J.~{Backhoff-Veraguas}, M.~{Beiglb\"ock}, W.~{Schachermayer}, and B.~{Tschiderer}.
\newblock The structure of martingale {B}enamou--{B}renier in multiple dimensions.
\newblock {\em ArXiv e-prints}, 2023.

\bibitem{BaScTs23}
J.~Backhoff-Veraguas, W.~Schachermayer, and B.~Tschiderer.
\newblock The {B}ass functional of martingale transport.
\newblock {\em arXiv preprint arXiv:2309.11181}, 2023.

\bibitem{Ba83}
R.~Bass.
\newblock Skorokhod embedding via stochastic integrals.
\newblock In J.~Az{\'e}ma and M.~Yor, editors, {\em S{\'e}minaire de {Probabilit{\'e}s} {XVII} 1981/82}, number 986 in Lecture {Notes} in {Mathematics}, pages 221--224. Springer, 1983.

\bibitem{BeCoHu17}
M.~Beiglb{\"o}ck, A.~M.~G. Cox, and M.~Huesmann.
\newblock Optimal transport and {S}korokhod embedding.
\newblock {\em Inventiones mathematicae}, 208(2):327--400, May 2017.

\bibitem{BeHePe12}
M.~Beiglb{\"o}ck, P.~{Henry-Labord{\`e}re}, and F.~Penkner.
\newblock Model-independent bounds for option prices: A mass transport approach.
\newblock {\em Finance Stoch.}, 17(3):477--501, 2013.

\bibitem{BeJoMaPa21b}
M.~Beiglb{\"o}ck, B.~Jourdain, W.~Margheriti, and G.~Pammer.
\newblock Monotonicity and stability of the weak martingale optimal transport problem.
\newblock {\em Annals of Applied Probability, to appear}, 2023.

\bibitem{BeJu16}
M.~Beiglb{\"o}ck and N.~Juillet.
\newblock On a problem of optimal transport under marginal martingale constraints.
\newblock {\em Ann. Probab.}, 44(1):42--106, 2016.

\bibitem{BeNuSt19}
M.~Beiglb\"{o}ck, M.~Nutz, and F.~Stebegg.
\newblock Fine properties of the optimal {S}korokhod embedding problem.
\newblock {\em J. Eur. Math. Soc. (JEMS)}, 24(4):1389--1429, 2022.

\bibitem{BePaSc21}
M.~Beiglb{\"o}ck, G.~Pammer, and W.~Schachermayer.
\newblock From {B}achelier to {D}upire via optimal transport.
\newblock {\em Finance and Stochastics}, 26:59--84, 2022.

\bibitem{BeSi13}
M.~{Beiglb{\"o}ck} and P.~Siorpaes.
\newblock Pathwise versions of the {B}urkholder-{D}avis-{G}undy inequality.
\newblock {\em Bernoulli}, 21(1):360--373, 2015.

\bibitem{BB00}
J.-D. Benamou and Y.~Brenier.
\newblock A computational fluid mechanics solution to the {M}onge-{K}antorovich mass transfer problem.
\newblock {\em Numerische Mathematik}, 84(3):375--393, 2000.

\bibitem{Br87}
Y.~Brenier.
\newblock D\'ecomposition polaire et r\'earrangement monotone des champs de vecteurs.
\newblock {\em C. R. Acad. Sci. Paris S\'er. I Math.}, 305(19):805--808, 1987.

\bibitem{CaLaMa14}
L.~{Campi}, I.~{Laachir}, and C.~{Martini}.
\newblock {Change of numeraire in the two-marginals martingale transport problem}.
\newblock {\em Finance Stoch.}, 21(2):471--486, June 2017.

\bibitem{Ca64}
H.~Cartan.
\newblock {\em Differentialrechnung}.
\newblock B.I. Wissenschaftverlag, 1974.

\bibitem{CoHe21}
A.~Conze and P.~Henry-Labordere.
\newblock Bass {Construction} with {Multi}-{Marginals}: {Lightspeed} {Computation} in a {New} {Local} {Volatility} {Model}.
\newblock {\em SSRN Electronic Journal}, 2021.

\bibitem{crepey2003calibration}
S.~Crepey.
\newblock Calibration of the local volatility in a generalized black--scholes model using tikhonov regularization.
\newblock {\em SIAM Journal on Mathematical Analysis}, 34(5):1183--1206, 2003.

\bibitem{de2012convex}
A.~De~Cezaro, O.~Scherzer, and J.~Zubelli.
\newblock Convex regularization of local volatility models from option prices: Convergence analysis and rates.
\newblock {\em Nonlinear Analysis: Theory, Methods \& Applications}, 75(4):2398--2415, 2012.

\bibitem{DeTo17}
H.~De~March and N.~Touzi.
\newblock Irreducible convex paving for decomposition of multidimensional martingale transport plans.
\newblock {\em Ann. Probab.}, 47(3):1726--1774, 2019.

\bibitem{DoSo12}
Y.~Dolinsky and H.~M. Soner.
\newblock Martingale optimal transport and robust hedging in continuous time.
\newblock {\em Probab. Theory Relat. Fields}, 160(1-2):391--427, 2014.

\bibitem{DoSo14}
Y.~{Dolinsky} and H.~M. {Soner}.
\newblock {M}artingale optimal transport in the {S}korokhod space.
\newblock {\em Stochastic {P}rocesses and their {A}pplications}, 125(10):3893--3931, 2015.

\bibitem{Du94}
B.~Dupire.
\newblock Pricing with a smile.
\newblock {\em Risk}, 7(1):18--20, 1994.

\bibitem{FrGr13}
B.~E. Fristedt and L.~F. Gray.
\newblock {\em A modern approach to probability theory}.
\newblock Springer Science \& Business Media, 2013.

\bibitem{GaHeTo13}
A.~Galichon, P.~{Henry-Labord{\`e}re}, and N.~Touzi.
\newblock A stochastic control approach to no-arbitrage bounds given marginals, with an application to lookback options.
\newblock {\em Ann. Appl. Probab.}, 24(1):312--336, 2014.

\bibitem{GuLoWa19}
I.~Guo, G.~Loeper, and S.~Wang.
\newblock Local volatility calibration by optimal transport.
\newblock In {\em 2017 MATRIX Annals}, pages 51--64. Springer, 2019.

\bibitem{henry2009calibration}
P.~Henry-Labord{\`e}re.
\newblock Calibration of local stochastic volatility models to market smiles: A monte-carlo approach.
\newblock {\em Risk Magazine, September}, 2009.

\bibitem{He17}
P.~Henry-Labord\`ere.
\newblock {\em Model-free Hedging: A Martingale Optimal Transport Viewpoint}.
\newblock Chapman and Hall/CRC, 2017.

\bibitem{HeObSpTo16}
P.~Henry-Labord\`ere, J.~Ob\l\'oj, P.~Spoida, and N.~Touzi.
\newblock The maximum maximum of a martingale with given {$n$} marginals.
\newblock {\em Ann. Appl. Probab.}, 26(1):1--44, 2016.

\bibitem{HeTaTo16}
P.~Henry-Labord\`ere, X.~Tan, and N.~Touzi.
\newblock An explicit martingale version of the one-dimensional {B}renier's theorem with full marginals constraint.
\newblock {\em Stochastic Process. Appl.}, 126(9):2800--2834, 2016.

\bibitem{HoNe12}
D.~Hobson and A.~Neuberger.
\newblock Robust bounds for forward start options.
\newblock {\em Math. Finance}, 22(1):31--56, 2012.

\bibitem{HuTr17}
M.~Huesmann and D.~Trevisan.
\newblock A {B}enamou-{B}renier formulation of martingale optimal transport.
\newblock {\em Bernoulli}, 25(4A):2729--2757, 2019.

\bibitem{JoKiOt98}
R.~Jordan, D.~Kinderlehrer, and F.~Otto.
\newblock The variational formulation of the {F}okker-{P}lanck equation.
\newblock {\em SIAM J. Math. Anal.}, 29(1):1--17, 1998.

\bibitem{JoLoOb24}
B.~Joseph, G.~Loeper, and J.~Obloj.
\newblock {The Measure Preserving Martingale Sinkhorn Algorithm}, 2024.

\bibitem{JoMaPa23}
B.~Jourdain, W.~Margheriti, and G.~Pammer.
\newblock Lipschitz continuity of the wasserstein projections in the convex order on the line.
\newblock {\em Electronic Communications in Probability}, 28:1--13, 2023.

\bibitem{KaTaTo15}
S.~K\"{a}llblad, X.~Tan, and N.~Touzi.
\newblock Optimal {S}korokhod embedding given full marginals and {A}z\'{e}ma-{Y}or peacocks.
\newblock {\em Ann. Appl. Probab.}, 27(2):686--719, 2017.

\bibitem{Kant42}
L.~Kantorovich.
\newblock On the translocation of masses.
\newblock {\em C.~R.~(Doklady) Acad.~Sci.~URSS (N.S.)}, 37:199--201, 1942.

\bibitem{Lo18}
G.~Loeper.
\newblock Option pricing with linear market impact and nonlinear {B}lack-{S}choles equations.
\newblock {\em The Annals of Applied Probability}, 28(5):2664--2726, 2018.

\bibitem{Lo08b}
G.~Lowther.
\newblock Fitting martingales to given marginals.
\newblock {\em arXiv:0808.2319 [math]}, Aug. 2008.

\bibitem{Lo08c}
G.~Lowther.
\newblock {Nondifferentiable functions of one-dimensional semimartingales}.
\newblock {\em Ann. Probab.}, 38(1):76 -- 101, 2010.

\bibitem{Mc95}
R.~McCann.
\newblock Existence and uniqueness of monotone measure-preserving maps.
\newblock {\em Duke Math. J.}, 80(2):309--323, 1995.

\bibitem{MeZh84}
P.~A. Meyer and W.~A. Zheng.
\newblock Tightness criteria for laws of semimartingales.
\newblock {\em Annales de l'I.H.P. Probabilit{\'e}s et statistiques}, 20(4):353--372, 1984.

\bibitem{Monge}
G.~Monge.
\newblock M\'emoire sur la th\'eorie des d\'eblais et des remblais.
\newblock {\em Histoire de l'{a}cad\'emie {R}oyale des {S}ciences de {P}aris}, 1781.

\bibitem{NuWiZh22}
M.~Nutz, J.~Wiesel, and L.~Zhao.
\newblock Martingale {S}chr{\"o}dinger bridges and optimal semistatic portfolios.
\newblock {\em Finance and Stochastics}, 27(1):233--254, Jan 2023.

\bibitem{ObSi17}
J.~Ob{\l}{\'o}j and P.~Siorpaes.
\newblock {Structure of martingale transports in finite dimensions}.
\newblock arXiv:1702.08433, 2017.

\bibitem{PaRoSc22}
G.~Pammer, B.~A. Robinson, and W.~Schachermayer.
\newblock A regularized {K}ellerer theorem in arbitrary dimension.
\newblock {\em arXiv:2210.13847 [math]}, Oct. 2022.

\bibitem{St65}
V.~Strassen.
\newblock The existence of probability measures with given marginals.
\newblock {\em Ann. Math. Statist.}, 36:423--439, 1965.

\bibitem{TaTo13}
X.~Tan and N.~Touzi.
\newblock Optimal transportation under controlled stochastic dynamics.
\newblock {\em Ann. Probab.}, 41(5):3201--3240, 2013.

\bibitem{Vi09}
C.~Villani.
\newblock {\em Optimal Transport. Old and New}, volume 338 of {\em Grundlehren der mathematischen Wissenschaften}.
\newblock Springer, 2009.

\end{thebibliography}
\end{document}